\def\wideubar{\underaccent{{\cc@style\underline{\mskip10mu}}}}
\def\Wideubar{\underaccent{{\cc@style\underline{\mskip8mu}}}}
\def\widebar{\accentset{{\cc@style\underline{\mskip10mu}}}}
\def\Widebar{\accentset{{\cc@style\underline{\mskip8mu}}}}
\newtheorem{theorem}{Theorem}[section]
\newtheorem{corollary}{Corollary}[section]
\newtheorem{proposition}{Proposition}[section]
\definecolor{DarkGreen}{rgb}{0.2,0.6,0.2}
\newenvironment{proof}[1][Proof]{\noindent\textbf{#1.} }{\ \rule{0.5em}{0.5em}}
\numberwithin{equation}{section}
\numberwithin{equation}{section}
\newcommand{\be}{\begin{equation}}
\newcommand{\ee}{\end{equation}}
\newcommand{\bq}{\begin{eqnarray}}
\newcommand{\eq}{\end{eqnarray}}
\newcommand{\essinf}{\displaystyle\mathop{\mathrm{ess\,inf}}\displaylimits}
\begin{document}
\title{An FBSDE approach to market impact games\\ with stochastic parameters}

\author{Samuel Drapeau\thanks{SAIF/CAFR/CMAR and School of Mathematical Sciences, Shanghai Jiao Tong University, China; Email: sdrapeau@saif.sjtu.edu.cn} \and Peng Luo\thanks{Department of Statistics and Actuarial Sciences, University of Waterloo, Canada; Email: peng.luo@uwaterloo.ca} \and Alexander Schied\thanks{Department of Statistics and Actuarial Sciences, University of Waterloo, Canada; Email: aschied@uwaterloo.ca} \and Dewen Xiong\thanks{School of Mathematical Sciences, Shanghai Jiao Tong University, China; Email: xiongdewen@sjtu.edu.cn}}

\maketitle
\begin{abstract}
We analyze a market impact game between  $n$ risk averse agents who compete for liquidity in a market impact model with permanent price impact and additional slippage. Most market parameters, including volatility and drift, are allowed to vary stochastically. Our first main result characterizes the Nash equilibrium in terms of a fully coupled system of forward-backward stochastic differential equations (FBSDEs). Our second main result provides conditions under which this system of FBSDEs has indeed a unique solution, which in turn yields the unique Nash equilibrium. We furthermore obtain closed-form solutions in special situations and analyze them numerically. \end{abstract}

\section{Introduction}

Market impact games analyze situations in which several agents compete for liquidity in a market impact model or try to exploit the price impact generated by competitors. In this paper, we follow Carlin et al. \cite{Carlinetal}, Sch\"{o}neborn and Schied \cite{SchoenebornSchied}, Carmona and Yang \cite{CarmonaYang}, Schied and Zhang \cite{SchiedZhangCARA}, Casgrain and Jaimungal \cite{CasgrainJaimungal}, and others by analyzing a market impact game in the context of the Almgren--Chriss market impact model. In \cite{Carlinetal,SchoenebornSchied}, all agents are risk-neutral and market parameters are constant, which leads to deterministic Nash equilibria. Deterministic open-loop equilibrium strategies are also obtained in \cite{SchiedZhangCARA}, where agents maximize mean variance functionals or CARA utility.  In \cite{CarmonaYang} closed-loop equilibria are studied numerically in a similar setup, and it is found by means of simulations that then equilibrium strategies may no longer be deterministic. The approach in  \cite{CasgrainJaimungal} is the closest to ours. There, the authors analyze the infinite-agent, mean-field limit of a market impact game for heterogeneous, risk-averse agents in a model with constant coefficients and partial information, and they characterize the mean-field game through a forward-backward stochastic differential equation (FBSDE). In addition, there are several papers that study market impact games in other price impact models, including models with linear transient price impact; see, e.g., \cite{Moallemietal,SchiedZhangHotPotato,SchiedStrehleZhang, LuoSchied}.

Our contribution to this literature is twofold.
First, on the mathematical side, we completely solve the problem of determining an open-loop Nash equilibrium with stochastic model parameters and risk aversion for arbitrary numbers of agents. Our solution relies on a  characterization of  the equilibrium strategies in terms of a fully coupled systems of forward-backward stochastic differential equations (FBSDEs). This characterization is given in Theorem \ref{thm 4.1}. In the subsequent Theorem \ref{thm 4.2}, we give sufficient conditions that guarantee the existence of a unique solution. The main restriction is a lower bound on the volatility. Then we analyze the case of constant coefficients and the case in which all agents share the same parameters but have different initial inventories. Numerical simulations are provided for the case of constant coefficients which work for many agents.

Our second contribution consists in a modification of the traditional setup of the interaction term in a market impact game with Almgren--Chriss-style price impact. The Almgren--Chriss model has two price impact components, one permanent and one temporary. It is clear that permanent price impact must affect the execution prices of all agents equally, and in \cite{Carlinetal,SchoenebornSchied,CarmonaYang,SchiedZhangCARA,CasgrainJaimungal} the same is assumed of the transient price impact. This assumption can sometimes lead to counterintuitive results. For instance, if the temporary price impact is large in comparison with the permanent price impact, then, in the presence of a large seller, it can be beneficial to build up a long position in the stock, because a cessation of the trading activities of the large seller will lead to an immediate upwards jump of the expected price \cite{SchoenebornSchied}. In the price impact literature, it is however not consensus that  \lq\lq temporary price impact" is of the same nature as permanent price impact. For instance,  Almgren et al.~\cite{AlmgrenHauptmanLi} write about temporary  impact:
\begin{quote}
This expression is a continuous-time approximation to a discrete process. A more accurate description would be to imagine that time is broken into intervals such as, say, one hour or one half-hour. Within each interval, the average price we realise on our trades during that interval will be slightly less favorable than the average price that an unbiased observer would measure during that time interval. The unbiased price is affected on previous trades that we have executed before this interval (as well as volatility), but not on their timing. The additional concession during this time interval is strongly dependent on the number of shares that we execute in this interval.
\end{quote}
Likewise, Gatheral \cite[p. 751]{Gatheral} writes:
\begin{quote}
The second component of the cost of trading corresponds to market frictions such as effective bid-ask spread that affect only our execution price: We refer to this component of trading cost as \emph{slippage} (\emph{temporary impact} in the terminology of Huberman and Stanzl).
\end{quote}
Based on these interpretations of \lq\lq temporary price impact" as slippage, it appears to be more natural that only the trades of the executing agent and not the trades of the other market participants are affected by the resulting cost. In our paper, we therefore keep a term for \lq\lq temporary price impact", but it only affects the execution costs of the corresponding agent and not of the other agents.

The paper is organized as follows. In Section 2, we set up our model on portfolio liquidation in the Almgren-Chriss framework. Single agent optimization is studied in Section 3, where the corresponding existence, uniqueness and characterization results for the optimal liquidation strategy are stated. Section 4 is dedicated to present the characterization result for Nash equilibrium and investigates the solvability of the characterizing FBSDE. Some explicit solutions for Nash equilibria are analyzed in Section 5.

\section{Preliminaries and problem formulation}

\subsection{Frequently used notation}

Let $W=(W_t)_{t\geq 0}$ be a $d$-dimensional Brownian motion on a probability space $(\Omega, {\cal F}, P)$ and denote by $(\mathcal{F}_t)_{t\geq 0}$  the complete filtration generated by $W$. Throughout, we fix a finite time horizon $T>0$. We endow $\Omega \times [0,T]$ with the predictable $\sigma$-algebra $\mathcal{P}$ and $\mathbb{R}^n$ with its Borel $\sigma$-algebra $\mathcal{B}(\mathbb{R}^n)$. Equalities and inequalities between random variables and processes are understood in the $P$-a.s.~and $P\otimes dt$-a.e.~sense, respectively. The Euclidean norm is denoted by $|\cdot|$.  For $m\in[1,\infty]$ and $k\in\mathbb N$, we denote by  $\|\cdot\|_m$ denotes the $L^m$-norm, by
 $\mathcal{S}^m(\mathbb{R}^k)$ the set of $k$-dimensional continuous adapted processes $Y$ on $[0,T]$ such that
\begin{equation*}
\|Y\|_{\mathcal{S}^m(\mathbb{R}^k)}:=\left\|\sup_{0\leq t\leq T} |Y_t|^m\right\|_m< \infty,\end{equation*}
and by  $\mathcal{H}^m(\mathbb{R}^{k})$ the set of predictable $\mathbb{R}^{k}$-valued processes $Z$ such that
    \begin{equation*}
    \|Z\|_{\mathcal{H}^m(\mathbb{R}^{k})}=\left\|\left(\int_0^T|Z_s|^2ds\right)^{\frac{1}{2}}\right\|_m<\infty.
    \end{equation*}
The space $\text{\rm BMO}(\mathbb{R}^{k})$ consists of all predictable $\mathbb{R}^{k}$-valued processes $Z$ such that
    \begin{equation*}
    \|Z\|_{\text{\rm BMO}(\mathbb{R}^{k})}=\sup_{\tau\in\mathcal{T}}\Bigg\|E\left[\left(\int_{\tau}^T|Z_s|^2ds\right)^{\frac{1}{2}}\Bigg|\mathcal{F}_{\tau}\right]\Bigg\|_{\infty}<\infty
    \end{equation*}
where $\mathcal{T}$ is the set of all stopping times with values in $[0,T]$.

\subsection{Model  setup}\label{model setup section}

We consider $n$ financial agents who are active in a  financial market of Almgren--Chriss-type and whose trading strategies interact via permanent price impact. More precisely, we adapt the continuous-time setting of \cite{Almgren}, where each agent $i$ has initial inventory $Q_0^i$ at time $t=0$ and subsequently uses a trading strategy whose trading rate is given by a process $q^i\in\mathcal{H}^2(\mathbb{R})$. That is, at time $t\in[0,T]$, the inventory of agent $i$ is given by
\begin{equation*}
Q^{q^i}_t=Q^i_0+\int_0^tq^i_sds.
\end{equation*}
This trading strategy impacts the price of the risky asset by means of permanent price impact. It is usually assumed that this permanent price impact is linear in the traded inventory (see, e.g., the discussion in Section 3 of \cite{GatheralSchiedSurvey}). Thus, we assume that the price at which shares of the risky assets can be traded at time $t$ is given by
\begin{equation}\label{asset price process eq}
S^q_t=S_0+\int_0^t\mu_s\,ds+a\sum_{i=1}^n\int_0^tq^i_s\,ds+\int_0^t\sigma_s\,dW_s,
\end{equation}
where $\mu\in\mathcal{S}^{\infty}(\mathbb{R})$ is a generic drift, $\sigma\in\mathcal{S}^{\infty}(\mathbb{R}^{d})$ is a volatility process, and, for a fixed price impact parameter $a>0$,  the term $a\sum_{i=1}^n\int_0^tq^i_s\,ds$ describes the cumulative price impact generated by the strategies of all agents.

At time $t$, the $i^{\text{th}}$ agent sells $-q^i_t\,dt$ shares at price $S^q_t$. The implementation shortfall, i.e., the difference between book value and liquidation proceeds, is therefore given by $Q^i_0S_0-Q^{q^i}_TS^q_T+\int_0^Tq^i_tS^q_t\,dt$.
In addition, the trading strategy $q^i$ generates \lq\lq slippage", including transaction costs, instantaneous price impact effects etc., modeled by the cost functional $b\int_0^T(q^i_t)^2\,dt$; see, e.g., \cite{Almgren} and the discussion in the introduction. Moreover, any inventory held at time $t>0$ gives rise to financial risk. We assume that this risk is measured by the expectation of the term
\begin{equation}\label{risk term eq}
\alpha_i \left(Q^{q^i}_T\right)^2+\int_0^T\lambda_i\sigma^2_t\left(Q^{q^i}_t\right)^2dt
\end{equation}
where $\alpha_i$ and $\lambda_i$ are nonnegative  constants. The first term in \eqref{risk term eq}
 is clearly a penalty term penalizing any inventory that is still present at time $T$. As shown by Sch\"oneborn \cite{SchoenebornTimeInconsistent,Schoeneborn}, the expectation of the integral term in \eqref{risk term eq} can be regarded as  a continuously re-optimized variance functional with infinitesimal time horizon; see also  \cite{AlmgrenSIFIN,Forsythetal,Tseetal} for related motivations of this risk term.
It follows that the objective of agent $i$ is to minimize the expectation of following cost functional over strategies $q^i\in\mathcal{H}^2(\mathbb{R})$,
  \begin{equation}\label{cost-risk functional eq}
         C^i_T(Q^i_0,q^i,q^{-i})=Q^i_0S^q_0 +\int_0^Tq^i_t\left(S^q_t+b q^i_t\right)dt-Q^{q^i}_TS^q_T+\alpha_i \left(Q^{q^i}_T\right)^2+\int_0^T\lambda_i\sigma^2_t\left(Q^{q^i}_t\right)^2dt;
         \end{equation}
         here,  $q^{-i}:=(q^1,\ldots,q^{i-1},q^{i+1},\ldots,q^{n})$ denotes the collection of the strategies of all other agents.

   Our goal in this paper is to discuss the existence, uniqueness and structure of Nash equilibria for the cost criterion described above. As usual, a collection   $q^*=(q^{1*},\ldots,q^{n*})\in\mathcal{H}^{2}(\mathbb{R}^n)$ of strategies will be called a \emph{Nash equilibrium} if,  for $i=1,\ldots,n$,
         \begin{equation*}
      \min_{q^i\in\mathcal{H}^2(\mathbb{R})}E\left[C^i_T(Q^i_0,q^i,q^{-i*})\right]=E\left[C^i_T(Q^i_0,q^{i*},q^{-i*})\right].
      \end{equation*}

\section{Single-agent optimization}

In preparation for the discussion of Nash equilibria defined at the end of  Section \ref{model setup section}, we analyze first the optimization problem for a fixed agent $i$ when the strategies of all other agents are fixed. A variety of methods has been used to solve similar and related problems; see, e.g., \cite{AlmgrenSIFIN,Forsythetal,Tseetal,SchiedFuel,GraeweHorstQui,AnkirchnerJeanblancKruse}. Here, our goal is to represent solutions in terms of a BSDE in Theorem \ref{Thm1}.

First, plugging formula \eqref{asset price process eq}
 for $S^q$ into our expression \eqref{cost-risk functional eq}
 of the cost-risk functional $C^i_T(Q^i_0,q^i,q^{-i})$ and integrating by parts, we obtain the alternative expression
   \begin{align*}
   C^i_T(Q^i_0,q^i,q^{-i})=&\frac{a}{2}\left(Q^i_0\right)^2-\int_0^TQ^{q^i}_t\left(\mu_t +a\sum_{j\neq i}q^j_t\right)dt -\int_0^TQ^{q^i}_t\sigma_tdW_t\\
   &+\int_0^T b \left(q^i_t\right)^2dt+\left(\alpha_i-\frac{a}{2}\right) \left(Q^{q^i}_T\right)^2+\int_0^T\lambda_i\sigma^2_t\left(Q^{q^i}_t\right)^2dt.
   \end{align*}
Since, by assumption,  $\sigma\in\mathcal{S}^{\infty}(\mathbb{R}^{d})$ and $q^i\in\mathcal{H}^2(\mathbb{R})$, the stochastic integral $\int_0^TQ^{q^i}_t\sigma_tdW_t$ is a true martingale, and so taking expectations yields
   \begin{align*}
   E\left[C^i_T(Q^i_0,q^i,q^{-i})\right]
   &=\frac{a}{2}\left(Q^i_0\right)^2-E\left[\int_0^TQ^{q^i}_t\left(\mu_t +a\sum_{j=1}^nq^j_t\right)dt\right] +E\left[\int_0^T b \left(q^i_t\right)^2d t\right]\\
   &\quad +\left(\alpha_i-\frac{a}{2}\right)E\left[\left(Q^{q^i}_T\right)^2\right] +E\left[\int_0^T\lambda_i\sigma^2_t\left(Q^{q^i}_t\right)^2dt\right].
   \end{align*}
In the following, we will denote $\beta_i=\alpha_i-\frac{a}{2}$. Fixing $0\leq t\leq T$, let
\begin{equation*}
Q^{\bar{q}^i}_{t,s}:=Q^{q_i}_t+\int_t^s\bar{q}^i_udu,\quad \text{for }t\leq s\leq T,
\end{equation*}
and $C^i_{t,T}(Q^{q^i}_t,\bar{q}^i,q^{-i})$ be the total cost on $[t,T]$ if, at time $t$,  agent $i$ starts using the strategy  $\bar{q}^i$ with the inventory $Q^{q^i}_t$, i.e.,
   \begin{align*}
   C^i_{t,T}(Q^{q^i}_t,\bar{q}^i,q^{-i})
   =&\frac{a}{2}\left(Q^{q^i}_t\right)^2-\int_t^TQ^{\bar{q}^i}_{t,u}\left(\mu_u +a\sum_{j\neq i}q^j_u\right) du +\int_t^T\lambda_i\sigma^2_u\left(Q^{\bar{q}^i}_{t,u}\right)^2du\\
     &+\int_t^T b \left(\bar{q}^i_u\right)^2du-\int_t^TQ^{\bar{q}^i}_{t,u}\sigma_udW_u+\beta_i\left(Q^{\bar{q}^i}_{t,T}\right)^2.
   \end{align*}
Let
   \begin{align*}
   \Phi^i_t\left(Q^{q^i}_t\right):&=\essinf_{\bar{q}^i\in\mathcal{H}^2(\mathbb{R})} E\left[C^i_{t,T}\left(Q^{q^i}_t,\bar{q}^i,q^{-i}\right)\Big|\mathcal{F}_t\right]\\
   &=\frac{a}{2}\left(Q^{q^i}_t\right)^2+\essinf_{\bar{q}^i\in\mathcal{H}^2\left(\mathbb{R}\right)}E\left[\int_t^T\left(-Q^{\bar{q}^i}_{t,u}\left(\mu_u +a\sum_{j\neq i}q^j_u-\lambda_i\sigma^2_uQ^{\bar{q}^i}_{t,u}\right)+b \left(\bar{q}^i_u\right)^2\right)du+\beta_i\left(Q^{\bar{q}^i}_{t,T}\right)^2\Bigg|\mathcal{F}_t\right].
   \end{align*}

Our next goal is to obtain a representation of
   \begin{equation*}
    \hat{\Phi}^i_t\left(Q^{q^i}_t\right):=\Phi^i_t\left(Q^{q^i}_t\right)-\frac{a}{2}\left(Q^{q^i}_t\right)^2.
   \end{equation*}
   in terms of component $(A^i,B^i,C^i)$ of a solution of a three-dimensional BSDE, which will be discussed in the following proposition.

\begin{proposition}\label{Thm1}
Suppose that $\beta_i\geq 0$ and $q^j\in\mathcal{H}^2(\mathbb{R}),~j\neq i$, then the following BSDE
\begin{equation*}
\begin{cases}
 &A^i_t=\beta_i-\int_t^T\left(\frac{1}{b} \left(A^i_s\right)^2-\lambda_i\sigma^2_s\right)ds-\int_t^TZ^{A^i}_sdW_s,\\
 &B^i_t=0-\int_t^T\left(\frac{1}{b}A^i_sB^i_s +\mu_s+a\sum_{j\neq i}q^j_s\right)ds-\int_t^TZ^{B^i}_sdW_s\\
 \end{cases}
\end{equation*}
admits a unique solution $(A^i,B^i,Z^{A^i},Z^{B^i})\in\mathcal{S}^{\infty}(\mathbb{R})\times\mathcal{S}^2(\mathbb{R})\times \text{\rm BMO}(\mathbb{R}^d)\times\mathcal{H}^2(\mathbb{R}^d)$. Moreover, the solution of the BSDE
$$ dC^i_t=\frac{1}{4b}\left(B^i_t\right)^2dt+Z^{C^i}_tdW_t,\quad C^i_T=0,
$$ is well defined and given by
\begin{equation*}
C^i_t=0-\int_t^T \frac{1}{4b}\left(B^i_s\right)^2ds-\int_t^TZ^{C^i}_sdW_s.
\end{equation*}
\end{proposition}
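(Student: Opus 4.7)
The plan is to solve the system in cascade: first the scalar quadratic BSDE for $A^i$, then the linear BSDE for $B^i$ whose coefficients now involve the already-known $A^i$, and finally obtain $C^i$ directly as a conditional expectation. The only genuine difficulty is in the first step; the remaining two are linear and follow from standard machinery.

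For the BSDE satisfied by $A^i$, I would read it in forward form,
\[
-dA^i_t = \left(\lambda_i\sigma_t^2 - \tfrac{1}{b}(A^i_t)^2\right) dt - Z^{A^i}_t\,dW_t, \qquad A^i_T=\beta_i\geq 0,
\]
which is of Riccati type with bounded coefficient $\lambda_i\sigma^2$. The key preliminary step is an a priori $L^\infty$ bound: the constant $0$ is a subsolution (its driver evaluates to $\lambda_i\sigma_t^2\geq 0$), and the deterministic solution $\phi$ of the ODE $-\phi'_t = \lambda_i\|\sigma\|_\infty^2 - \tfrac{1}{b}\phi_t^2$, $\phi_T = \beta_i$, gives an explicit bounded supersolution. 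A comparison argument then forces $0 \leq A^i_t \leq \|\phi\|_\infty$. Truncating the map $y \mapsto \tfrac{1}{b}y^2$ outside this interval produces a Lipschitz driver, so existence and uniqueness follow from classical BSDE theory; the same comparison shows that the truncation is inactive, hence $A^i \in \mathcal{S}^\infty(\mathbb{R})$. To recover $Z^{A^i} \in \text{BMO}(\mathbb{R}^d)$, I would apply Itô's formula to $(A^i)^2$, use the boundedness of $A^i$ and of $\lambda_i\sigma^2$, and take conditional expectations at stopping times $\tau$ to obtain a uniform bound on $E[\int_\tau^T |Z^{A^i}_s|^2 ds \mid \mathcal{F}_\tau]$.

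With $A^i$ in hand, the equation for $B^i$ becomes
\[
-dB^i_t = \left(\tfrac{1}{b}A^i_t B^i_t + \mu_t + a\sum_{j\neq i} q^j_t\right) dt - Z^{B^i}_t\,dW_t, \qquad B^i_T = 0,
\]
which is linear in $B^i$ with bounded coefficient $\tfrac{1}{b}A^i$ and inhomogeneous term in $\mathcal{H}^2(\mathbb{R})$, since $\mu\in\mathcal{S}^\infty(\mathbb{R})$ and $q^j\in\mathcal{H}^2(\mathbb{R})$. Standard linear BSDE theory (e.g., Girsanov together with an $L^2$ estimate) then yields a unique $(B^i,Z^{B^i}) \in \mathcal{S}^2(\mathbb{R}) \times \mathcal{H}^2(\mathbb{R}^d)$. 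Finally, since $B^i\in\mathcal{S}^2$ implies $\int_0^T (B^i_s)^2\,ds\in L^1$, the process $N_t := E[-\int_0^T \tfrac{1}{4b}(B^i_s)^2\,ds \mid \mathcal{F}_t]$ is a true martingale, and the Brownian martingale representation theorem supplies $Z^{C^i}\in\mathcal{H}^2(\mathbb{R}^d)$ with $N_t=N_0+\int_0^t Z^{C^i}_s\,dW_s$. Setting $C^i_t := N_t + \int_0^t \tfrac{1}{4b}(B^i_s)^2 ds$ gives the claimed explicit formula.

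The crux is the first step: because the driver is quadratic in $y$ rather than Lipschitz, existing Lipschitz BSDE results do not apply directly, and the whole cascade hinges on pinning down the correct deterministic sub- and supersolutions to confine $A^i$ to a region where the quadratic term can be safely truncated. Everything downstream is essentially bookkeeping once $A^i$ has been controlled in $\mathcal{S}^\infty$.
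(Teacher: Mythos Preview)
Your proposal is correct and follows essentially the same route as the paper: truncate the quadratic driver for $A^i$, apply Pardoux--Peng, and use a priori bounds (the paper takes the cruder upper bound $M=\beta_i+\lambda_i\|\sigma\|_\infty^2 T$ and the lower bound $\beta_i e^{-M(T-t)/b}$ via a linearisation, while you use the cleaner pair $0$ and the ODE solution $\phi$) to show the truncation is inactive; the treatment of $B^i$ and $C^i$ is likewise identical in spirit. One small over-claim: from $B^i\in\mathcal S^2$ you only get $\int_0^T(B^i_s)^2\,ds\in L^1$, so the martingale representation yields $Z^{C^i}$ merely locally square-integrable, not in $\mathcal H^2$---but the proposition does not require more than that.
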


\begin{proof}
Denoting $M=\beta_i+\lambda_i\|\sigma\|_{\infty}^2T$, it follows from Pardoux and Peng \cite{PP} that BSDE
\begin{equation*}
 A^i_t=\beta_i-\int_t^T\left(\frac{1}{b}  \left( \left(-M\right)\vee A^i_s\wedge M\right)^2-\lambda_i\sigma^2_s\right)ds-\int_t^TZ^{A^i}_sdW_s
\end{equation*}
admits a unique solution $(A^i,Z^{A^i})\in\mathcal{S}^2(\mathbb{R})\times\mathcal{H}^2(\mathbb{R}^d)$. Moreover, we have the following estimate for $A^i$,
\begin{align*}
A^i_t&\leq E\left[\beta_i-\int_t^T\left(\frac{1}{b}  \left( \left(-M\right)\vee A^i_s\wedge M\right)^2-\lambda_i\sigma^2_s\right)du|\mathcal{F}_t\right]\leq \beta_i+\lambda_i\|\sigma\|_{\infty}^2(T-t).
\end{align*}
Meanwhile by denoting $\xi_t=\frac{\left(-M\right)\vee A^i_t\wedge M}{b}$, it holds that
\begin{align*}
e^{-\int_0^t\xi_sds}A^i_t&=e^{-\int_0^T\xi_sds}\beta_i+\int_t^Te^{-\int_0^s\xi_udu}\left(\lambda_i\sigma^2_s+\xi_sA^i-b\xi^2_s\right)ds-\int_t^Te^{-\int_0^s\xi_udu}Z^{A^i}_sdW_s\\
&\geq e^{-\int_0^T\xi_sds}\beta_i-\int_t^Te^{-\int_0^s\xi_udu}Z^{A^i}_sdW_s.
\end{align*}
Therefore, we have
\begin{align*}
A^i_t&\geq E\left[e^{-\int_t^T\xi_sds}\beta_i\Big|\mathcal{F}_t\right]\geq \beta_ie^{-\frac{M(T-t)}{b}}.
\end{align*}
Hence, $(A^i,Z^{A^i})\in\mathcal{S}^{\infty}(\mathbb{R})\times\mathcal{H}^2(\mathbb{R}^d)$ and satisfies
\begin{equation*}
A^i_t=\beta_i-\int_t^T\left(\frac{1}{b} \left(A^i_s\right)^2-\lambda_i\sigma^2_s\right)ds-\int_t^TZ^{A^i}_sdW_s.
\end{equation*}
It is easy to check that $Z^{A^i}\in \text{\rm BMO}(\mathbb{R}^d)$. On the other hand, if
\begin{equation*}
A^i_t=\beta_i-\int_t^T\left(\frac{1}{b} \left(A^i_s\right)^2-\lambda_i\sigma^2_s\right)du-\int_t^TZ^{A^i}_sdW_s
\end{equation*}
admits a solution $(A^i,Z^{A^i})\in\mathcal{S}^{\infty}(\mathbb{R})\times \text{\rm BMO}(\mathbb{R}^d)$, we have
\begin{align*}
A^i_t&\leq E\left[\beta_i-\int_t^T\left(\frac{1}{b} \left( A^i_s\right)^2-\lambda_i\sigma^2_s\right)ds\Bigg|\mathcal{F}_t\right]\leq \beta_i+\lambda_i\|\sigma\|_{\infty}^2(T-t)
\end{align*}
and
\begin{align*}
e^{-\int_0^t\frac{A^i_s}{b}ds}A^i_t&=e^{-\int_0^T\frac{A^i_s}{b}ds}\left(\beta_i\right)+\int_t^Te^{-\int_0^s\frac{A^i_u}{b}du}\lambda_i\sigma^2_sds-\int_t^Te^{-\int_0^s\frac{A^i_u}{b}du}Z^{A^i}_sdW_s\\
&\geq e^{-\int_0^T\frac{A^i_s}{b}ds}\left(\beta_i\right)-\int_t^Te^{-\int_0^s\frac{A^i_u}{b}du}Z^{A^i}_sdW_s.
\end{align*}
Therefore, we have
\begin{align*}
A^i_t&\geq E\left[e^{-\int_t^T\frac{A^i_s}{b}ds}\beta_i|\mathcal{F}_t\right]\geq \beta_ie^{-\frac{M(T-t)}{b}}.
\end{align*}
Hence, $(A^i,Z^{A^i})$ satisfies
\begin{equation*}
 A^i_t=\beta_i-\int_t^T\left(\frac{1}{b}  \left( \left(-M\right)\vee A^i_s\wedge M\right)^2-\lambda_i\sigma^2_s\right)ds-\int_t^TZ^{A^i}_sdW_s.
\end{equation*}
Again, it follows from Pardoux and Peng \cite{PP} that
\begin{equation*}
 B^i_t=0-\int_t^T\left(\frac{1}{b}A^i_sB^i_s +\mu_s+a\sum_{j\neq i}q^j_s\right)du-\int_t^TZ^{B^i}_sdW_s
\end{equation*}
admits a unique solution $(B^i,Z^{B^i})\in\mathcal{S}^2(\mathbb{R})\times\mathcal{H}^2(\mathbb{R}^d)$. The rest is clear.
\end{proof}
\begin{theorem}\label{Thmi}
Suppose that $\beta_i\geq 0$ and $q^j\in\mathcal{H}^2(\mathbb{R}),~j\neq i$, then $\hat{\Phi}^i_t\left(Q^{q^i}_t\right)$ is given by
   \begin{equation*}
   \hat{\Phi}^i_t\left(Q^{q^i}_t\right)= A^i_t\left(Q^{q^i}_t\right)^2
   +B^i_tQ^{q^i}_t+C^i_t
   \end{equation*}
where $A^i,B^i,C^i$ are given as in Proposition \ref{Thm1}. The unique optimal strategy  for the agent $i$ is given in feedback form by
\begin{equation*}
q^{i*}_t=-\frac{1}{b}\left(A^i_t Q^{q^{i*}}_t
   +\frac{1}{2}B^i_t\right).
\end{equation*}
\end{theorem}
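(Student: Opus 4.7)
The statement is a classical verification-type result, so I would use a completion-of-squares argument combined with It\^o's formula. Concretely, for $\bar q^i\in\mathcal H^2(\mathbb R)$ fixed and writing $X_s:=Q^{\bar q^i}_{t,s}$, define the candidate value process
\begin{equation*}
V_s:=A^i_s X_s^2+B^i_s X_s+C^i_s,\qquad s\in[t,T].
\end{equation*}
The terminal conditions $A^i_T=\beta_i$, $B^i_T=0$, $C^i_T=0$ supplied by Proposition \ref{Thm1} give $V_T=\beta_i X_T^2$, while $V_t=A^i_t(Q^{q^i}_t)^2+B^i_tQ^{q^i}_t+C^i_t$ is exactly the candidate expression for $\hat\Phi^i_t$. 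The first step is therefore to relate $V_t$ and $V_T$ through It\^o's formula and compare with the integrand of the running cost in $\hat\Phi^i_t$.

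Applying It\^o to $V_s$ using the BSDE dynamics of $A^i,B^i,C^i$ and $dX_s=\bar q^i_s\,ds$, and then adding the running cost $-X_s(\mu_s+a\sum_{j\neq i}q^j_s)+\lambda_i\sigma_s^2 X_s^2+b(\bar q^i_s)^2$, the $\lambda_i\sigma_s^2 X_s^2$ and $\mu_s X_s+a\sum_{j\neq i}q^j_s X_s$ terms cancel by construction of the drivers of $A^i$ and $B^i$. The remaining drift collapses into a perfect square; namely,
\begin{equation*}
\frac{1}{b}(A^i_s)^2X_s^2+2A^i_s X_s\bar q^i_s+\frac{1}{b}A^i_s B^i_s X_s+B^i_s\bar q^i_s+\frac{1}{4b}(B^i_s)^2+b(\bar q^i_s)^2 \;=\; b\Bigl(\bar q^i_s+\tfrac{1}{b}\bigl(A^i_s X_s+\tfrac12 B^i_s\bigr)\Bigr)^2.
\end{equation*}
Integrating from $t$ to $T$ and rearranging yields
\begin{equation*}
\beta_i X_T^2+\int_t^T\!\!\Bigl(-X_s(\mu_s+a\textstyle\sum_{j\neq i}q^j_s)+\lambda_i\sigma_s^2 X_s^2+b(\bar q^i_s)^2\Bigr)ds
= V_t+\int_t^T\!\! b\bigl(\bar q^i_s+\tfrac1b(A^i_s X_s+\tfrac12 B^i_s)\bigr)^2ds+M_T-M_t,
\end{equation*}
where $M$ collects the stochastic integrals $\int(X_s^2 Z^{A^i}_s+X_s Z^{B^i}_s+Z^{C^i}_s)dW_s$.

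Next I would take conditional expectations. The main technicality is justifying that $M$ has vanishing conditional expectation: for arbitrary $\bar q^i\in\mathcal H^2$, $X$ lies in $\mathcal S^2$ but products like $X_s^2Z^{A^i}_s$ or $X_sZ^{B^i}_s$ are not automatically in $\mathcal H^2$. I would handle this with a standard localization: stop $M$ at $\tau_N=\inf\{s:\int_t^s(\cdots)\,du\ge N\}\wedge T$ so that on $[t,\tau_N]$ the integrand is square-integrable and $E[M_{\tau_N}-M_t\mid\mathcal F_t]=0$, then use the uniform-in-$N$ integrability of $X^2$, the BMO property of $Z^{A^i}$, and $\mathcal H^2$-integrability of $Z^{B^i}$, $Z^{C^i}$ together with dominated convergence as $N\to\infty$. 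This gives
\begin{equation*}
E\!\left[\beta_i X_T^2+\int_t^T(\cdots)\,ds\,\Big|\mathcal F_t\right] = V_t + E\!\left[\int_t^T b\bigl(\bar q^i_s+\tfrac1b(A^i_s X_s+\tfrac12 B^i_s)\bigr)^2ds\,\Big|\mathcal F_t\right]\ge V_t,
\end{equation*}
whence $\hat\Phi^i_t(Q^{q^i}_t)\ge V_t$, with equality if and only if $\bar q^i_s=-\frac{1}{b}(A^i_s X_s+\frac12 B^i_s)$ $P\otimes ds$-a.e.

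It then remains to exhibit an admissible minimizer achieving equality and thereby conclude both the value-function formula and the uniqueness of the optimal strategy. The equation $\bar q^{i*}_s=-\frac{1}{b}(A^i_s Q^{\bar q^{i*}}_{t,s}+\frac12 B^i_s)$ is a linear ODE (pathwise in $\omega$) for $Q^{\bar q^{i*}}_{t,\cdot}$ with coefficient $-A^i/b$ and inhomogeneity $-B^i/(2b)$; since $A^i\in\mathcal S^\infty$ and $B^i\in\mathcal S^2$, the variation-of-constants formula furnishes a unique adapted solution with $Q^{\bar q^{i*}}_{t,\cdot}\in\mathcal S^2$ and hence $\bar q^{i*}\in\mathcal H^2(\mathbb R)$. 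Plugging this strategy in gives equality above, yielding $\hat\Phi^i_t(Q^{q^i}_t)=A^i_t(Q^{q^i}_t)^2+B^i_tQ^{q^i}_t+C^i_t$ and the announced feedback form. The main obstacle I anticipate is the integrability/localization step, because the a priori regularity of $\bar q^i\in\mathcal H^2$ is only just enough; everything else is a direct computation forced by the specific drivers chosen in Proposition \ref{Thm1}.
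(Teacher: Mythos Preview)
Your proposal is correct and follows essentially the same verification approach as the paper: apply It\^o's formula to the quadratic ansatz $V_s=A^i_sX_s^2+B^i_sX_s+C^i_s$, complete the square to obtain the nonnegative term $\frac{1}{b}\bigl(A^i_sX_s+b\bar q^i_s+\tfrac12 B^i_s\bigr)^2$, take conditional expectations to get the inequality, and then attain equality with the feedback strategy solved from the linear random ODE. You are in fact slightly more careful than the paper about the localization needed to drop the stochastic integral.
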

\begin{proof}
By denoting
\begin{equation*}
\begin{cases}
  V^{q^i}_t&= A^i_t\left(Q^{q^i}_t\right)^2
   +B^i_tQ^{q^i}_t+C^i_t,\\
  g^{A^i}_t&=\frac{1}{b}\left(A^i_t\right)^2-\lambda_i\sigma^2_t,\\
  g^{B^i}_t&=\frac{1}{b}A^i_tB^i_t +\mu_t +a\sum_{j\neq i}q^j_t,\\
  g^{C^i}_t&=\frac{1}{4b}\left(B^i_t\right)^2,
\end{cases}
\end{equation*}
and applying It\^{o}'s formula, we have
   \begin{align*}
   dV^{q^i}_t= &2A^i_tQ^{q^i}_tq^i_tdt +\left(Q^{q^i}_t\right)^2dA^i_t+B^i_tq^i_tdt+Q^{q^i}_tdB^i_t+g^{C^i}_tdt+Z^{C^i}_tdW_t\\
   = &2A^i_tQ^{q^i}_tq^i_tdt +\left(Q^{q^i}_t\right)^2g^{A^i}_tdt+\left(Q^{q^i}_t\right)^2Z^{A^i}_tdW_t+g^{C^i}_tdt+Z^{C^i}_tdW_t+B^i_tq^i_tdt+Q^{q^i}_tg^{B^i}_tdt+Q^{q^i}_tZ^{B^i}_tdW_t\\
   = &\left(2A^i_tQ^{q^i}_tq^i_t +\left(Q^{q^i}_t\right)^2g^{A^i}_t
   +B^i_tq^i_t+Q^{q^i}_tg^{B^i}_t+g^{C^i}_t\right)dt +\left(\left(Q^{q^i}_t\right)^2Z^{A^i}_t+Q^{q^i}_tZ^{B^i}_t+Z^{C^i}_t\right)dW_t.
   \end{align*}
Therefore it holds
   \begin{align*}
   &dV^{q^i}_t+\left(-Q^{q^i}_t\left(\mu_t +a\sum_{j\neq i}q^j_t-\lambda_i\sigma^2_tQ^{q^i}_t\right) +b \left(q^i_t\right)^2\right)dt\\
   &=\left(2A^i_tQ^{q^i}_tq^i_t +\left(Q^{q^i}_t\right)^2g^{A^i}_t
   +B^i_tq^i_t+Q^{q^i}_tg^{B^i}_t+g^{C^i}_t-Q^{q^i}_t\left(\mu_t +a\sum_{j\neq i}q^j_t-\lambda_i\sigma^2_tQ^{q^i}_t\right) +b \left(q^i_t\right)^2\right)dt\\
   &\quad +\left(\left(Q^{q^i}_t\right)^2Z^{A^i}_t+Q^{q^i}_tZ^{B^i}_t+Z^{C^i}_t\right)dW_t.
   \end{align*}
and rearranging the drift terms, one can see
   \begin{align*}
   &dV^{q^i}_t+\left(-Q^{q^i}_t\left(\mu_t +a\sum_{j\neq i}q^j_t-\lambda_i\sigma^2_tQ^{q^i}_t\right) +b \left(q^i_t\right)^2\right)dt\\
   &=\left(2A^i_tQ^{q^i}_tq^i_t +\frac{1}{b}\left(Q^{q^i}_t\right)^2\left(A^i_t\right)^2
   +B^i_tq^i_t+\frac{1}{b}Q^{q^i}_tA^i_tB^i_t+\frac{1}{4b}\left(B^i_t\right)^2 +b \left(q^i_t\right)^2\right)dt\\
   &\quad +\left(\left(Q^{q^i}_t\right)^2Z^{A^i}_t+Q^{q^i}_tZ^{B^i}_t+Z^{C^i}_t\right)dW_t\\
   &=\frac{1}{b}\left(A^i_tQ^{q^i}_t+bq^i_t+\frac{1}{2}B^i_t\right)^2dt +\left(\left(Q^{q^i}_t\right)^2Z^{A^i}_t+Q^{q^i}_tZ^{B^i}_t+Z^{C^i}_t\right)dW_t.
   \end{align*}
Hence, it holds that
   \begin{align*}
   V^{q^i}_t&=V^{q^i}_T+\int_t^T\left(-Q^{q^i}_s\left(\mu_s +a\sum_{j\neq i}q^j_s-\lambda_i\sigma^2_tQ^{q^i}_s\right) +b \left(q^i_s\right)^2\right)ds\\
   &-\int_t^T\frac{1}{b}\left(A^i_sQ^{q^i}_s+bq^i_s+\frac{1}{2}B^i_s\right)^2ds -\int_t^T\left(\left(Q^{q^i}_s\right)^2Z^{A^i}_s+Q^{q^i}_sZ^{B^i}_s+Z^{C^i}_s\right)dW_s.
   \end{align*}
Therefore, for any $q^i,\bar{q}^i\in\mathcal{H}^2(\mathbb{R})$ and $t\in[0,T]$, by taking $\tilde{q}^i_s=q^i_s1_{\{s\leq t\}}+\bar{q}^i_s1_{\{s>t\}}$ for all $s\in[0,T]$, we have
   \begin{align*}
   V^{q^i}_t=V^{\tilde{q}^i}_t&=\beta_iQ^{\bar{q}^i}_{t,T}+\int_t^T\left(-Q^{\bar{q}^i}_{t,s}\left(\mu_s +a\sum_{j\neq i}q^j_s-\lambda_i\sigma^2_tQ^{\bar{q}^i}_{t,s}\right) +b \left(\bar{q}^i_s\right)^2\right)ds\\
   &-\int_t^T\frac{1}{b}\left(A^i_sQ^{\bar{q}^i}_{t,s}+b\bar{q}^i_s+\frac{1}{2}B^i_s\right)^2ds -\int_t^T\left(\left(Q^{\bar{q}^i}_{t,s}\right)^2Z^{A^i}_s+Q^{\bar{q}^i}_{t,s}Z^{B^i}_s+Z^{C^i}_s\right)dW_s
   \end{align*}
   which implies that
   \begin{align*}
   V^{q^i}_t\leq E\left[\beta_iQ^{\bar{q}^i}_{t,T}+\int_t^T\left(-Q^{\bar{q}^i}_{t,s}\left(\mu_s +a\sum_{j\neq i}q^j_s-\lambda_i\sigma^2_tQ^{\bar{q}^i}_{t,s}\right) +b \left(\bar{q}^i_s\right)^2\right)ds\Bigg|\mathcal{F}_t\right].
   \end{align*}
Hence, it holds that
   \begin{align*}
   V^{q^i}_t&\leq \essinf_{\bar{q}^i\in\mathcal{H}^2(\mathbb{R})}E\left[\beta_iQ^{\bar{q}^i}_{t,T}+\int_t^T\left(-Q^{\bar{q}^i}_{t,s}\left(\mu_s +a\sum_{j\neq i}q^j_s-\lambda_i\sigma^2_tQ^{\bar{q}^i}_{t,s}\right) +b \left(\bar{q}^i_s\right)^2\right)ds\Bigg|\mathcal{F}_t\right]\\
   &= \hat{\Phi}^i_t\left(Q^{q^i}_t\right).
   \end{align*}
   On the other hand, for any $t\in[0,T]$ and $q^i\in\mathcal{H}^2(\mathbb{R})$, the following random ODE
   \begin{equation*}
   Q^{i*}_s=Q^{q^i}_t-\frac{1}{b}\int_t^s\left(A^i_uQ^{i*}_u+\frac{1}{2}B^i_u\right)du,~~~s\in[t,T]
   \end{equation*}
   admits a unique solution $Q^{i*}\in \mathcal{S}^{2}(\mathbb{R})$ on $[t,T]$. Therefore, by taking $\tilde{q}^i_s=q^i_s1_{\{s\leq t\}}+q^{i*}_s1_{\{s>t\}}$ with
   \begin{equation*}
   q^{i*}_s=-\frac{1}{b}\left(A^i_sQ^{i*}_s+\frac{1}{2}B^i_s\right),
   \end{equation*}
   we have
   \begin{align*}
   V^{q^i}_t=V^{\tilde{q}^i}_t&=\beta_iQ^{q^{i*}}_{t,T}+\int_t^T\left(-Q^{q^{i*}}_{t,s}\left(\mu_s +a\sum_{j\neq i}q^j_s-\lambda_i\sigma^2_tQ^{q^{i*}}_{t,s}\right) +b \left(q^{i*}_s\right)^2\right)ds\\
   &-\int_t^T\frac{1}{b}\left(A^i_sQ^{q^{i*}}_{t,s}+bq^{i*}_s+\frac{1}{2}B^i_s\right)^2ds -\int_t^T\left(\left(Q^{q^{i*}}_{t,s}\right)^2Z^{A^i}_s+Q^{q^{i*}}_{t,s}Z^{B^i}_s+Z^{C^i}_s\right)dW_s
   \end{align*}
   which implies that
   \begin{align*}
   V^{q^i}_t&= E\left[\beta_iQ^{q^{i*}}_{t,T}+\int_t^T\left(-Q^{q^{i*}}_{t,s}\left(\mu_s +a\sum_{j\neq i}q^j_s-\lambda_i\sigma^2_tQ^{q^{i*}}_{t,s}\right) +b \left(q^{i*}_s\right)^2\right)ds\Bigg|\mathcal{F}_t\right]\\
   &\geq \hat{\Phi}^i_t\left(Q^{q^i}_t\right).
   \end{align*}
   Therefore, it holds that
      \begin{equation*}
   \hat{\Phi}^i_t\left(Q^{q^i}_t\right)= A^i_t\left(Q^{q^i}_t\right)^2
   +B^i_tQ^{q^i}_t+C^i_t.
   \end{equation*}
It is easy to verify that the unique optimal strategy $($feedback form$)$ for the agent $i$ is given by
\begin{equation*}
q^{i*}_t=-\frac{1}{b}\left(A^i_t Q^{q^{i*}}_t
   +\frac{1}{2}B^i_t\right).
\end{equation*}
\end{proof}
\subsection{Characterization of the optimal strategy in terms of an FBSDE}
In this section, we show that the optimal strategy for agent $i$ can be given by the unique solution of an FBSDE.
\begin{theorem}\label{thmi}
Suppose that $\beta_i\geq 0$ and $q^j\in\mathcal{H}^2(\mathbb{R}),~j\neq i$, then $\left(Q^{q^{i*}},q^{i*},\frac{Q^{q^{i*}}Z^{A^i}}{b}+\frac{Z^{B^i}}{2b}\right)$ is the unique solution of the following FBSDE
\begin{equation}\label{FBSDE_i}
\begin{cases}
\displaystyle &Q^{q^i}_t=Q^i_0+\int_0^tq^i_sds,\\
\displaystyle &q^i_t=-\frac{\beta_i}{b}Q^{q^i}_T+\int_t^T\frac{1}{b}\left(-\lambda_i\sigma^2_sQ^{q^i}_s+\frac{\left(\mu_s+a\sum_{j\neq i}q^j_s\right)}{2}\right)ds+\int_t^TZ^i_sdW_s.
\end{cases}
\end{equation}
in $\mathcal{S}^{2}(\mathbb{R})\times\mathcal{S}^2(\mathbb{R})\times\mathcal{H}^2(\mathbb{R}^{d})$.
\end{theorem}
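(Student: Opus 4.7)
The plan is in two stages: (i) verify by direct It\^{o} computation that the proposed triple solves the FBSDE, and (ii) establish uniqueness by a linearization that forces any solution into the feedback form already identified in Theorem~\ref{Thmi}.

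\textbf{Step 1 (existence).} I would set
\begin{equation*}
Z^i_t := \frac{1}{b}\,Q^{q^{i*}}_t Z^{A^i}_t + \frac{1}{2b}\,Z^{B^i}_t
\end{equation*}
and apply It\^{o}'s formula to the feedback form $q^{i*}_t=-\frac{1}{b}(A^i_tQ^{q^{i*}}_t+\frac{1}{2}B^i_t)$ from Theorem~\ref{Thmi}, using the BSDE dynamics of $A^i$ and $B^i$ from Proposition~\ref{Thm1} and the forward dynamics $dQ^{q^{i*}}_t=q^{i*}_t\,dt$. The key cancellation is that the quadratic contribution $\frac{1}{b}(A^i_t)^2 Q^{q^{i*}}_t$ coming from the driver of the $A^i$-BSDE is exactly offset by the $-\frac{1}{b}(A^i_t)^2 Q^{q^{i*}}_t$ obtained after substituting the feedback form of $q^{i*}$ into $A^i_t\,dQ^{q^{i*}}_t$. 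What survives is precisely the driver of the backward equation in \eqref{FBSDE_i}, and the Brownian integrand comes out as $Z^i$; the terminal condition $q^{i*}_T=-\frac{\beta_i}{b}Q^{q^{i*}}_T$ is immediate from $A^i_T=\beta_i$ and $B^i_T=0$. Membership of $(Q^{q^{i*}},q^{i*},Z^i)$ in $\mathcal{S}^{2}(\mathbb{R})\times\mathcal{S}^{2}(\mathbb{R})\times\mathcal{H}^{2}(\mathbb{R}^d)$ follows from $A^i\in\mathcal{S}^{\infty}$, $B^i\in\mathcal{S}^2$, $Z^{A^i}\in\text{\rm BMO}(\mathbb{R}^d)$ and $Z^{B^i}\in\mathcal{H}^2(\mathbb{R}^d)$.

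\textbf{Step 2 (uniqueness).} Given any solution $(Q,q,Z)\in\mathcal{S}^{2}(\mathbb{R})\times\mathcal{S}^{2}(\mathbb{R})\times\mathcal{H}^{2}(\mathbb{R}^d)$ of \eqref{FBSDE_i}, I introduce the error processes
\begin{equation*}
X_t:=q_t+\frac{1}{b}\left(A^i_tQ_t+\tfrac{1}{2}B^i_t\right),\qquad \tilde{Z}_t:=-Z_t+\frac{1}{b}\left(Q_tZ^{A^i}_t+\tfrac{1}{2}Z^{B^i}_t\right).
\end{equation*}
Repeating the It\^{o} computation of Step~1 verbatim, now with $(Q,q)$ in place of $(Q^{q^{i*}},q^{i*})$ and inserting the BSDE satisfied by $q$, one obtains that $X$ solves the linear BSDE
\begin{equation*}
dX_t=\frac{A^i_t}{b}\,X_t\,dt+\tilde{Z}_t\,dW_t,\qquad X_T=q_T+\frac{\beta_i}{b}Q_T=0.
\end{equation*}
Since $A^i\in\mathcal{S}^{\infty}$, the driver is linear with a bounded coefficient, so classical uniqueness for linear BSDEs gives $X\equiv 0$ and $\tilde{Z}\equiv 0$. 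Consequently $q_t=-\frac{1}{b}(A^i_tQ_t+\frac{1}{2}B^i_t)$, and the forward equation reduces to the pathwise linear random ODE $\dot{Q}_t=-\frac{1}{b}(A^i_tQ_t+\frac{1}{2}B^i_t)$ with $Q_0=Q^i_0$, which has a unique solution. Comparing with the analogous ODE satisfied by $Q^{q^{i*}}$ forces $Q=Q^{q^{i*}}$, and then $q=q^{i*}$, $Z=Z^i$.

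\textbf{Main obstacle.} The arithmetic in Step~1 is bookkeeping. The delicate point is verifying in Step~2 that $\tilde{Z}\in\mathcal{H}^2(\mathbb{R}^d)$, so that the linear BSDE uniqueness actually applies: while $X\in\mathcal{S}^2$ is immediate from $A^i\in\mathcal{S}^{\infty}$ and $Q,B^i\in\mathcal{S}^2$, the term $Q_tZ^{A^i}_t$ entering $\tilde{Z}$ couples an $\mathcal{S}^2$-process with a BMO-martingale integrand. This is where the BMO property of $Z^{A^i}$ proved in Proposition~\ref{Thm1} becomes indispensable: the reverse H\"{o}lder/energy inequality gives $\int_0^T|Z^{A^i}_s|^2\,ds$ moments of arbitrary order, and a H\"{o}lder estimate against $\sup_t|Q_t|$ then yields the required square-integrability of $Q Z^{A^i}$.
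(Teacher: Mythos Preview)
Your argument is correct and complete in spirit, but it differs from the paper on both halves, and your ``main obstacle'' is mis-diagnosed.

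\textbf{Existence.} The paper introduces the integrating factor $\Lambda^i_t=e^{-\int_0^t A^i_s/b\,ds}$, first computes the dynamics of $\Lambda^i_t Q^{q^{i*}}_t A^i_t$ and $\Lambda^i_t B^i_t$, and only then strips off $\Lambda^i$ by a second application of It\^{o}. Your direct differentiation of $q^{i*}_t=-\frac{1}{b}(A^i_t Q^{q^{i*}}_t+\tfrac12 B^i_t)$ achieves the same cancellation in one step and is the cleaner route.

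\textbf{Uniqueness.} Here the two proofs are genuinely different. The paper does not touch $A^i,B^i$ at all: it takes the difference of two solutions, applies It\^{o} to the product $(q^i_t-\bar q^i_t)(Q^{q^i}_t-Q^{\bar q^i}_t)$, and reads off at $t=0$ the identity
\[
0=E\Big[-\tfrac{\beta_i}{b}(Q^{q^i}_T-Q^{\bar q^i}_T)^2-\int_0^T\tfrac{\lambda_i\sigma^2_s}{b}(Q^{q^i}_s-Q^{\bar q^i}_s)^2\,ds-\int_0^T(q^i_s-\bar q^i_s)^2\,ds\Big]\le 0,
\]
a pure monotonicity argument. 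Your approach instead exploits the Riccati structure: you show the residual $X_t=q_t+\frac{1}{b}(A^i_tQ_t+\tfrac12 B^i_t)$ solves the homogeneous linear BSDE $dX_t=\frac{A^i_t}{b}X_t\,dt+\tilde Z_t\,dW_t$, $X_T=0$, and conclude $X\equiv 0$. This is equally valid and makes the feedback form transparent, whereas the paper's energy estimate is self-contained and reusable (it reappears almost verbatim in the uniqueness proofs of Theorems~\ref{thm 4.2} and~\ref{SIM}).

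\textbf{On the obstacle.} Your H\"older argument for $\tilde Z\in\mathcal{H}^2(\mathbb{R}^d)$ does not quite close: with only $Q\in\mathcal{S}^2$ you have $\sup_t|Q_t|^2\in L^1$, and pairing this against $\int_0^T|Z^{A^i}_s|^2\,ds\in L^p$ for all $p<\infty$ still leaves you one exponent short. But the integrability of $\tilde Z$ is free once you have the equation: from $\int_0^t\tilde Z_s\,dW_s=X_t-X_0-\int_0^t\frac{A^i_s}{b}X_s\,ds$ and $X\in\mathcal{S}^2$, $A^i\in\mathcal{S}^\infty$, the right-hand side lies in $\mathcal{S}^2$, and BDG gives $\tilde Z\in\mathcal{H}^2(\mathbb{R}^d)$ directly. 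No appeal to BMO is needed.
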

\begin{proof}
By denoting
\begin{equation*}
\Lambda^i_t:=e^{-\int_0^t\frac{1}{b}A^i_sds},
\end{equation*}
it is easy to deduce that:
\begin{align*}
d\left(\Lambda^i_tQ^{q^{i*}}_tA^i_t\right)&=\Lambda^i_tQ^{q^{i*}}_tdA^i_t+\Lambda^i_tq^{i*}_tA^i_tdt+Q^{q^{i*}}_tA^i_td\Lambda^i_t\\
&=\Lambda^i_tq^{i*}_tA^i_tdt+\frac{\Lambda^i_tQ^{q^{i*}}_t\left(A^i_t\right)^2}{b}dt-\lambda_i\sigma^2_t\Lambda^i_tQ^{q^{i*}}_tdt-\frac{\Lambda^i_tQ^{q^{i*}}_t\left(A^i_t\right)^2}{b}dt+\Lambda^i_tQ^{q^{i*}}_tZ^{A^i}_tdW_t\\
&=\Lambda^i_tq^{i*}_tA^i_tdt-\lambda_i\sigma^2_t\Lambda^i_tQ^{q^{i*}}_tdt+\Lambda^i_tQ^{q^{i*}}_tZ^{A^i}_tdW_t.
\end{align*}
Therefore, it holds that
\begin{align*}
\Lambda^i_tQ^{q^{i*}}_tA^i_t&=\beta_i\Lambda^i_TQ^{q^{i*}}_T-\int_t^T\Lambda^i_sq^{i*}_sA^i_sds+\int_t^T\lambda_i\sigma^2_s\Lambda^i_sQ^{q^{i*}}_sds-\int_t^T\Lambda^i_sQ^{q^{i*}}_sZ^{A^i}_sdW_s.
\end{align*}
Noting that
\begin{equation*}
\Lambda^i_tB^i_t=-\int_t^T\Lambda^i_s\left(\mu_s+a\sum_{j\neq i}q^j_s\right)ds-\int_t^T\Lambda^i_sZ^{B^i}_sdW_s,
\end{equation*}
one has
\begin{align*}
\Lambda^i_tq^{i*}_t&=-\frac{\beta_i}{b}\Lambda^i_TQ^{q^{i*}}_T+\int_t^T\frac{1}{b}\left(\Lambda^i_sq^{i*}_sA^{i}_s-\lambda_i\sigma^2_s\Lambda^i_sQ^{q^{i*}}_s+\frac{\Lambda^i_s\left(\mu_s+a\sum_{j\neq i}q^j_s\right)}{2}\right)ds\\
&\quad+\int_t^T\left(\frac{\Lambda^i_sQ^{q^{i*}}_sZ^{A^i}_s}{b}+\frac{\Lambda^i_sZ^{B^i}_s}{2b}\right)dW_s.
\end{align*}
Therefore, it holds that
\begin{align*}
dq^{i*}_t &= d\left(\left(\Lambda^i_t\right)^{-1}\Lambda^i_tq^{i*}_t\right)\\
&= \frac{A^i_t}{b}q^{i*}_tdt - \left(\Lambda^i_t\right)^{-1}\left(\frac{1}{b}\left(\Lambda^i_tq^{i*}_tA^{i}_t-\lambda_i\sigma^2_t\Lambda^i_tQ^{q^{i*}}_t+\frac{\Lambda^i_t\left(\mu_t+a\sum_{j\neq i}q^j_t\right)}{2}\right)dt\right.\\
&\quad\quad\quad\left.+\left(\frac{\Lambda^i_tQ^{q^{i*}}_tZ^{A^i}_t}{b}+\frac{\Lambda^i_tZ^{B^i}_t}{2b}\right)dW_t\right)\\
&= \frac{1}{b}\left(\lambda_i\sigma^2_tQ^{q^{i*}}_t-\frac{\left(\mu_t+a\sum_{j\neq i}q^j_t\right)}{2}\right)dt-\left(\frac{Q^{q^{i*}}_tZ^{A^i}_t}{b}+\frac{Z^{B^i}_t}{2b}\right)dW_t.
\end{align*}
It is easy to check that $\left(Q^{q^{i*}},q^{i*},\frac{Q^{q^{i*}}Z^{A^i}}{b}+\frac{Z^{B^i}}{2b}\right)$ is in $\mathcal{S}^{2}(\mathbb{R})\times\mathcal{S}^2(\mathbb{R})\times\mathcal{H}^2(\mathbb{R}^{d})$. We now prove the uniqueness. Suppose that FBSDE \eqref{FBSDE_i} admits another solution $(Q^{\bar{q}^i},\bar{q}^i,\bar{Z}^i)\in\mathcal{S}^2(\mathbb{R})\times\mathcal{S}^2(\mathbb{R})\times\mathcal{H}^2(\mathbb{R}^{d})$. Then, we have
 \begin{equation*}
\begin{cases}
\displaystyle &Q^{q^i}_t-Q^{\bar{q}^i}_t=\int_0^t\left(q^i_s-\bar{q}^i_s\right)ds,\\
\displaystyle &q^i_t-\bar{q}^i_t=-\frac{\beta_i}{b}\left(Q^{q^i}_T-Q^{\bar{q}^i}_T\right)+\int_t^T\frac{1}{b}\left(-\lambda_i\sigma^2_s\left(Q^{q^i}_s-Q^{\bar{q}^i}_s\right)\right)ds+\int_t^T\left(Z^i_s-\bar{Z}^i_s\right)dW_s.
\end{cases}
\end{equation*}
Therefore, it holds that
\begin{align*}
\left(q^i_t-\bar{q}^i_t\right)\left(Q^{q^i}_t-Q^{\bar{q}^i}_t\right)&=-\frac{\beta_i}{b}\left(Q^{q^i}_T-Q^{\bar{q}^i}_T\right)^2+\int_t^T\frac{1}{b}\left(-\lambda_i\sigma^2_s\left(Q^{q^i}_s-Q^{\bar{q}^i}_s\right)^2\right)ds\\
&\quad\quad -\int_t^T\left(q^i_s-\bar{q}^i_s\right)^2ds+\int_t^T\left(Q^{q^i}_t-Q^{\bar{q}^i}_t\right)\left(Z^i_s-\bar{Z}^i_s\right)dW_s.
\end{align*}
Thus, it holds that
\begin{align*}
0=E\left[-\frac{\beta_i}{b}\left(Q^{q^i}_T-Q^{\bar{q}^i}_T\right)^2-\int_0^T\frac{\lambda_i\sigma^2_s}{b}\left(Q^{q^i}_s-Q^{\bar{q}^i}_s\right)^2ds -\int_0^T\left(q^i_s-\bar{q}^i_s\right)^2ds\right]\leq 0
\end{align*}
which implies uniqueness.
\end{proof}

\section{Characterization and existence of a Nash equilibrium}
We first provide a characterizing result of a Nash equilibrium in terms of a system of FBSDE.
\begin{theorem}\label{thm 4.1}
Suppose that $\beta_i\geq 0$, if the following FBSDE:
\begin{equation}\label{FBSDE}
\begin{cases}
\displaystyle &Q^{q^i}_t=Q^i_0+\int_0^tq^i_sds,~~i=1,\ldots,n\\
\displaystyle &q^i_t=-\frac{\beta_i}{b}Q^{q^i}_T+\int_t^T\frac{1}{b}\left(-\lambda_i\sigma^2_sQ^{q^i}_s+\frac{\left(\mu_s+a\sum_{j\neq i}q^j_s\right)}{2}\right)ds+\int_t^TZ^i_sdW_s,~~i=1,\ldots,n.
\end{cases}
\end{equation}
admits a solution $(Q^{q},q,Z)\in\mathcal{S}^{2}(\mathbb{R}^n)\times\mathcal{S}^2(\mathbb{R}^n)\times\mathcal{H}^2(\mathbb{R}^{n\times d})$, then $q$ is a Nash equilibrium. On the other hand, if $q\in\mathcal{H}^2(\mathbb{R}^n)$ is a Nash equilibrium, then $(Q^q,q,Z)$ is a solution of FBSDE \ref{FBSDE} in $\mathcal{S}^{2}(\mathbb{R}^n)\times\mathcal{S}^2(\mathbb{R}^n)\times\mathcal{H}^2(\mathbb{R}^{n\times d})$, where $Z$ is given by
\begin{equation*}
Z=\left(\frac{Q^{q^{1}}Z^{A^1}}{b}+\frac{Z^{B^1}}{2b},\ldots,\frac{Q^{q^{n}}Z^{A^n}}{b}+\frac{Z^{B^n}}{2b}\right)^{'},
\end{equation*}
where for $i=1,\ldots,n$, $Z^{A^i},Z^{B^i}$ are given as in Theorem \ref{Thm1} and $M^{'}$ denotes the transpose of the matrix $M$.
\end{theorem}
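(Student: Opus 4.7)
My plan is to reduce Theorem \ref{thm 4.1} directly to the single-agent analysis carried out in Theorem \ref{thmi}. The key observation is that the coupled FBSDE \eqref{FBSDE} is nothing but the simultaneous concatenation of the $n$ single-agent FBSDEs \eqref{FBSDE_i}: the $i$-th row has exactly the same structure as in Theorem \ref{thmi}, with the other players' strategies $q^{-i}$ entering the driver through the term $\mu_s + a\sum_{j\neq i}q^j_s$.

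For the forward implication, I would fix $i$ and regard $q^{-i}$ as exogenously given. Since the hypothesized solution lies in $\mathcal{S}^2(\mathbb{R}^n) \times \mathcal{S}^2(\mathbb{R}^n) \times \mathcal{H}^2(\mathbb{R}^{n\times d})$, we have in particular $q^j \in \mathcal{S}^2(\mathbb{R}) \subset \mathcal{H}^2(\mathbb{R})$ for every $j \neq i$, so the standing hypothesis of Theorem \ref{thmi} is satisfied. The $i$-th row of \eqref{FBSDE} is then precisely the single-agent FBSDE \eqref{FBSDE_i} driven by $q^{-i}$, and the uniqueness part of Theorem \ref{thmi} forces $q^i$ to coincide with the optimizer $q^{i*}$ of $E[C^i_T(Q^i_0, \cdot, q^{-i})]$ identified in Theorem \ref{Thmi}. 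Thus $q^i$ is a best response to $q^{-i}$ for every $i$, which by definition is a Nash equilibrium.

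Conversely, let $q \in \mathcal{H}^2(\mathbb{R}^n)$ be a Nash equilibrium. For any fixed $i$, the vector $q^{-i}$ lies in $\mathcal{H}^2(\mathbb{R}^{n-1})$, so Theorem \ref{Thmi} applies and shows that agent $i$'s unique optimizer against $q^{-i}$ must have the feedback form $-\frac{1}{b}\bigl(A^i_t Q^{q^i}_t + \frac{1}{2}B^i_t\bigr)$; the Nash property forces $q^i$ to equal this optimizer. Theorem \ref{thmi} then tells us that the triple $\bigl(Q^{q^i}, q^i, \frac{Q^{q^i} Z^{A^i}}{b} + \frac{Z^{B^i}}{2b}\bigr)$ is the unique solution of \eqref{FBSDE_i} and lies automatically in $\mathcal{S}^2(\mathbb{R}) \times \mathcal{S}^2(\mathbb{R}) \times \mathcal{H}^2(\mathbb{R}^d)$. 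Stacking these $n$ identities yields a solution of \eqref{FBSDE} in $\mathcal{S}^2(\mathbb{R}^n) \times \mathcal{S}^2(\mathbb{R}^n) \times \mathcal{H}^2(\mathbb{R}^{n\times d})$ with $Z$ of the asserted form.

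There is essentially no new analytical content: the argument is a book-keeping step that matches each single-agent best-response condition with the corresponding row of the multidimensional FBSDE. The only point requiring care is the implicit integrability upgrade in the converse direction, where one begins with $q \in \mathcal{H}^2(\mathbb{R}^n)$ but concludes with the stronger $\mathcal{S}^2(\mathbb{R}^n)$-regularity of $q$ and $Q^q$; this comes for free from Theorem \ref{thmi}, which already delivers each single-agent triple in the stronger space. The heavy lifting — existence and uniqueness of the single-agent optimum and its FBSDE representation via the Riccati-type component $A^i$ and the linear component $B^i$ — has been done in Theorems \ref{Thm1}, \ref{Thmi}, and \ref{thmi}, and nothing further needs to be proven here.
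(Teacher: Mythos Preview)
Your proposal is correct and follows exactly the paper's own approach: the paper's proof is the single sentence ``The result follows directly from Theorem \ref{Thmi} and Theorem \ref{thmi},'' and what you have written is precisely a careful unpacking of that reduction, matching each row of the coupled FBSDE with the corresponding single-agent best-response characterization.
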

\begin{proof}
The result follows directly from Theorem \ref{Thmi} and  Theorem \ref{thmi}.
\end{proof}
In order to get a Nash equilibrium, it is sufficient to have the existence of solution for FBSDE \eqref{FBSDE}. In this section, we will investigate the solvability for FBSDE \eqref{FBSDE}. An existence and uniqueness result for small time horizon is due to Antonelli \cite{Ant}. Under some assumptions, we get a unique global solution for FBSDE \eqref{FBSDE} which is stated in the following theorem.
\begin{theorem} \label{thm 4.2}
Suppose that $\beta_i>0$, $\lambda_i>0$ and $\lambda_i\sigma^2_t>\frac1{16}a^2b(n-1)$ for all $i=1,\ldots,n$ and $t\in[0,T]$, then FBSDE \eqref{FBSDE} admits a unique solution $(Q^q,q,Z)\in\mathcal{S}^2(\mathbb{R}^n)\times\mathcal{S}^2(\mathbb{R}^n)\times\mathcal{H}^2(\mathbb{R}^{n\times d})$.
\end{theorem}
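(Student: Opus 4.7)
I would establish uniqueness and global existence of \eqref{FBSDE} via a single \emph{a priori} energy identity obtained from It\^o's formula, combined with Antonelli's small-time existence result \cite{Ant} through a continuation-in-parameter argument. The linear-quadratic structure of \eqref{FBSDE}---no diffusion in the forward equation and an affine backward driver in $(Q,q)$---makes the identity close in one step.

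\textbf{Steps 1--2: energy identity and uniqueness.} Let $(Q,q,Z)$ and $(\wt Q,\wt q,\wt Z)$ be two solutions in $\mc S^2(\mb R^n)\times\mc S^2(\mb R^n)\times\mc H^2(\mb R^{n\times d})$, and set $\delta Q:=Q-\wt Q$, $\delta q:=q-\wt q$, $\delta Z:=Z-\wt Z$. Then $d\delta Q^i=\delta q^i\,dt$ with $\delta Q^i_0=0$, and
\[
 d\delta q^i_t = \frac{1}{b}\!\left(\lambda_i\sigma_t^2\,\delta Q^i_t - \frac{a}{2}\!\sum_{j\neq i}\!\delta q^j_t\right)\!dt - \delta Z^i_t\,dW_t,\qquad \delta q^i_T = -\frac{\beta_i}{b}\,\delta Q^i_T.
\]
Applying It\^o to $\delta Q^i\delta q^i$ (the bracket vanishes since $\delta Q$ is of finite variation), summing over $i$, taking expectations (the stochastic integrals are true martingales because $\delta Q\in\mc S^2$ and $\delta Z\in\mc H^2$), and using the terminal and initial conditions yields
\[
 \sum_{i=1}^n \frac{\beta_i}{b}\,E[(\delta Q^i_T)^2]
 + E\!\int_0^T\!\sum_{i=1}^n\!\left[\frac{\lambda_i\sigma_s^2}{b}(\delta Q^i_s)^2 + (\delta q^i_s)^2\right]\!ds
 = \frac{a}{2b}\,E\!\int_0^T\!\sum_{i\neq j}\delta Q^i_s\,\delta q^j_s\,ds.
\]
Young's inequality $|\delta Q^i\delta q^j|\le\tfrac{\eta}{2}(\delta q^j)^2+\tfrac{1}{2\eta}(\delta Q^i)^2$ summed over $i\neq j$, together with $|\{j:j\neq i\}|=n-1$, bounds the right-hand side by multiples of the diagonal quantities on the left. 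The hypothesis $\lambda_i\sigma_t^2>\tfrac{1}{16}a^2 b(n-1)$ is exactly what allows one to tune $\eta$ so that both resulting coefficients on the right are strictly dominated by the matching coefficients on the left; combined with $\beta_i>0$, this forces $\delta Q\equiv 0$ and $\delta q\equiv 0$, after which $\delta Z\equiv 0$ follows from the uniqueness of the martingale representation. This yields uniqueness.

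\textbf{Step 3: existence by continuation.} Interpolate the coupling: for $\rho\in[0,1]$ replace $a$ by $\rho a$ in \eqref{FBSDE}. At $\rho=0$ the $n$ equations decouple into independent single-agent FBSDEs of the form \eqref{FBSDE_i}, each uniquely solvable by Theorem \ref{thmi}; at $\rho=1$ we recover \eqref{FBSDE}. The stability estimate above is uniform in $\rho\in[0,1]$, since the dangerous term on the right of the energy identity only shrinks when $a$ is scaled by $\rho\le 1$. A standard continuation argument in the spirit of Peng--Wu and Hu--Peng, combined with Antonelli's local existence result, then shows that the set $\{\rho\in[0,1]:\text{the $\rho$-system is solvable}\}$ is nonempty, open, and closed in $[0,1]$, hence equals $[0,1]$; the required uniform $\mc S^2\times\mc S^2\times\mc H^2$ bound on the candidate solution itself comes from a variant of the same energy identity in which $\mu$ and $Q^i_0$ appear as source terms.

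\textbf{Main obstacle.} The essential subtlety is the absorption step: the cross-term $\sum_{i\neq j}\delta Q^i\delta q^j$ couples inventories and trading rates across distinct agents and is not manifestly controlled by the diagonal coefficients on the left of the identity. A tight choice of the Young parameter $\eta$ together with careful combinatorial bookkeeping of the $(n-1)$-fold multiplicity is what produces precisely the hypothesis $\lambda_i\sigma_t^2>\tfrac{1}{16}a^2b(n-1)$; any cruder estimate would impose a more restrictive lower bound on the volatility, and any weakening of the hypothesis would allow the right-hand side to dominate and the argument to fail.
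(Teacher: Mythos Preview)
Your approach is correct and essentially the same as the paper's. The paper simply verifies the Peng--Wu monotonicity condition---via exactly the Young-inequality absorption of the cross terms $\sum_{i\neq j}\delta Q^i\delta q^j$ that you describe---and then invokes \cite{PW} as a black box; your energy identity is precisely that monotonicity computation written out, and your continuation-in-parameter argument for existence is the mechanism underlying Peng--Wu's own proof.
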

\begin{proof}
Denoting $\tilde{q}^i_t=-q^i_t$, we have
\begin{equation*}
\begin{cases}
\displaystyle &Q^{q^i}_t=Q^i_0-\int_0^t\tilde{q}^i_sds, ~~ i=1,\ldots,n\\
\displaystyle &\tilde{q}^i_t=\frac{\beta_i}{b}Q^{q^i}_T-\int_t^T\frac{1}{b}\left(-\lambda_i\sigma^2_sQ^{q^i}_s+\frac{\mu_s}{2}-\frac{a\sum_{j\neq i}\tilde{q}^j_s}{2}\right)ds-\int_t^TZ^i_sdW_s, ~~i=1,\ldots,n.
\end{cases}
\end{equation*}
Since it holds that
\begin{align*}
\sum_{i=1}^n\frac{\beta_i}{b}|x_i|^2\geq \min_{1\leq i\leq n}\frac{\beta_i}{b}|x|^2
\end{align*}
and
\begin{align*}
\sum_{i=1}^n\left(-\frac{\lambda_i\sigma^2_t}{b}|x_i|^2-|y_i|^2-\frac{a}{2}\sum_{j\neq i}y_jx_i\right)&\leq \sum_{i=1}^n\left(-\frac{\lambda_i\sigma^2_t}{b}|x_i|^2-|y_i|^2+\frac{a^2(n-1)}{16}|x_i|^2+\sum_{j\neq i}\frac{1}{n-1}|y_j|^2\right)\\
&= \sum_{i=1}^n\left(-\frac{\lambda_i\sigma^2_t}{b}+\frac{a^2(n-1)}{16}\right)|x_i|^2\\
&\leq - \min_{1\leq i\leq n}\inf_{0\leq t\leq T}\left(\frac{\lambda_i\sigma^2_t}{b}-\frac{a^2(n-1)}{16}\right)|x|^2,
\end{align*}
the monotonicity condition in Peng-Wu \cite{PW} is satisfied. Therefore, the solvability follows.
\end{proof}

As a direct consequence of Theorem \ref{thm 4.1} and Theorem \ref{thm 4.2}, we have the following corollary on the existence and uniqueness of a Nash equilibrium.
\begin{corollary}
Suppose that $\beta_i>0$, $\lambda_i>0$ and $\lambda_i\sigma^2_t>\frac1{16}a^2b(n-1)$ for all $i=1,\ldots,n$ and $t\in[0,T]$, then there exists a unique Nash equilibrium.
\end{corollary}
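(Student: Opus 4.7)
The plan is straightforward: the corollary is a direct packaging of Theorem \ref{thm 4.1} (which identifies Nash equilibria with solutions of the FBSDE \eqref{FBSDE}) and Theorem \ref{thm 4.2} (which gives well-posedness of that FBSDE under the stated monotonicity-compatible assumptions). So the proof amounts to invoking the two results in sequence, and there is no substantive new argument to carry out.

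For existence, I would apply Theorem \ref{thm 4.2} under the hypotheses $\beta_i>0$, $\lambda_i>0$, and $\lambda_i\sigma_t^2 > \frac{1}{16}a^2 b(n-1)$ to obtain a unique solution $(Q^q, q, Z) \in \mathcal{S}^2(\mathbb{R}^n) \times \mathcal{S}^2(\mathbb{R}^n) \times \mathcal{H}^2(\mathbb{R}^{n\times d})$ of the FBSDE. Since $\mathcal{S}^2 \subset \mathcal{H}^2$, the first (sufficiency) implication of Theorem \ref{thm 4.1} then immediately guarantees that the $q$-component is a Nash equilibrium.

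For uniqueness, I would pick any Nash equilibrium $\tilde q \in \mathcal{H}^2(\mathbb{R}^n)$. The second (necessity) implication of Theorem \ref{thm 4.1} produces an accompanying $\tilde Z$, built out of the BSDE quantities $Z^{A^i}, Z^{B^i}$ from Proposition \ref{Thm1}, such that $(Q^{\tilde q}, \tilde q, \tilde Z)$ solves the FBSDE \eqref{FBSDE} in $\mathcal{S}^2(\mathbb{R}^n)\times \mathcal{S}^2(\mathbb{R}^n)\times\mathcal{H}^2(\mathbb{R}^{n\times d})$. The uniqueness clause in Theorem \ref{thm 4.2} then forces this triple to coincide with the one produced above, and in particular $\tilde q = q$.

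The only point that warrants a brief sanity check is that the regularity classes line up: a Nash equilibrium is assumed only in $\mathcal{H}^2$, while Theorem \ref{thm 4.2}'s uniqueness is phrased in $\mathcal{S}^2\times\mathcal{S}^2\times\mathcal{H}^2$. This is not an obstacle, because the promotion from $\mathcal{H}^2$ to $\mathcal{S}^2$ is precisely what the second half of Theorem \ref{thm 4.1} delivers via the explicit representation $\tilde Z^i = Q^{\tilde q^i} Z^{A^i}/b + Z^{B^i}/(2b)$ together with the $\mathcal{S}^\infty$- and $\mathcal{S}^2$-regularity of $A^i, B^i$ from Proposition \ref{Thm1}. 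Hence no further work is needed and the corollary follows at once.
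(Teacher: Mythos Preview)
Your proposal is correct and matches the paper's own approach exactly: the corollary is stated there as a direct consequence of Theorem~\ref{thm 4.1} and Theorem~\ref{thm 4.2}, with no additional argument. Your extra remark on the $\mathcal{H}^2$-to-$\mathcal{S}^2$ promotion is a harmless elaboration of what Theorem~\ref{thm 4.1} already packages.
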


\subsection{A Riccati-type equation}
Since FBSDE \eqref{FBSDE} is linear, we will investigate it's solvability through Riccati equations. Indeed, FBSDE \eqref{FBSDE} could be rewritten as
\begin{equation*}
\begin{cases}
\displaystyle &Q^{q}_t=Q_0+\int_0^tq_sds,\\
\displaystyle &q(t)=GQ^{q}_T+\int_t^T\left(\hat{A}_sQ^{q}_s+(-\frac{a}{2b}I_n+\frac{a}{2b}\hat{B})q_s+\hat{C}_s\right)ds+\int_t^TZ_sdW_s
\end{cases}
\end{equation*}
where $G$ is $n\times n$ diagonal matrix with diagonal elements $-\frac{\beta_i}{b}$, $\hat{A}_s$ is $n\times n$ diagonal matrix with diagonal elements $-\frac{\lambda_i\sigma^2_s}{b}$, $I_n$ is $n\times n$ identity matrix, $\hat{B}$ is $n\times n$ matrix whose elements are all equal to $1$ and $\hat{C}_s=\left(\frac{\mu_s}{2b},\ldots,\frac{\mu_s}{2b}\right)^T$.

Suppose that the following holds:
\begin{equation*}
q_t = P_tQ^q_t+p_t, ~~t\in[0,T],
\end{equation*}
with $(P,\Lambda)$ and $(p,\eta)$ being the adapted solutions of the following BSDEs respectively:
\begin{equation*}
\begin{cases}
&dP_t=\Gamma_tdt+\Lambda_tdW_t,\\
&P_T=G
\end{cases}
\end{equation*}
and
\begin{equation*}
\begin{cases}
&dp_t=\xi_tdt+\eta_tdW_t,\\
&p_T=0
\end{cases}
\end{equation*}
where $\Gamma$ and $\xi$ will be chosen later. Applying It\^{o}'s formula, we have the following
\begin{align*}
&\left(\Gamma_tQ^q_t+P_tq_t+\xi_t\right)dt+\left(\Lambda_tQ^q_t+\eta_t\right)dW_t\\
&=dq_t=-\left(\hat{A}_tQ^q_t+\left(-\frac{a}{2b}I_n+\frac{a}{2b}\hat{B}\right)q_t+\hat{C}_t\right)dt-Z_tdW_t.
\end{align*}
Comparing drift and diffusion terms, we should have
\begin{equation*}
\begin{cases}
&\left(\Gamma_t+P^2_t\right)Q^q_t+P_tp_t+\xi_t=-\left(\hat{A}_t+\left(-\frac{a}{2b}I_n+\frac{a}{2b}\hat{B}\right)P_t\right)Q^q_t-\left(\left(-\frac{a}{2b}I_n+\frac{a}{2b}\hat{B}\right)p_t+\hat{C}_t\right),\\
&\Lambda_tQ^q_t+\eta_t=-Z_t.
\end{cases}
\end{equation*}
Therefore, we will take
\begin{equation*}
\begin{cases}
&\Gamma = - \hat{A}_t-\left(-\frac{a}{2b}I_n+\frac{a}{2b}\hat{B}\right)P_t-P^2_t\\
&\xi_t = -\left(-\frac{a}{2b}I_n+\frac{a}{2b}\hat{B}+P_t\right)p_t-\hat{C}_t
\end{cases}
\end{equation*}
Thus, we obtain the following result.
\begin{proposition}\label{Ric}
Suppose that the following BSDE
\begin{equation}\label{Riccati}
\begin{cases}
&dP_t = \left(- \hat{A}_t-\left(-\frac{a}{2b}I_n+\frac{a}{2b}\hat{B}\right)P_t-P^2_t\right)dt + \Lambda_tdW_t,\\
&P_T=G
\end{cases}
\end{equation}
admits an adapted solution $(P,\Lambda)\in\mathcal{S}^m(\mathbb{R}^{n\times n})\times\mathcal{H}^m(\mathbb{R}^{n\times n\times d})$ for all $m\geq 1$. Suppose moreover that the following BSDE
\begin{equation}\label{Riccati1}
\begin{cases}
&dp_t=\left(-\left(-\frac{a}{2b}I_n+\frac{a}{2b}\hat{B}+P_t\right)p_t-\hat{C}_t\right)dt+\eta_tdW_t,\\
&p_T=0
\end{cases}
\end{equation}
admits a unique adapted solution $(p,\eta)\in\mathcal{S}^m(\mathbb{R}^{n})\times\mathcal{H}^m(\mathbb{R}^{n\times d})$ for all $m\geq 1$ and that the unique solution of following random ODE,
\begin{equation}\label{SDE}
Q^q_t=Q_0+\int_0^t\left(P_sQ^q_s+p_s\right)ds
\end{equation}
belongs to $\mathcal{S}^m(\mathbb{R}^{n})$ for all $m\geq 1$. Then FBSDE \eqref{FBSDE} admits an adapted solution $(Q^q,q,Z)\in\mathcal{S}^2(\mathbb{R}^n)\times\mathcal{S}^2(\mathbb{R}^n)\times\mathcal{H}^2(\mathbb{R}^{n\times d})$ such that $q_t = P_tQ^q_t+p_t$ and $Z_t=-\Lambda_tQ^q_t-\eta_t$.
\end{proposition}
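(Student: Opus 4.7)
The plan is a direct verification of the Riccati ansatz. Define $Q^q$ as the solution of the random ODE~\eqref{SDE} (whose existence and integrability is part of the hypothesis), and then set $q_t := P_tQ^q_t + p_t$ and $Z_t := -\Lambda_tQ^q_t - \eta_t$. By the very form of~\eqref{SDE}, the forward equation $Q^q_t = Q_0 + \int_0^t q_s\,ds$ holds automatically, since its drift is $P_sQ^q_s + p_s = q_s$.

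The core step is to verify that $q$ satisfies the backward equation in~\eqref{FBSDE}. I would apply It\^o's formula to $P_tQ^q_t$, noting that $Q^q$ is absolutely continuous and hence has no martingale part, so the cross-variation term vanishes; adding $dp_t = \xi_t\,dt + \eta_t\,dW_t$ one obtains
\begin{align*}
dq_t = \bigl((\Gamma_t + P_t^2)Q^q_t + P_tp_t + \xi_t\bigr)\,dt + \bigl(\Lambda_tQ^q_t + \eta_t\bigr)\,dW_t.
\end{align*}
Plugging in the prescribed
$\Gamma_t = -\hat A_t - (-\tfrac{a}{2b}I_n + \tfrac{a}{2b}\hat B)P_t - P_t^2$
and $\xi_t = -(-\tfrac{a}{2b}I_n + \tfrac{a}{2b}\hat B + P_t)p_t - \hat C_t$ from \eqref{Riccati} and \eqref{Riccati1}, and rewriting $P_tQ^q_t + p_t$ as $q_t$ to repackage the drift, everything collapses to
\begin{align*}
dq_t = -\bigl(\hat A_tQ^q_t + (-\tfrac{a}{2b}I_n + \tfrac{a}{2b}\hat B)q_t + \hat C_t\bigr)\,dt - Z_t\,dW_t,
\end{align*}
which is the differential form of the backward equation in \eqref{FBSDE}. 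The terminal condition $q_T = GQ^q_T$ is immediate from $P_T = G$ and $p_T = 0$.

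Finally I would check integrability. Since $(P,\Lambda) \in \mathcal{S}^m(\mathbb{R}^{n\times n}) \times \mathcal{H}^m(\mathbb{R}^{n\times n\times d})$ and $(p,\eta) \in \mathcal{S}^m(\mathbb{R}^n) \times \mathcal{H}^m(\mathbb{R}^{n\times d})$ for every $m \geq 1$, and $Q^q \in \mathcal{S}^m(\mathbb{R}^n)$ for every $m \geq 1$ by hypothesis, a single application of H\"older's inequality gives $q = PQ^q + p \in \mathcal{S}^2(\mathbb{R}^n)$ and $Z = -\Lambda Q^q - \eta \in \mathcal{H}^2(\mathbb{R}^{n\times d})$. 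There is no substantial obstacle: the algebra matching drifts was already carried out in the derivation preceding the proposition, so the proof is genuinely just a bookkeeping verification that the ansatz produces a solution, plus the short integrability check just described.
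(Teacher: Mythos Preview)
Your proposal is correct and follows exactly the approach the paper itself takes: the derivation immediately preceding the proposition carries out precisely this It\^o computation and drift-matching, and the proposition is then stated without a separate formal proof because the verification is already done. Your write-up simply repackages that derivation in the forward direction (verify the ansatz solves the FBSDE) and adds the brief H\"older argument for the $\mathcal{S}^2\times\mathcal{S}^2\times\mathcal{H}^2$ integrability, which the paper leaves implicit.
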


\subsection{When $\sigma_t=\sigma$ and $\mu_t=\mu$}
In the current case, the FBSDE \eqref{FBSDE} takes the following form:
\begin{equation*}
\begin{cases}
\displaystyle &Q^{q}_t=Q_0+\int_0^tq_sds,\\
\displaystyle &q_t=GQ^{q}_T+\int_t^T\left(\hat{A}Q^{q}_s+(-\frac{a}{2b}I_n+\frac{a}{2b}\hat{B})q_s+\hat{C}\right)ds
\end{cases}
\end{equation*}
where $G$ is $n\times n$ diagonal matrix with diagonal elements $-\frac{\beta_i}{b}$, $\hat{A}$ is $n\times n$ diagonal matrix with diagonal elements $-\frac{\lambda_i\sigma^2}{b}$, $I_n$ is $n\times n$ identity matrix, $\hat{B}$ is $n\times n$ matrix whose elements are all equal to $1$ and $\hat{C}=\left(\frac{\mu}{2b},\ldots,\frac{\mu}{2b}\right)^T$.
By denoting $\tilde{A}=\frac{a}{2b}\left(I_n-\hat{B}\right)$, we obtain the following equivalent second order inhomogeneous ODE
\begin{equation*}
    Q^{\prime \prime} = \tilde{A} Q^\prime - \hat{A} Q -\hat{C} \quad \text{equivalent to}\quad\Lambda^\prime = M \Lambda + N
\end{equation*}
where
\begin{align*}
    \Lambda & =
    \begin{bmatrix}
        Q^\prime\\
        Q
    \end{bmatrix}
            &
    M & =
    \begin{bmatrix}
        \tilde{A} & -\hat{A} \\
        I_n & 0
    \end{bmatrix}
      &
    N & =
    \begin{bmatrix}
        -\hat{C}\\
        0
    \end{bmatrix}
\end{align*}
Since $M$ is invertible, the solution is given by
\begin{equation*}
    \Lambda = \exp\left( t M \right)
    \begin{bmatrix}
        \xi_1\\
        \xi_2
    \end{bmatrix}
    +\int_{0}^{t}\exp\left( s M \right) N ds = \exp\left( t M \right)
    \begin{bmatrix}
        \xi_1\\
        \xi_2
    \end{bmatrix}
    +\left( \exp\left( tM \right) - I_{2n} \right)M^{-1} N
\end{equation*}
where $\xi_1, \xi_2$ in $\mathbb{R}^{2n}$ is a vector to be determined by the conditions:
\begin{equation*}
    \Lambda[n+1, 2n](0)=Q(0) = Q_0, \quad \text{and}\quad \Lambda[1, n](T) = G Q(T)
\end{equation*}
It follows that $\xi_2 = Q_0$.
Hence the second condition is given by
\begin{multline*}
    \exp\left( T M \right)
    \begin{bmatrix}
        \xi_1\\
        Q_0
    \end{bmatrix}
    +\left( \exp\left( TM \right) - I_{2n} \right) M^{-1}N \\
    =
    \begin{bmatrix}
        0 & G \\
        0 & I_n
    \end{bmatrix}
    \left[
    \exp\left( T M \right)
    \begin{bmatrix}
        \xi_1\\
        Q_0
    \end{bmatrix}
    +\left( \exp\left( TM \right) - I_{2n} \right)M^{-1} N \right]
\end{multline*}
equivalent to
\begin{equation*}
    \begin{bmatrix}
        I_n & -G
    \end{bmatrix}
    \exp\left( T M \right)
    \begin{bmatrix}
        \xi_1\\
        Q_0
    \end{bmatrix}
    =
    \begin{bmatrix}
        -I_n & G
    \end{bmatrix}
    \left( \exp\left( TM \right) - I_{2n} \right)M^{-1} N
\end{equation*}
Hence, denoting by
\begin{equation*}
    \exp\left( TM \right)=
    \begin{bmatrix}
        E_1 & E_2\\
        E_3 & E_4
    \end{bmatrix}
\end{equation*}
it follows that $\xi_1$ is given by
\begin{equation*}
    \xi_1 = \left(E_1 - GE_3 \right)^{-1}
    \left[
        \begin{bmatrix}
            -I_n & G
        \end{bmatrix}
        \left( \exp\left( TM \right) - I_{2n} \right)M^{-1} N - \left(E_2 - GE_4 \right) Q_0
    \right].
\end{equation*}
\subsubsection{Numerical results}
Throughout we consider the following set of parameters

\begin{itemize}
    \item Market parameters
        \begin{itemize}
            \item drift: $\mu = 2\%$
            \item vol: $\sigma = 20\%$
            \item Maturity: $T=1$
            \item price impact: $a = 1\%$
            \item slippage: $b = 1\%$
        \end{itemize}
    \item 3 traders:
        \begin{itemize}
            \item Risk aversion: $\alpha = (1, 0.5, 0.25)$, $\lambda = (1, 0.5, 0.25) $,
            \item Position to liquidate: $Q = (1, 1, 0.5)$
        \end{itemize}
\end{itemize}
\begin{itemize}
    \item Dependence on the drift
        \begin{figure}[H]
            \centering
            \subfigure{\includegraphics[width=0.32\textwidth]{./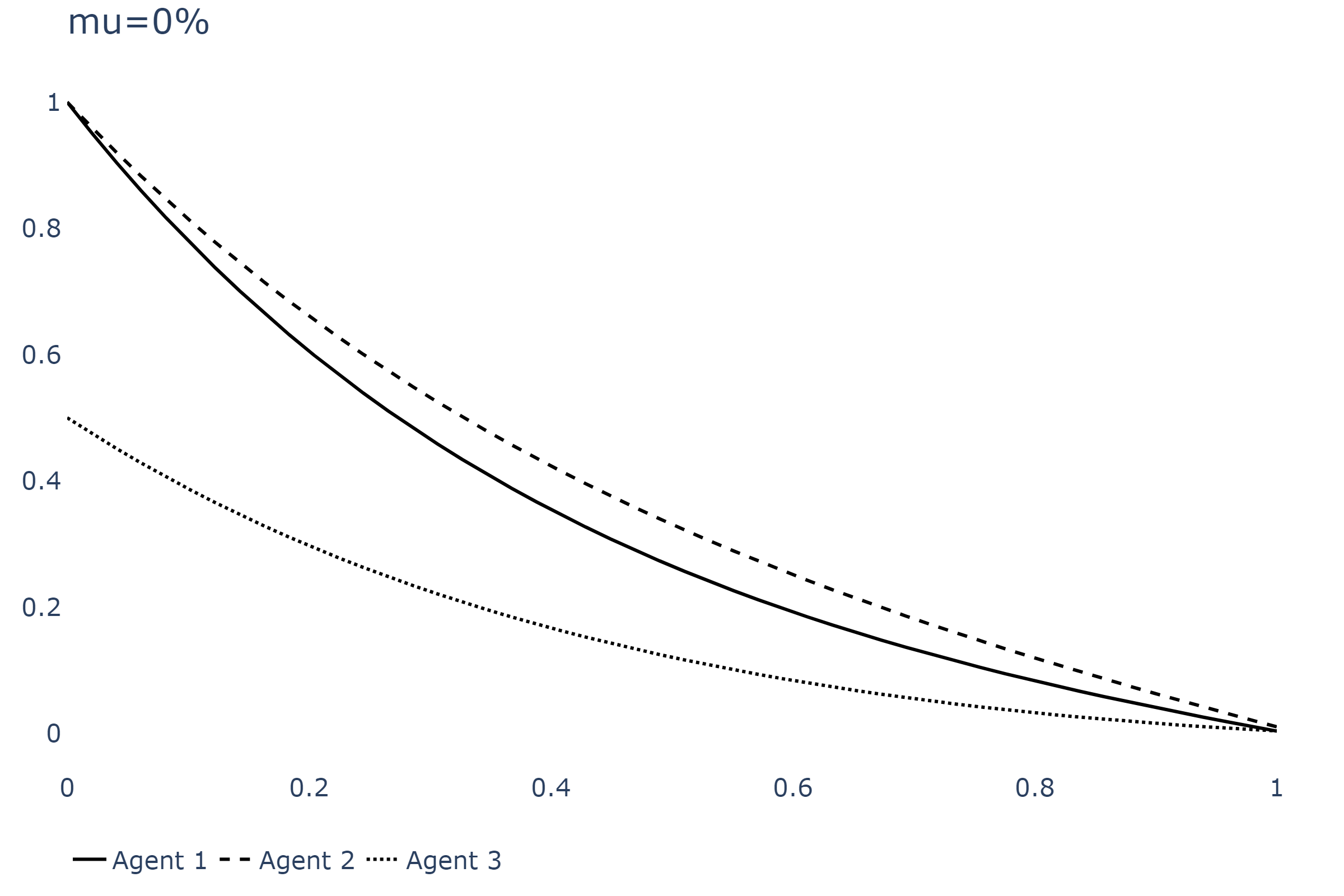}}
            \subfigure{\includegraphics[width=0.32\textwidth]{./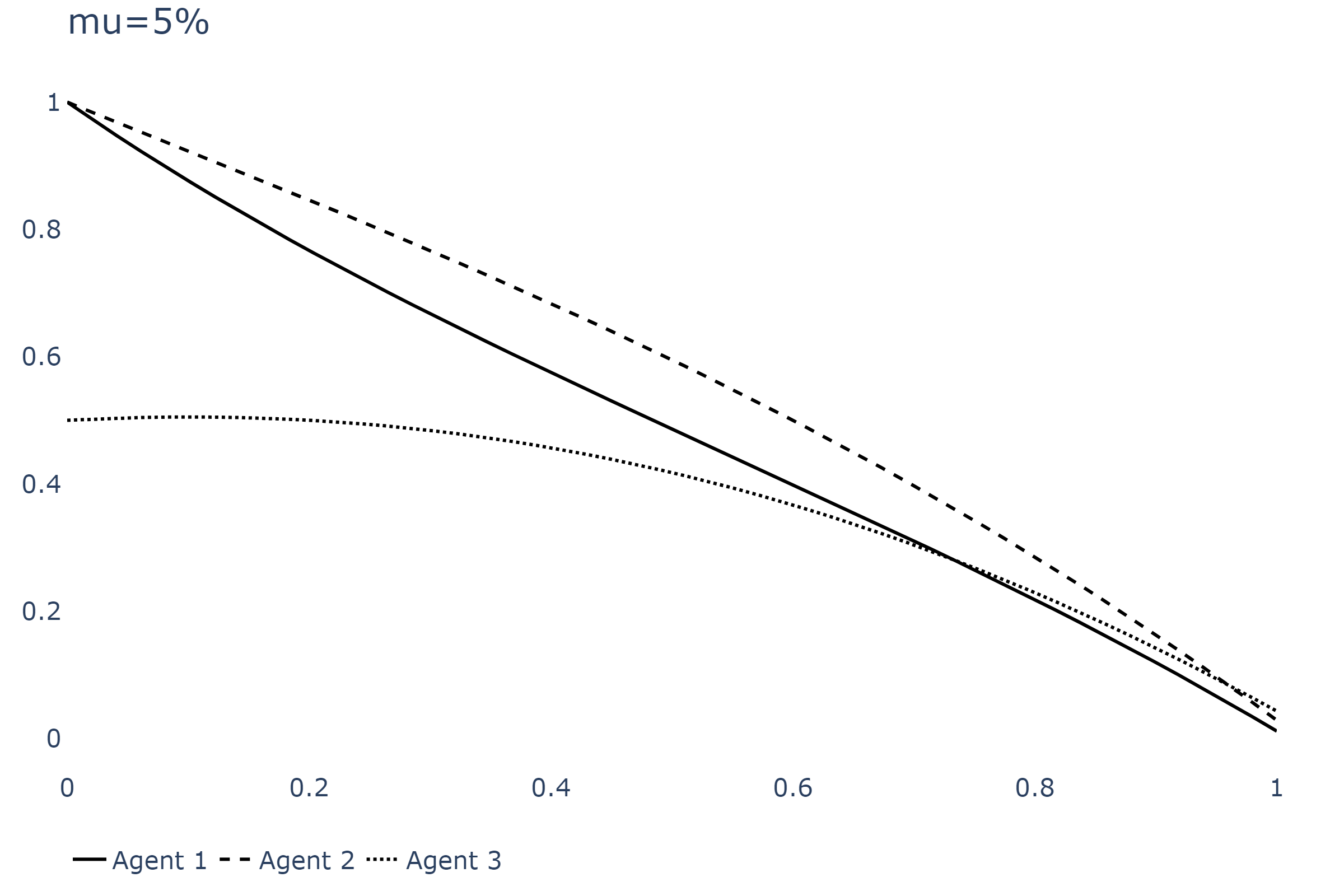}}
            \subfigure{\includegraphics[width=0.32\textwidth]{./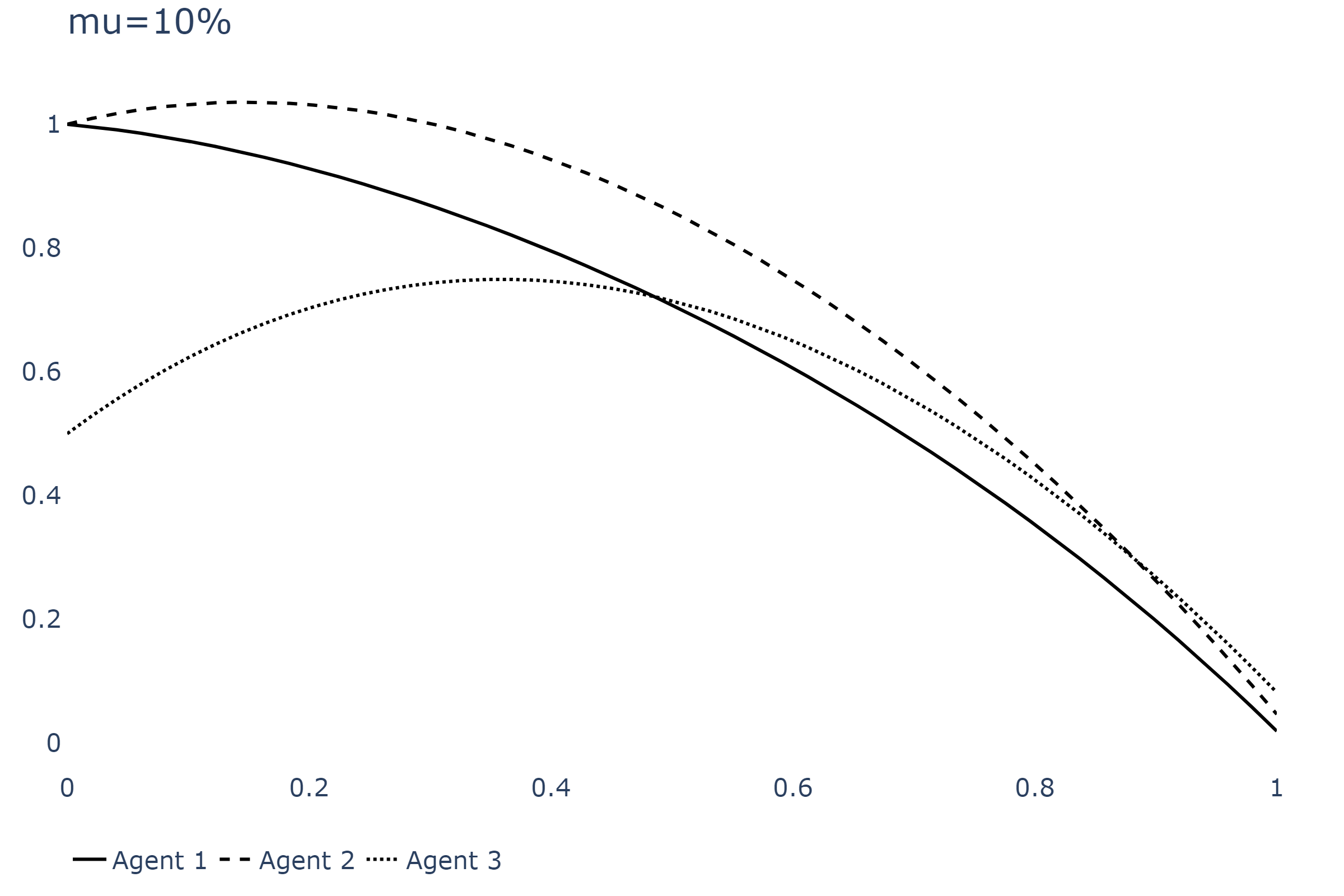}}
            \caption{Plot of the three agents'inventory for different drift values}
            \label{fig:mudep}
        \end{figure}
        As the drift increases, the agents tend to liquidate slowly or even start buying at the beginning to benefit from the future mean return which will compensate to the liquidation cost.
    \item Dependence on the volatility
        \begin{figure}[H]
            \centering
            \subfigure{\includegraphics[width=0.32\textwidth]{./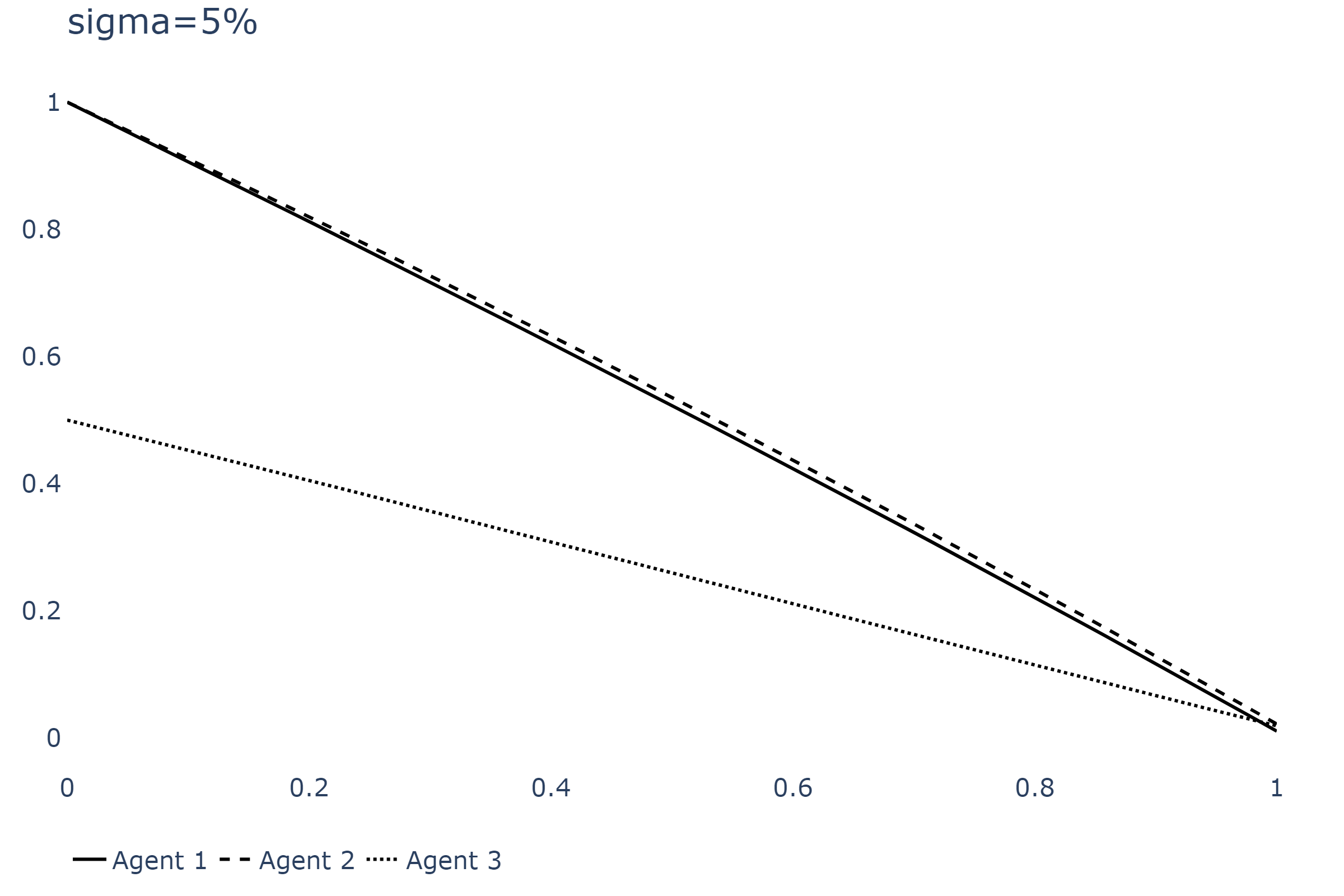}}
            \subfigure{\includegraphics[width=0.32\textwidth]{./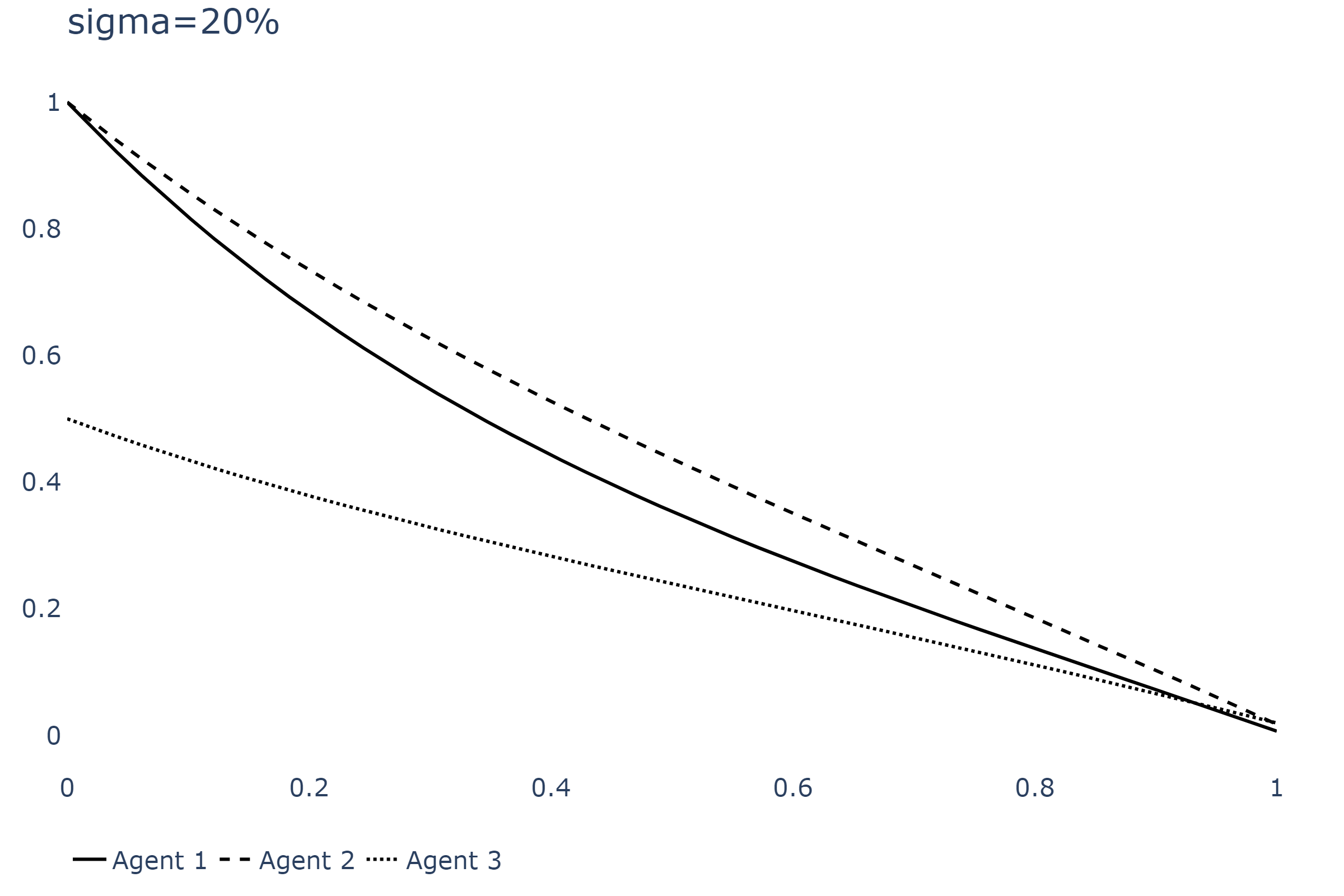}}
            \subfigure{\includegraphics[width=0.32\textwidth]{./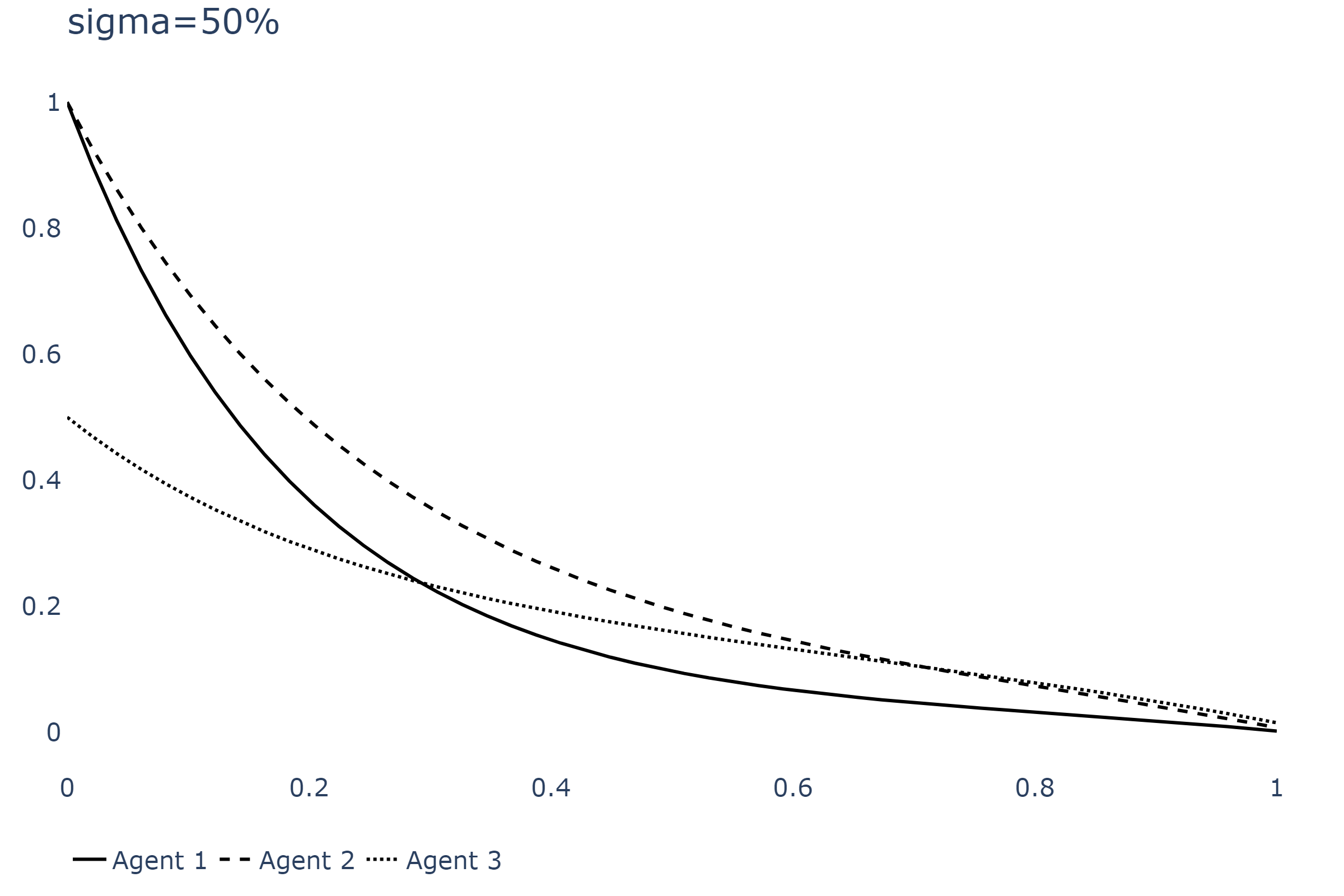}}
            \caption{Plot of the three agents'inventory for different volatility values}
            \label{fig:sigdep}
        \end{figure}
        As the volatility increases, the agents tend to liquidate quickly at the beginning to reduce the liquidation risk.
    \item Dependence on $a$
        \begin{figure}[H]
            \centering
            \subfigure{\includegraphics[width=0.32\textwidth]{./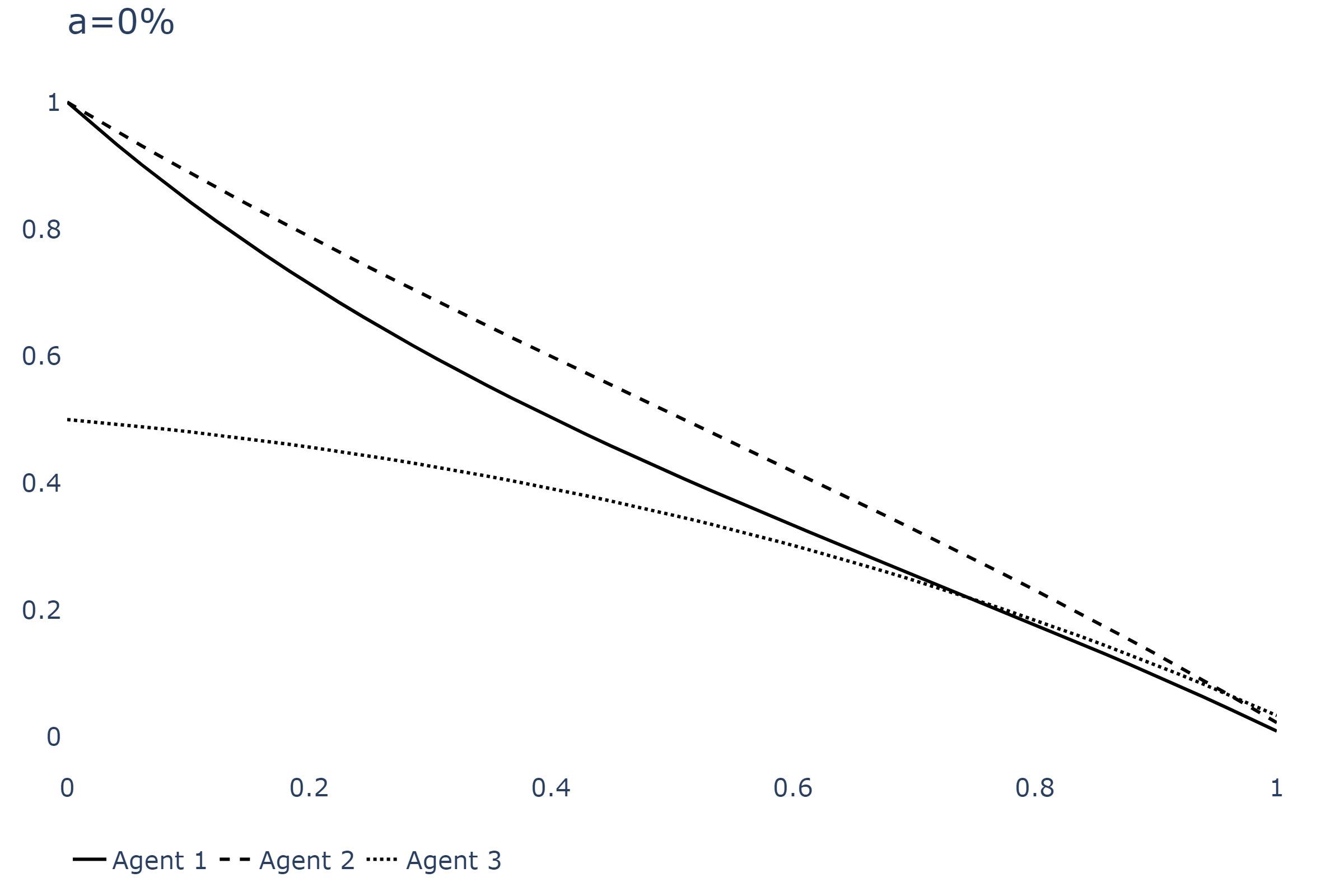}}
            \subfigure{\includegraphics[width=0.32\textwidth]{./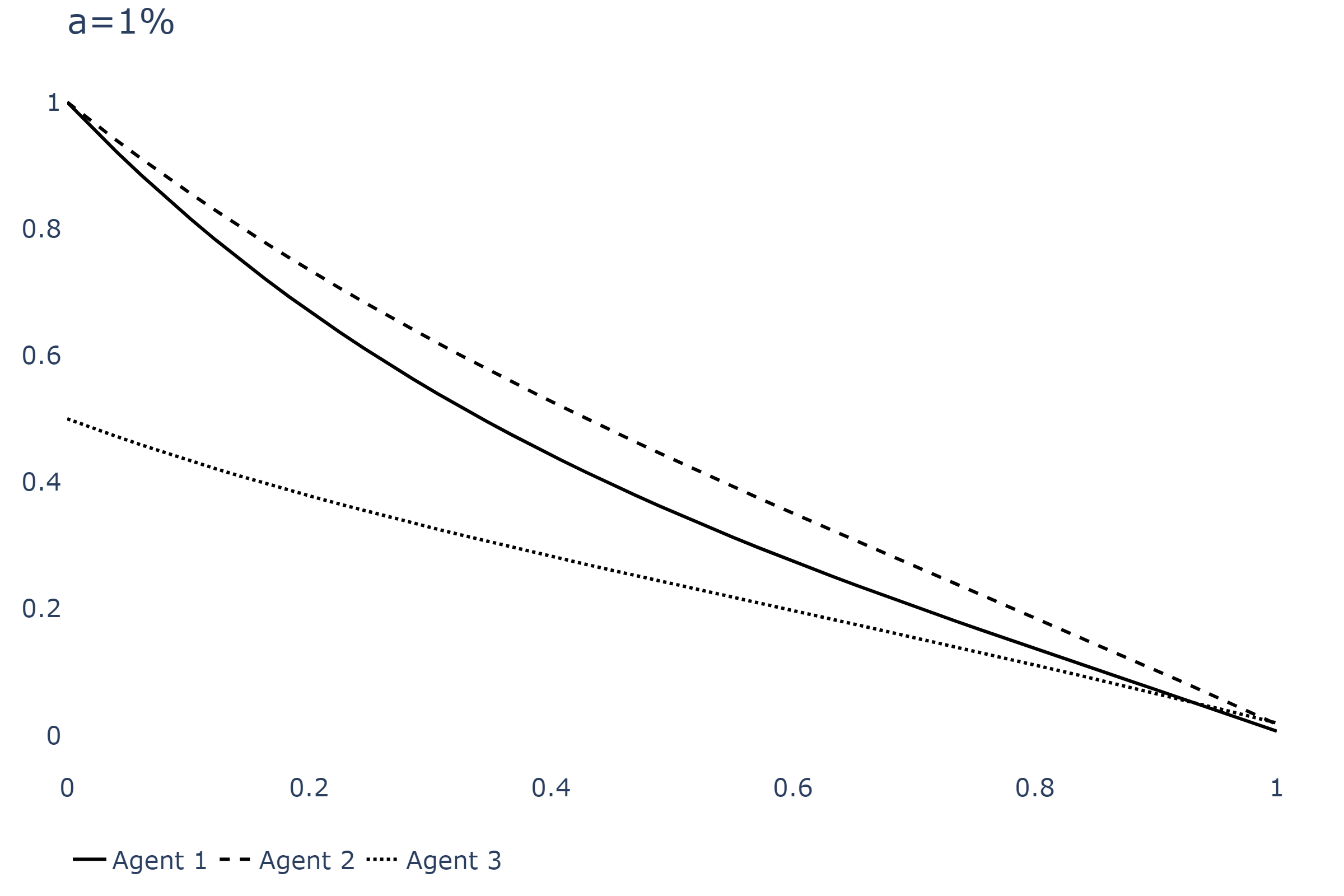}}
            \subfigure{\includegraphics[width=0.32\textwidth]{./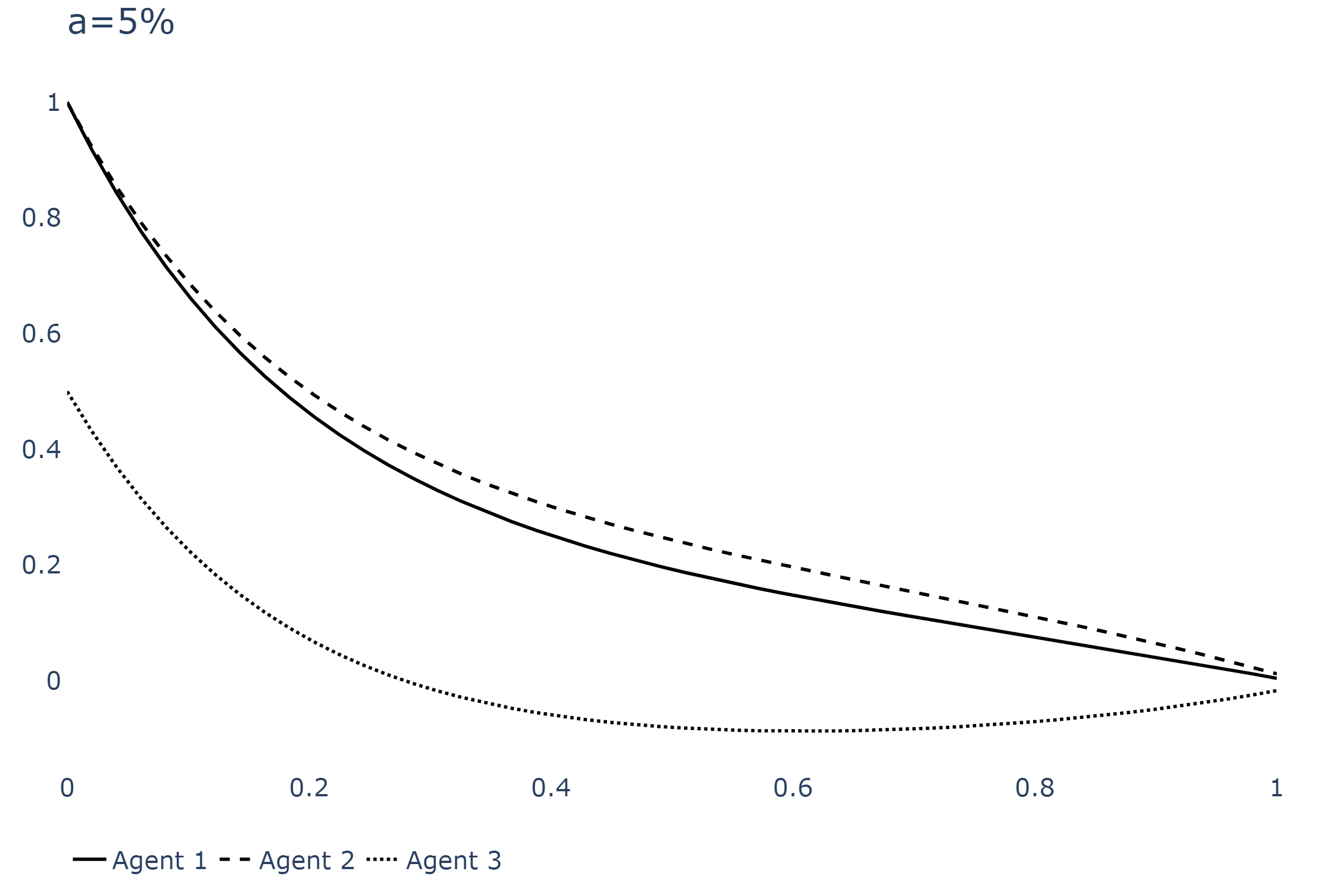}}
            \caption{Plot of the three agents'inventory for different values of price impact}
            \label{fig:adep}
        \end{figure}
        As the permanent market impact increases, the agents tend to liquidate quickly at the beginning to reduce the liquidation cost. For high permanent market impact, the agent with smaller initial inventory tend to short sell and reliquidate to make profit which will compensate to the liquidation cost.
    \item Dependence on $b$
        \begin{figure}[H]
            \centering
            \subfigure{\includegraphics[width=0.32\textwidth]{./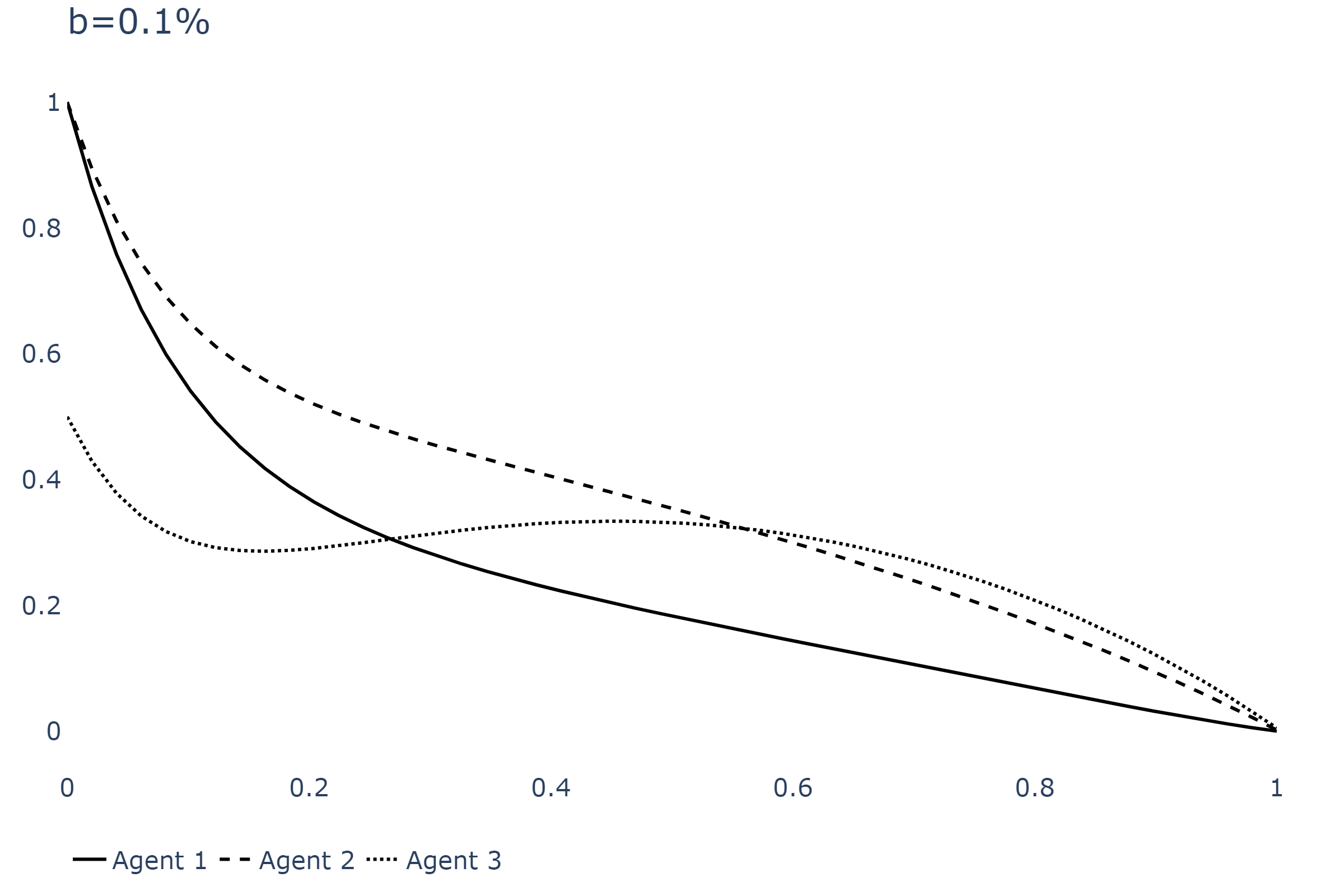}}
            \subfigure{\includegraphics[width=0.32\textwidth]{./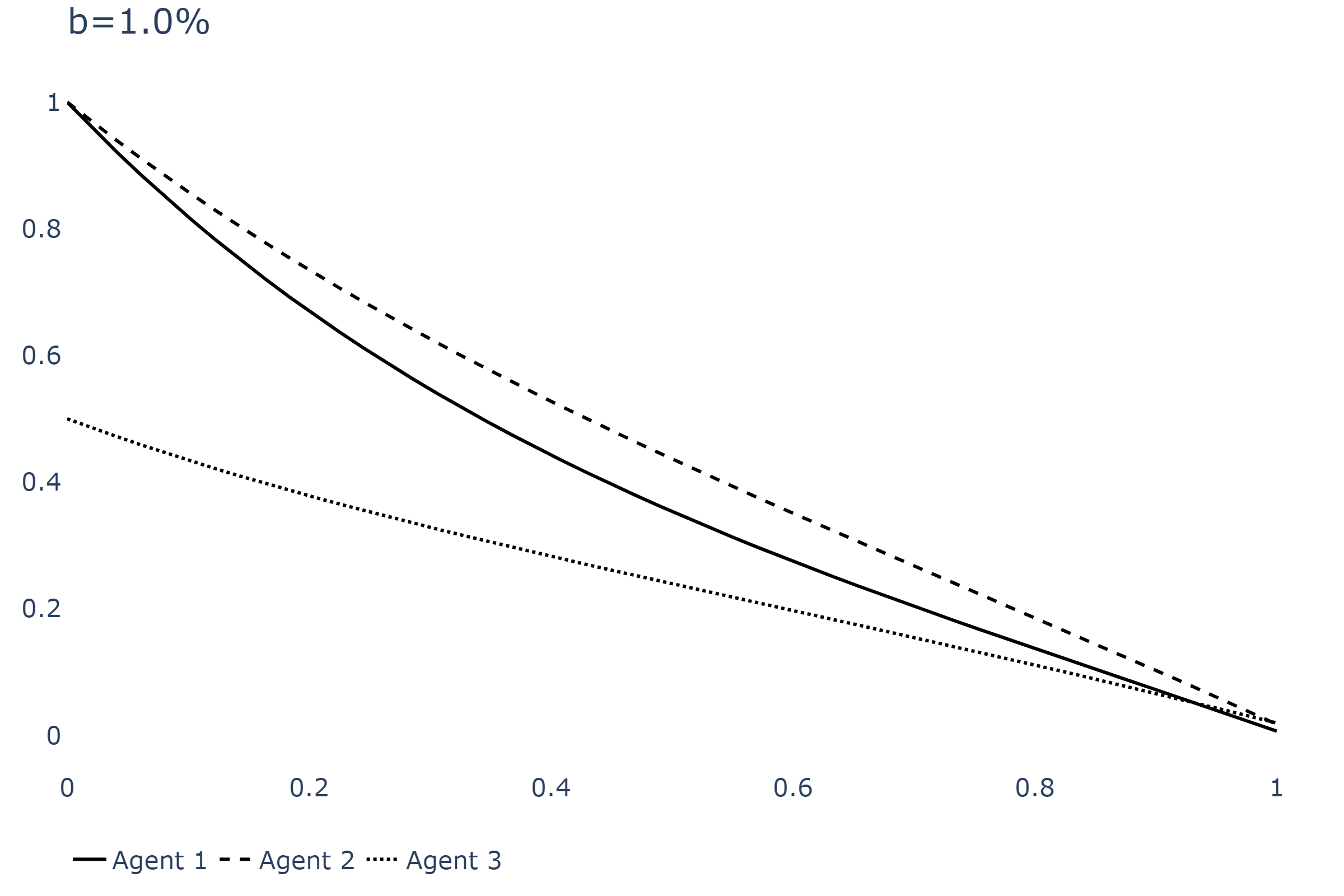}}
            \subfigure{\includegraphics[width=0.32\textwidth]{./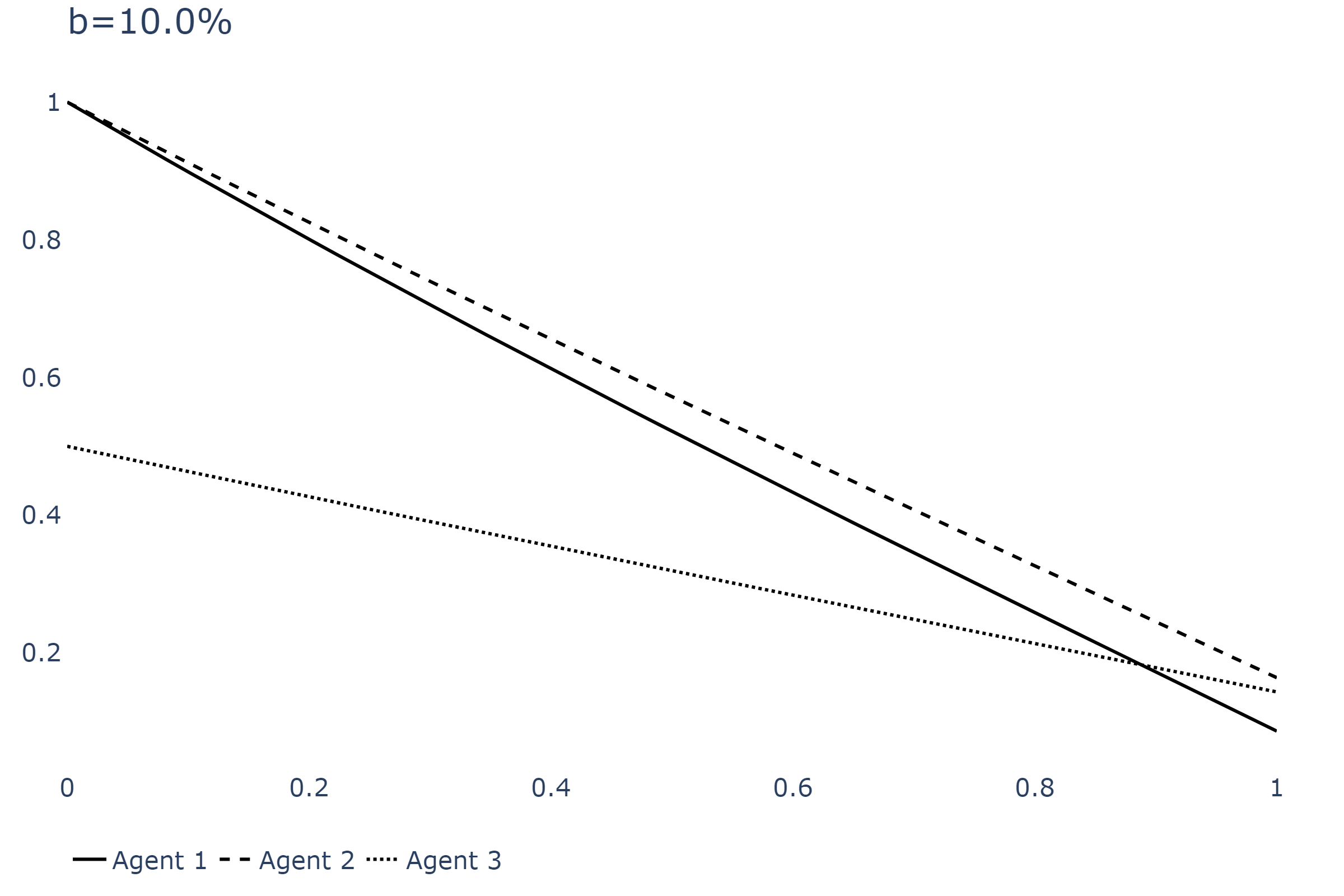}}
            \caption{Plot of the three agents'inventory for different values of slippage effect}
            \label{fig:bdep}
        \end{figure}
        For small slippage, the agent with smaller initial inventory tend to vary between liquidation and purchasing to make profit which will compensate to the liquidation cost.
    \item Dependence on $\alpha$ joint magnitude
        \begin{figure}[H]
            \centering
            \subfigure{\includegraphics[width=0.32\textwidth]{./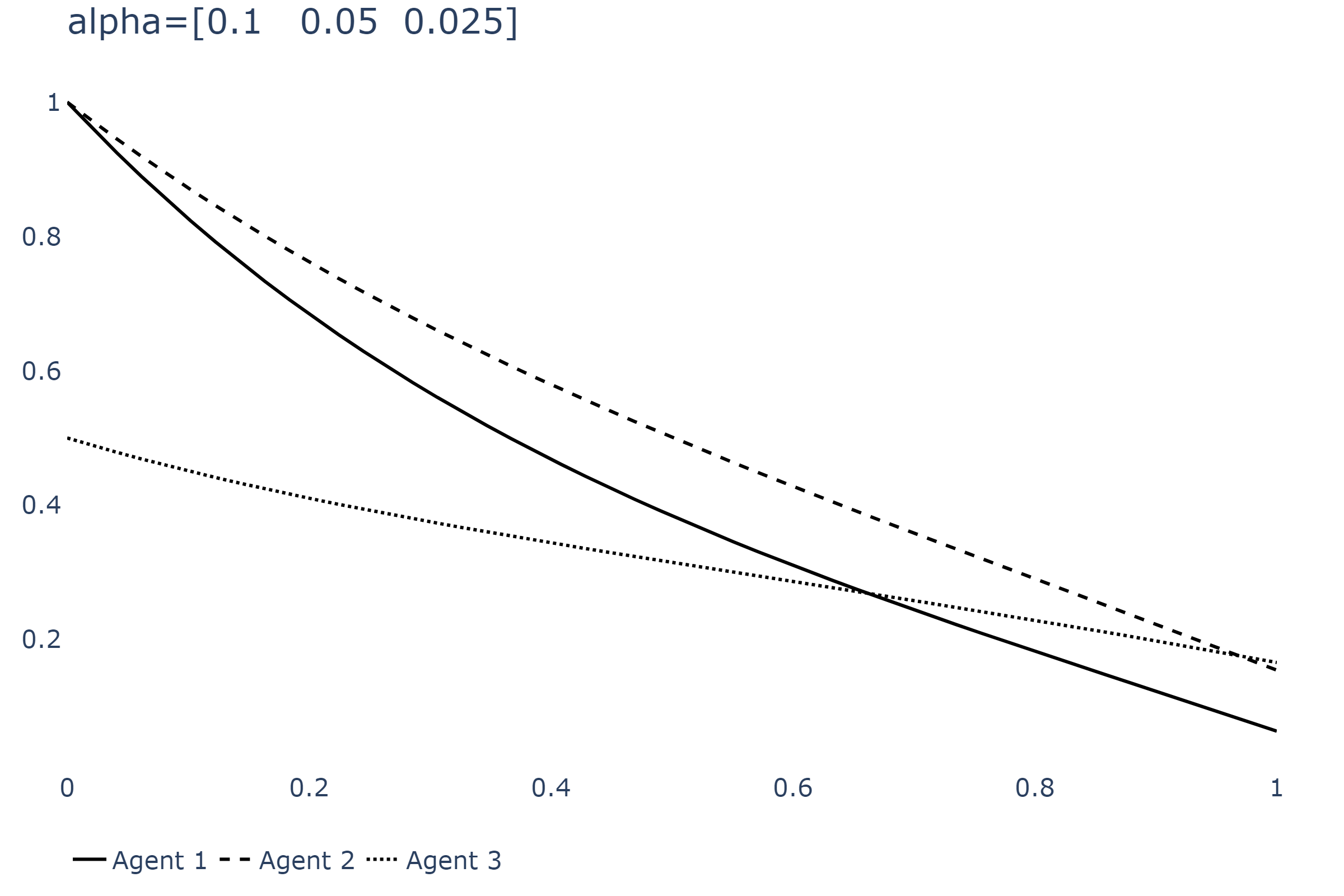}}
            \subfigure{\includegraphics[width=0.32\textwidth]{./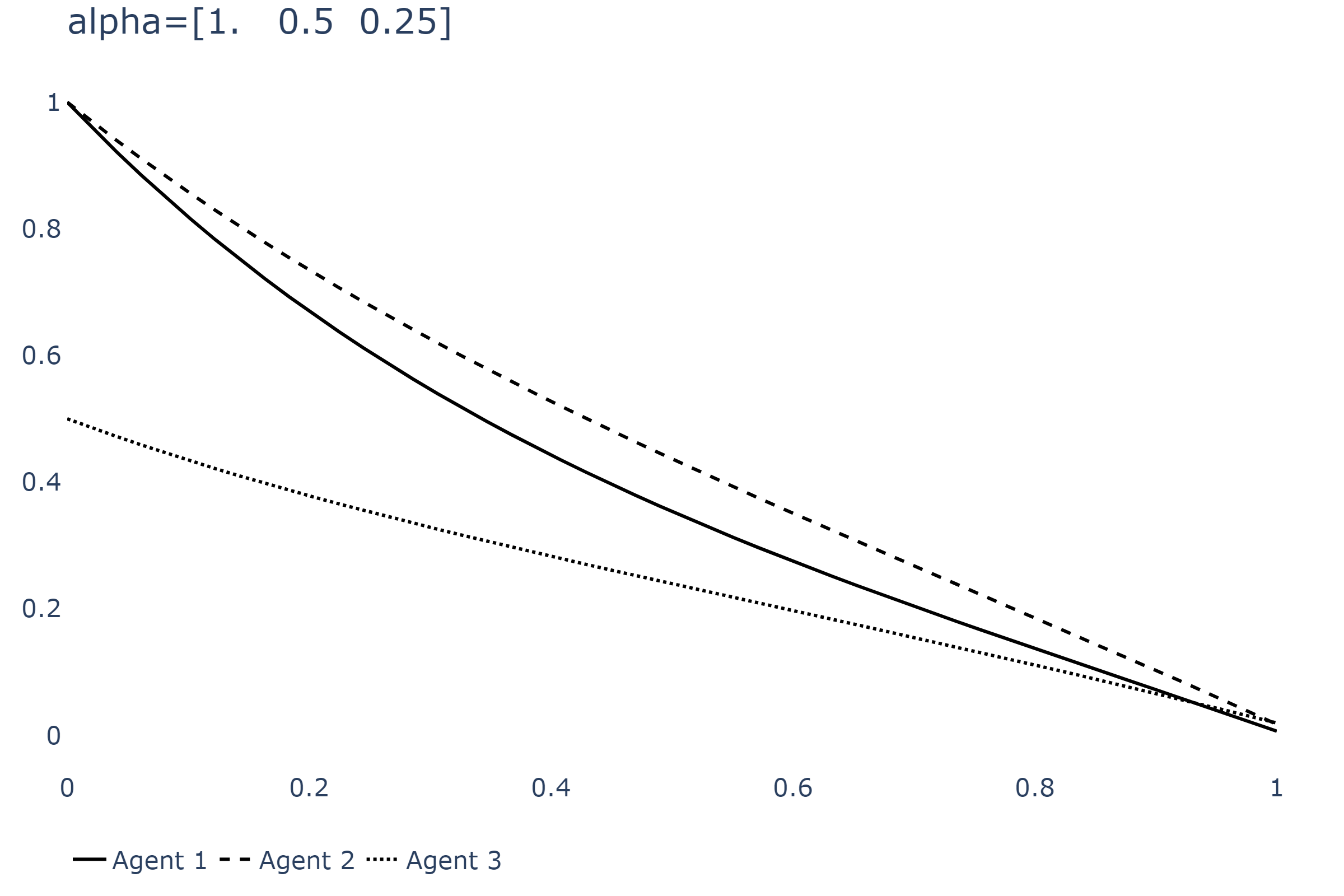}}
            \subfigure{\includegraphics[width=0.32\textwidth]{./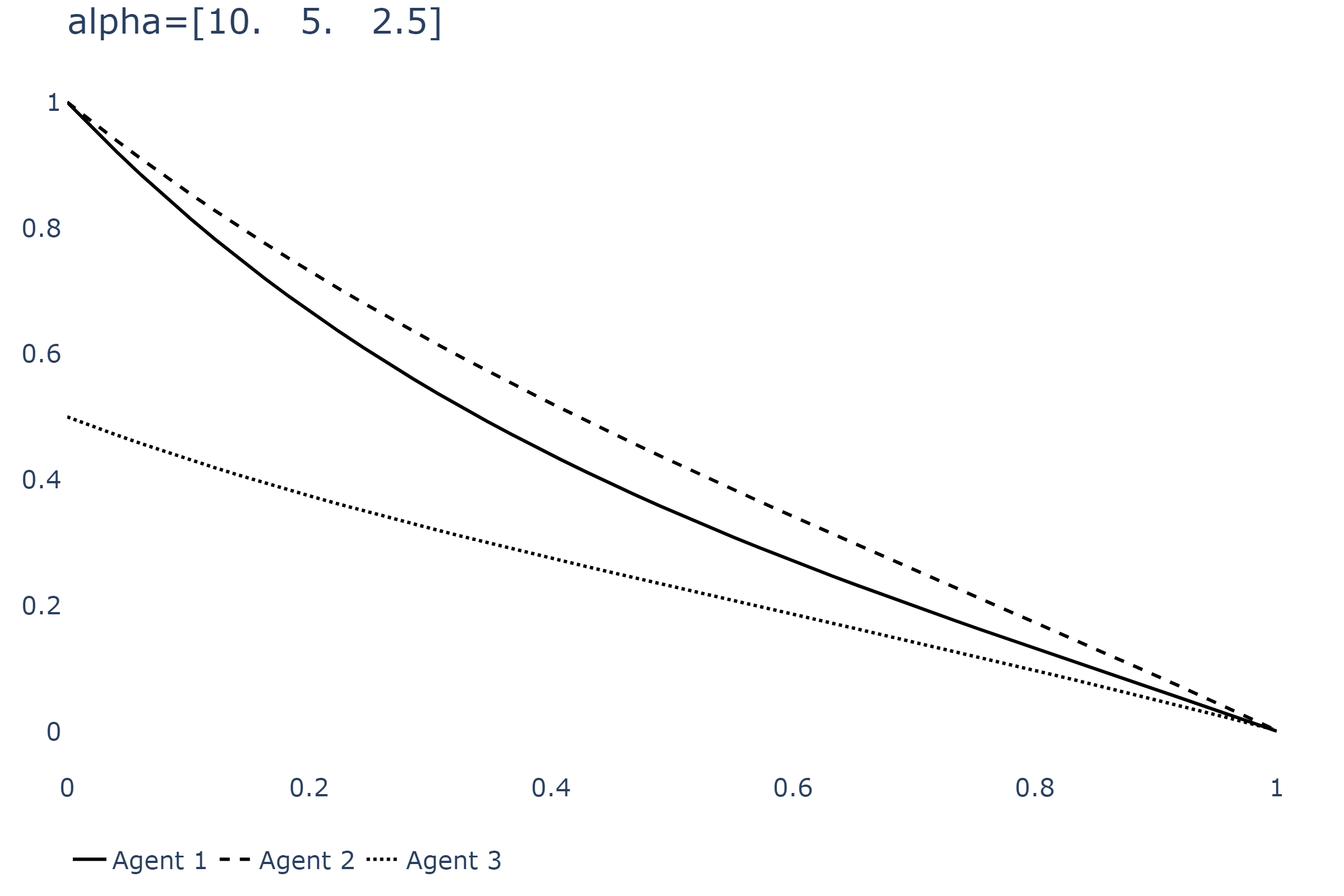}}
            \caption{Plot of the three agents'inventory for different values of risk aversion on terminal value}
            \label{fig:alphadep}
        \end{figure}
    \item Dependence on $\alpha$ different for the first agent
        \begin{figure}[H]
            \centering
            \subfigure{\includegraphics[width=0.32\textwidth]{./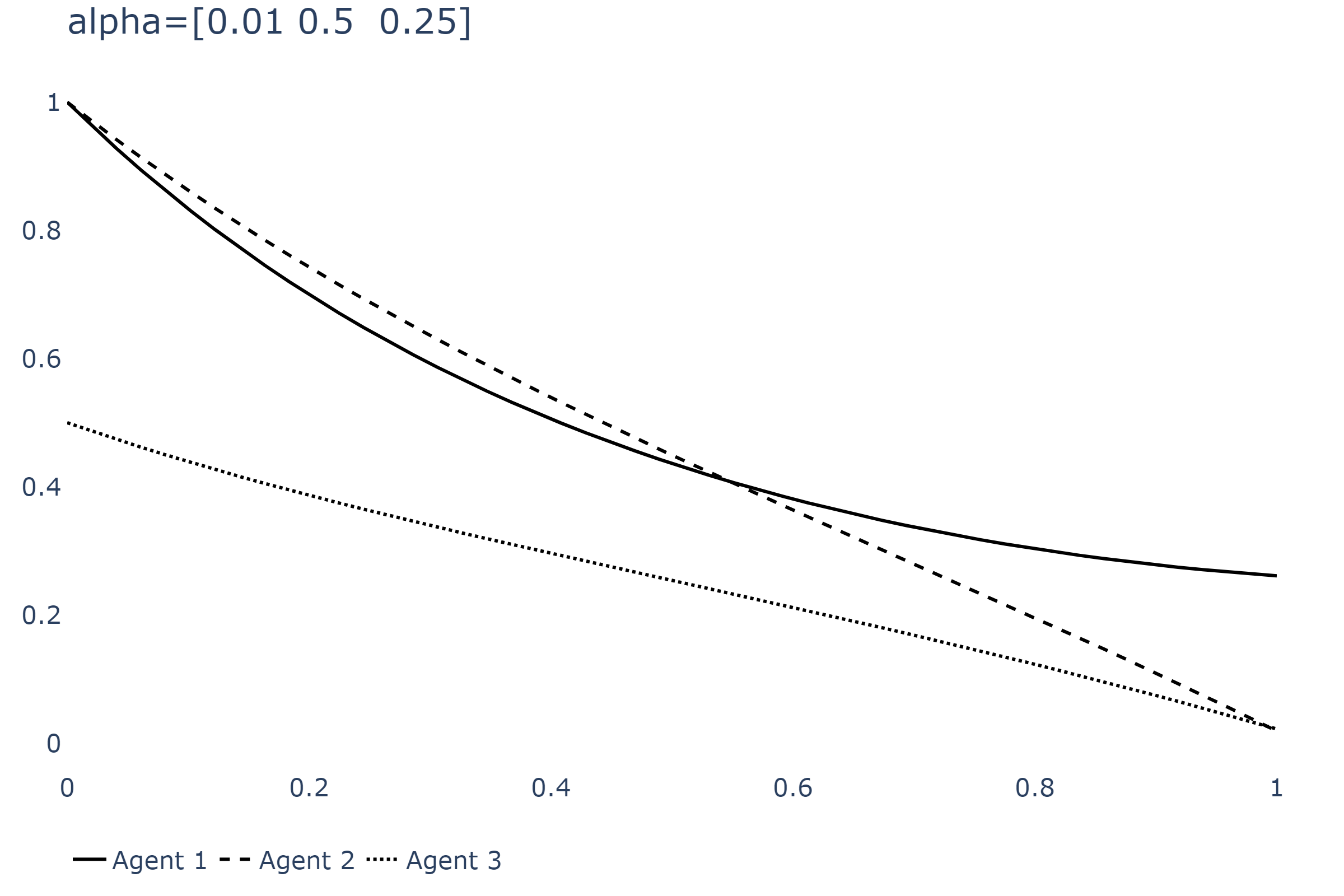}}
            \subfigure{\includegraphics[width=0.32\textwidth]{./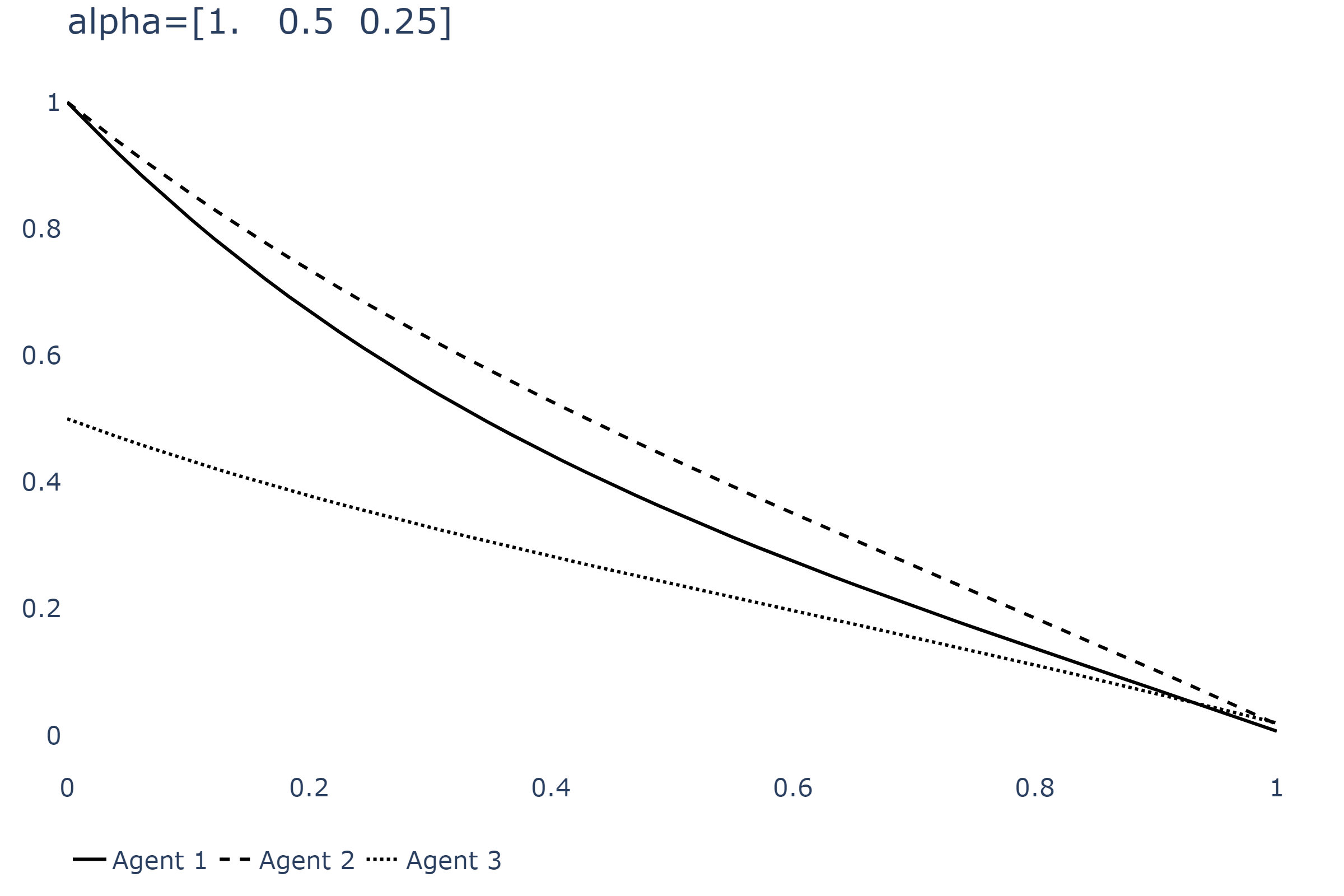}}
            \subfigure{\includegraphics[width=0.32\textwidth]{./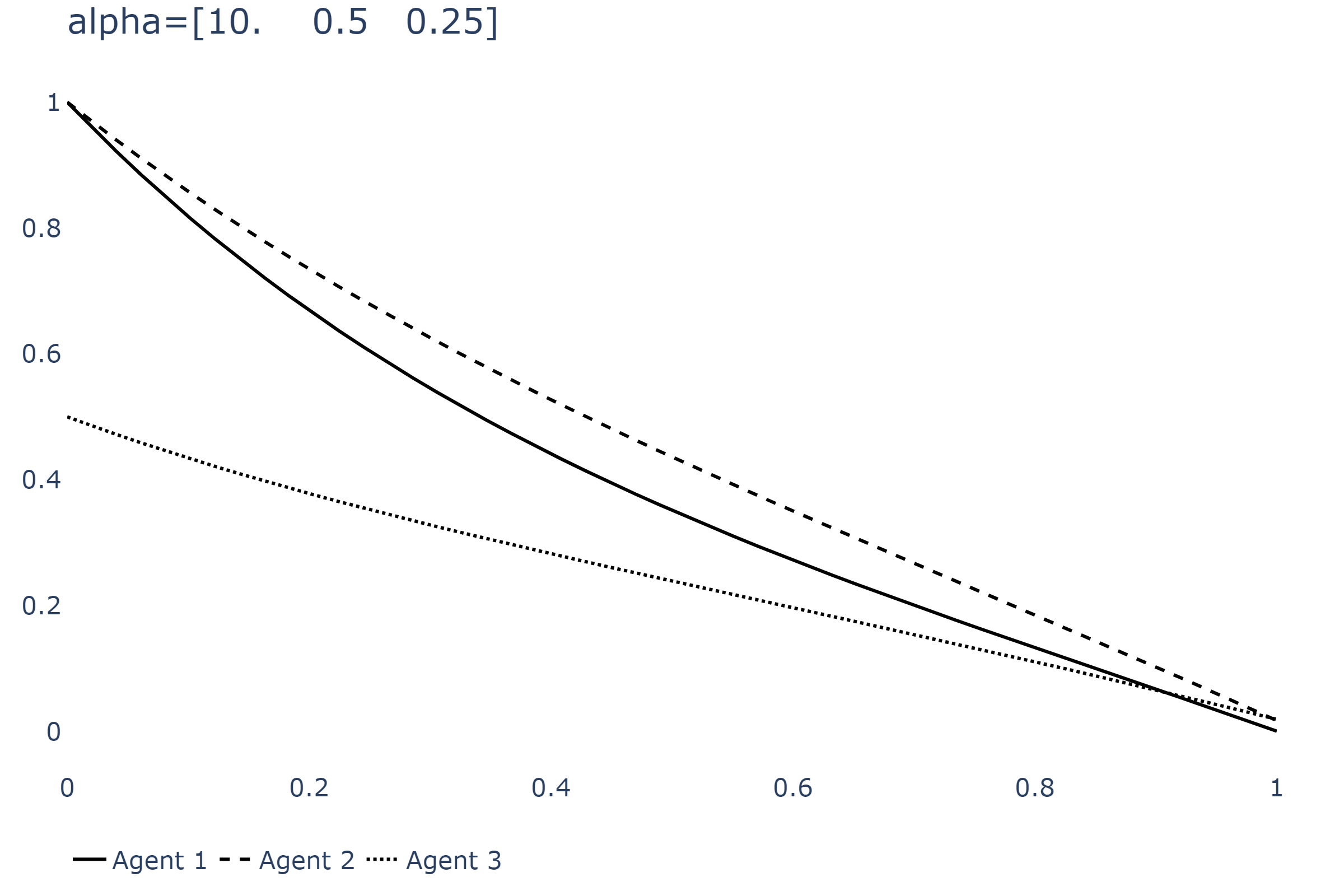}}
            \caption{Plot of the three agents'inventory for different values of risk aversion on terminal value}
            \label{fig:alphadep_unbal}
        \end{figure}
    \item Dependence on $\lambda$ joint magnitude
        \begin{figure}[H]
            \centering
            \subfigure{\includegraphics[width=0.32\textwidth]{./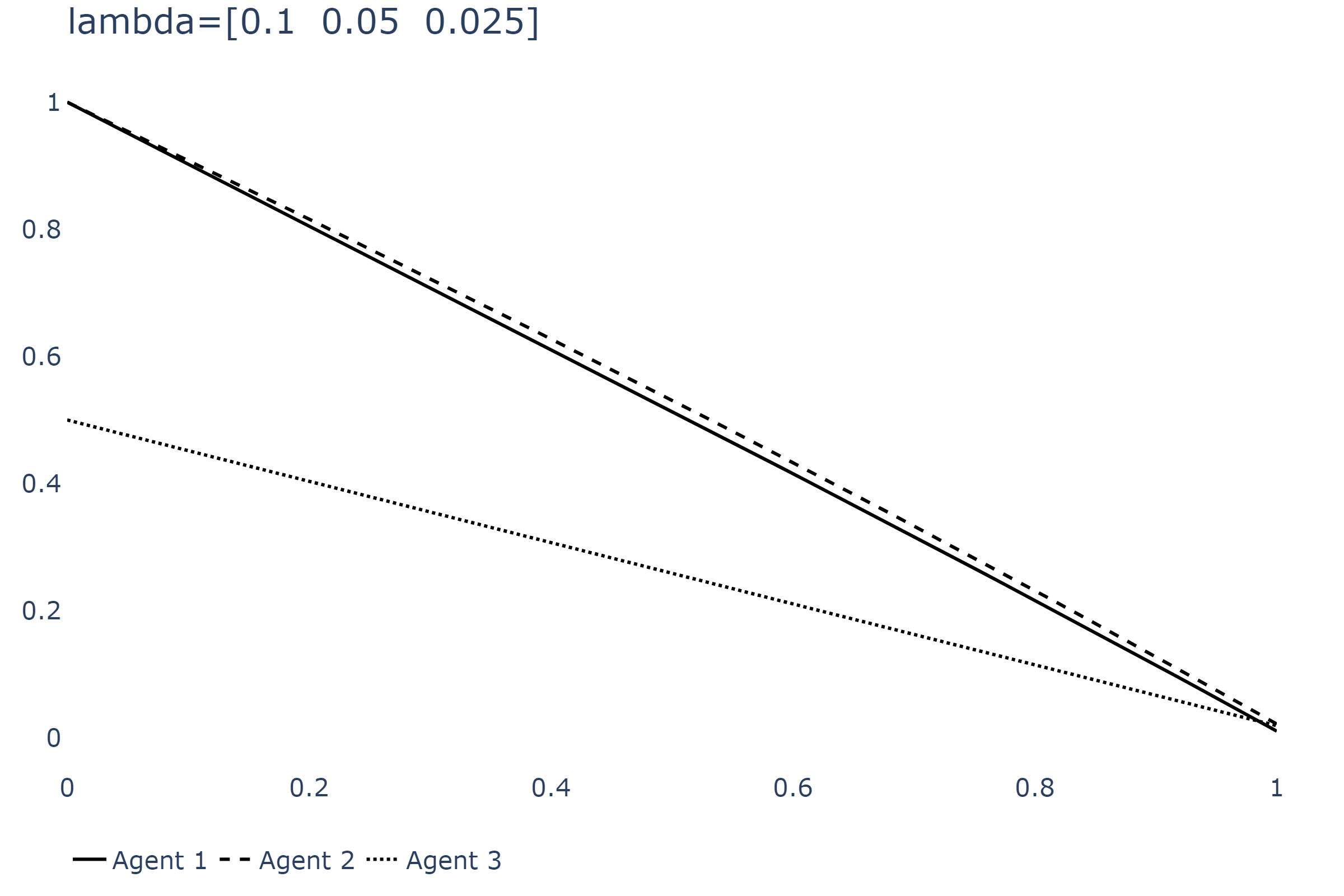}}
            \subfigure{\includegraphics[width=0.32\textwidth]{./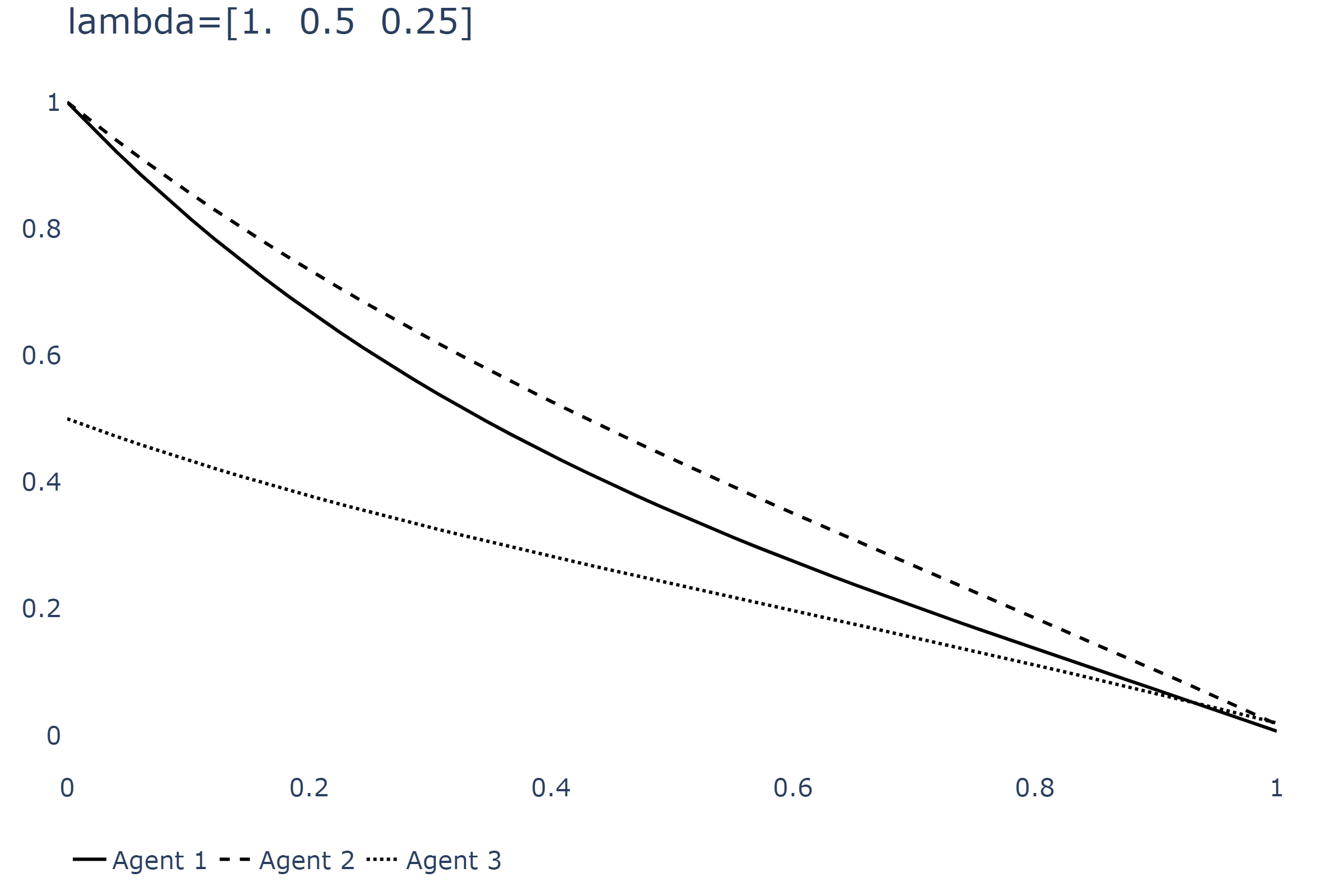}}
            \subfigure{\includegraphics[width=0.32\textwidth]{./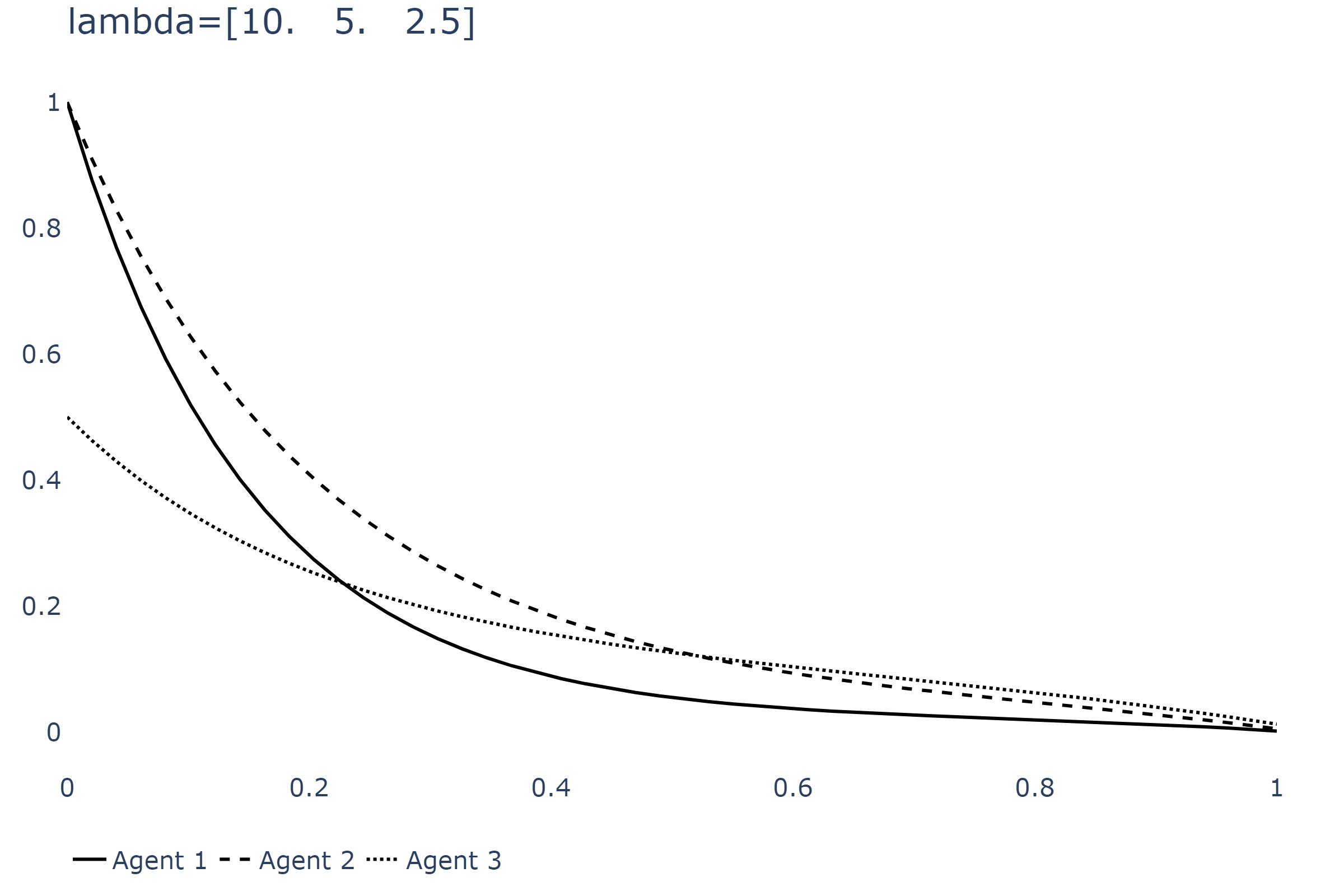}}
            \caption{Plot of the three agents'inventory for different values of risk aversion on continuous trading}
            \label{fig:lambdadep}
        \end{figure}
    \item Dependence on $\lambda$ first agent
        \begin{figure}[H]
            \centering
            \subfigure{\includegraphics[width=0.32\textwidth]{./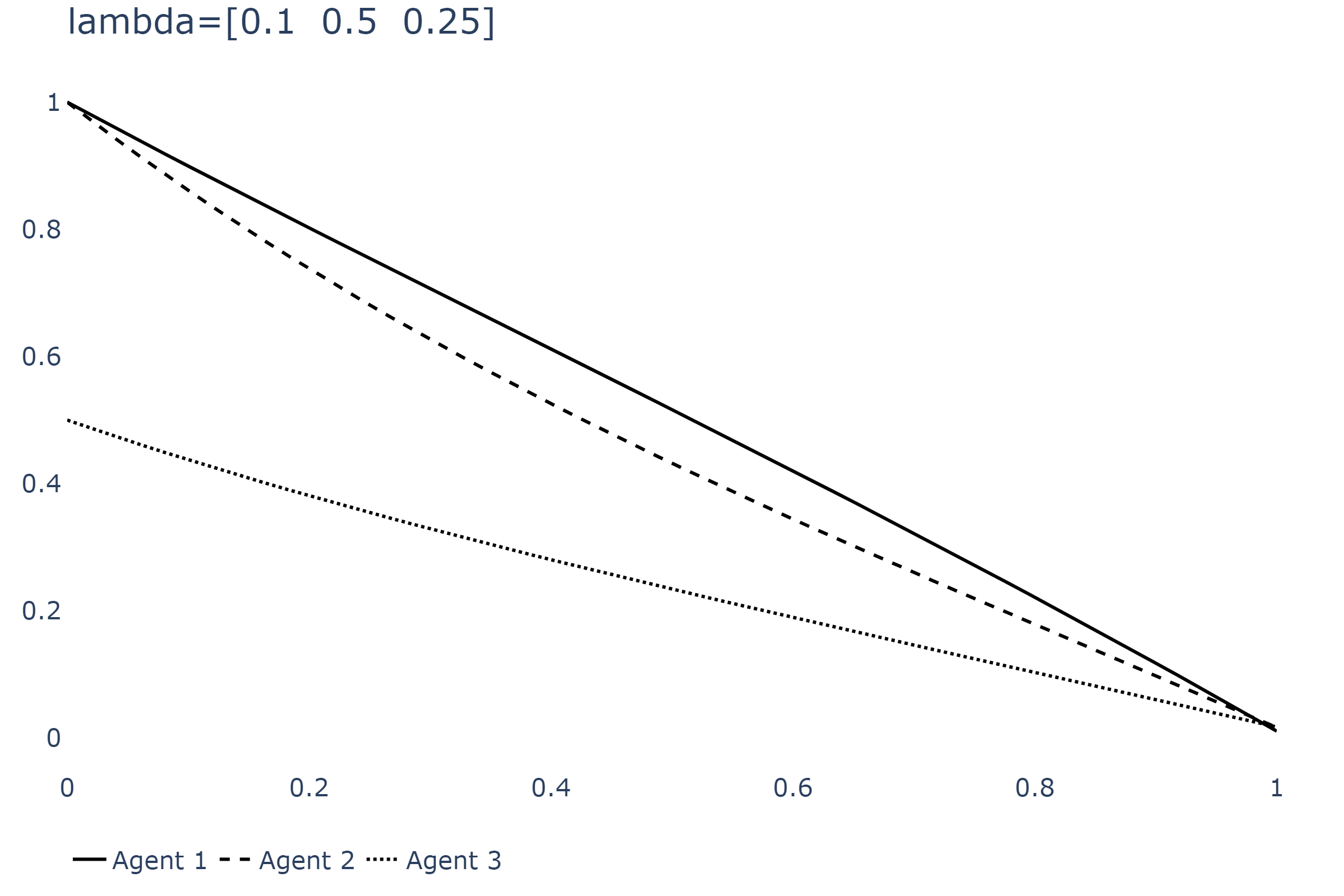}}
            \subfigure{\includegraphics[width=0.32\textwidth]{./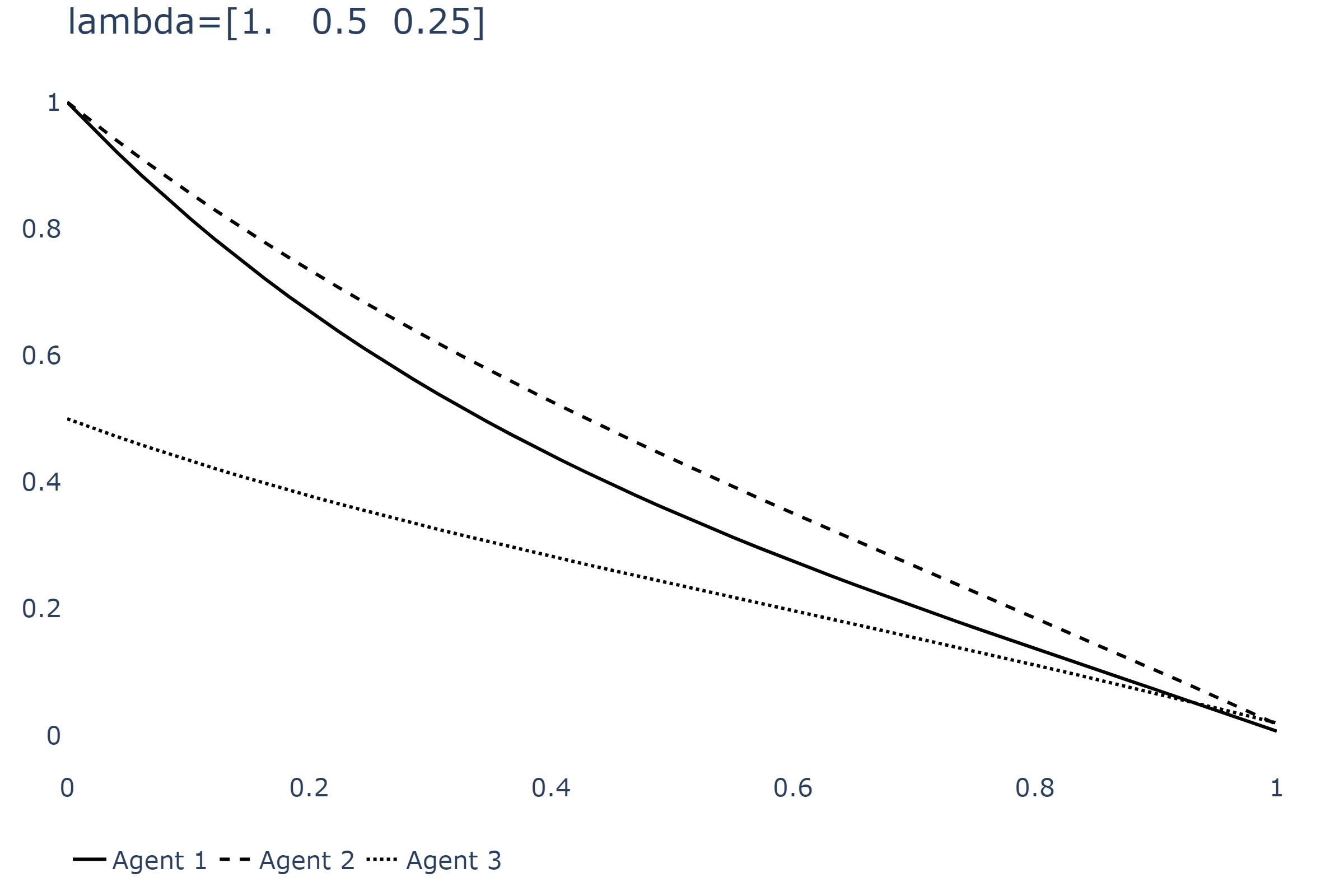}}
            \subfigure{\includegraphics[width=0.32\textwidth]{./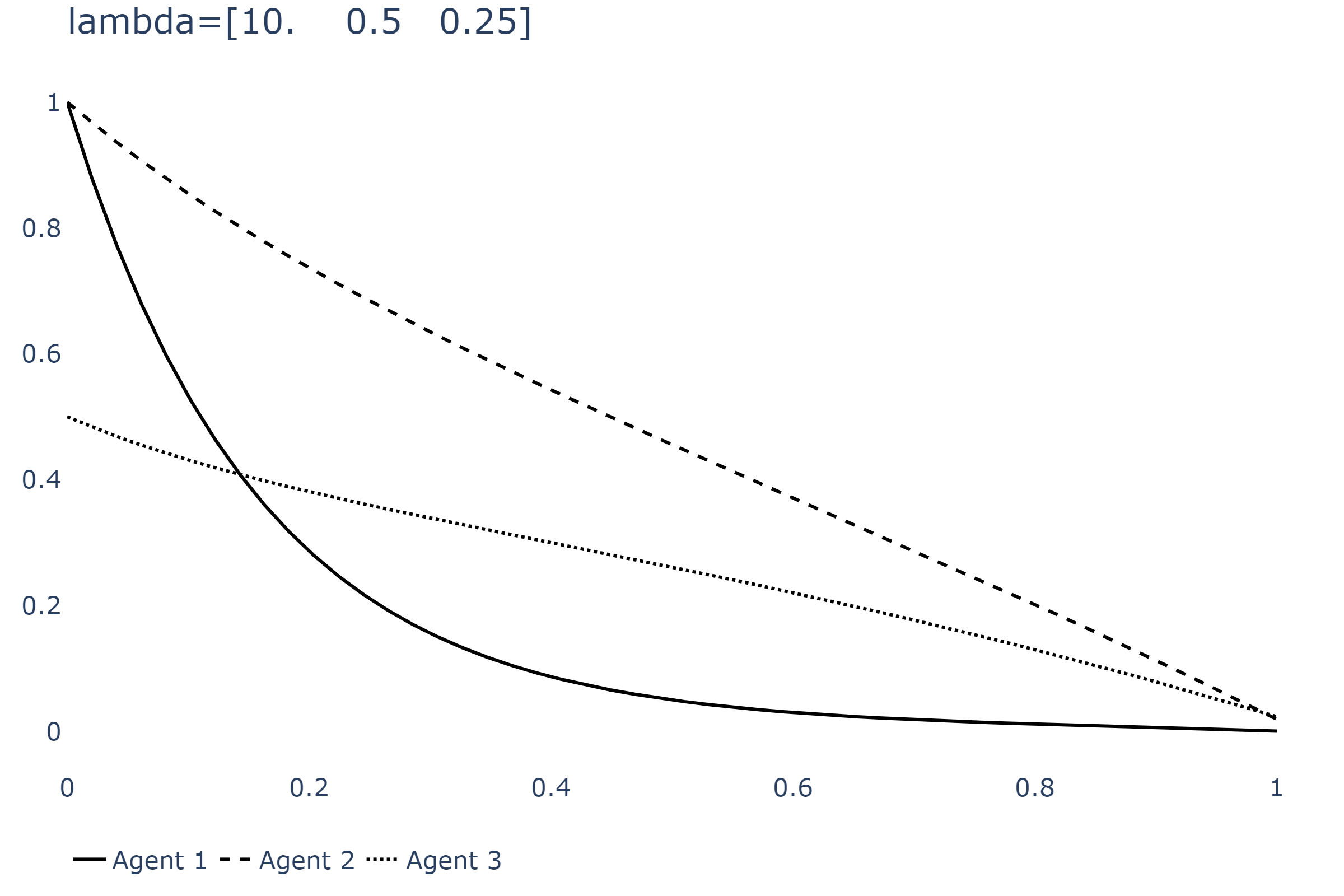}}
            \caption{Plot of the three agents'inventory for different values of risk aversion on continuous trading for one agent}
            \label{fig:lambdadep_unb}
        \end{figure}
    \item Dependence on $Q$ first agent
        \begin{figure}[H]
            \centering
            \subfigure{\includegraphics[width=0.32\textwidth]{./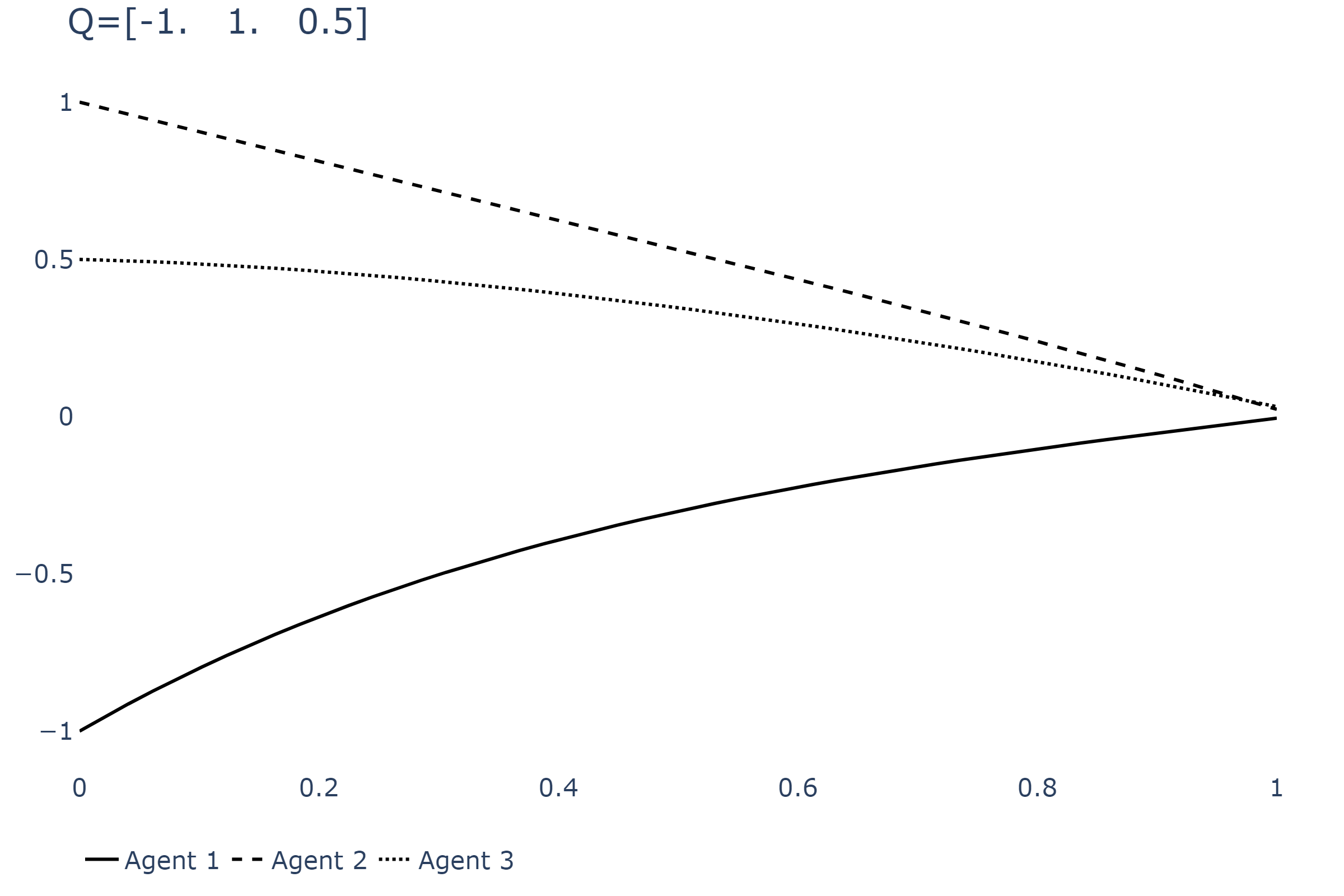}}
            \subfigure{\includegraphics[width=0.32\textwidth]{./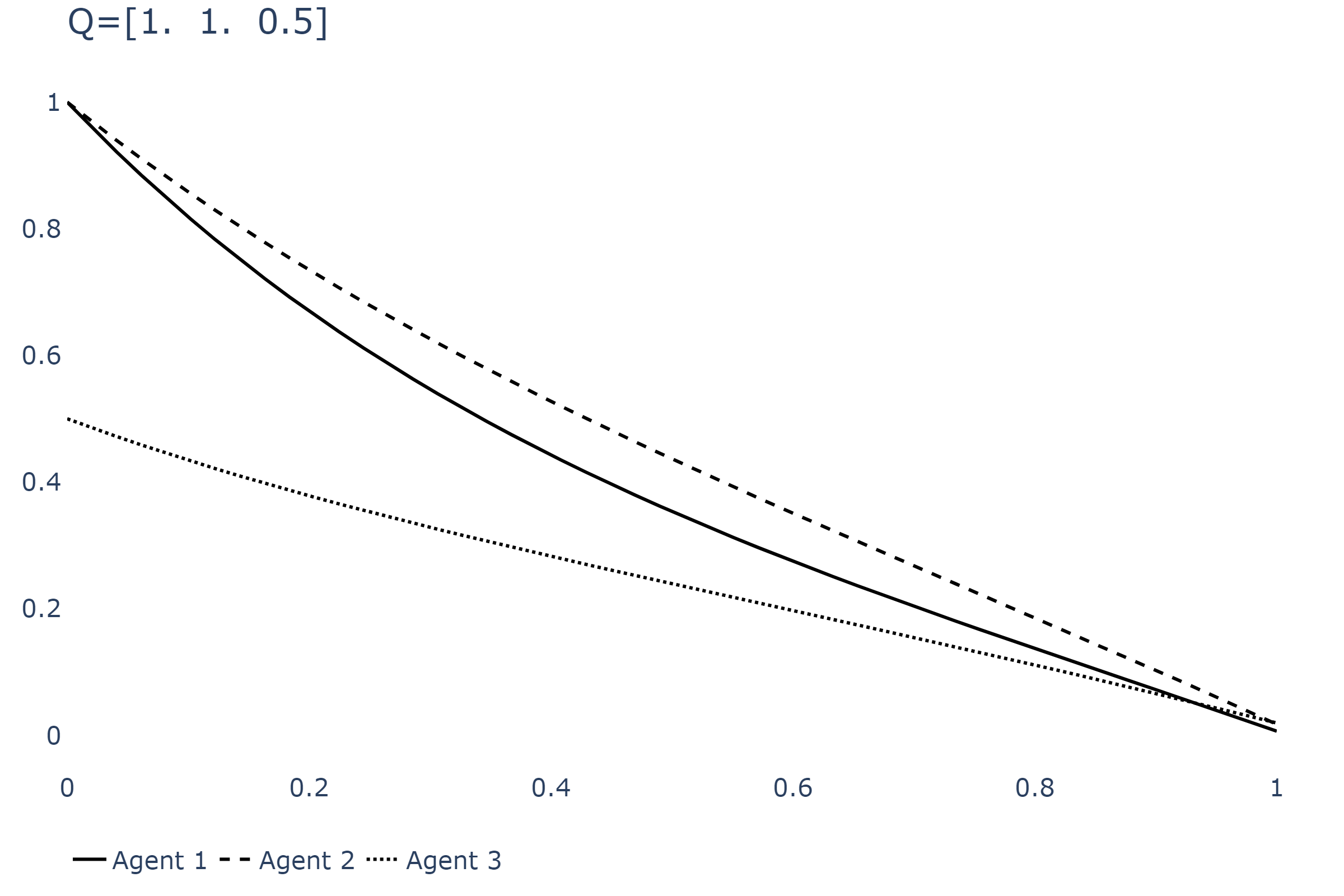}}
            \subfigure{\includegraphics[width=0.32\textwidth]{./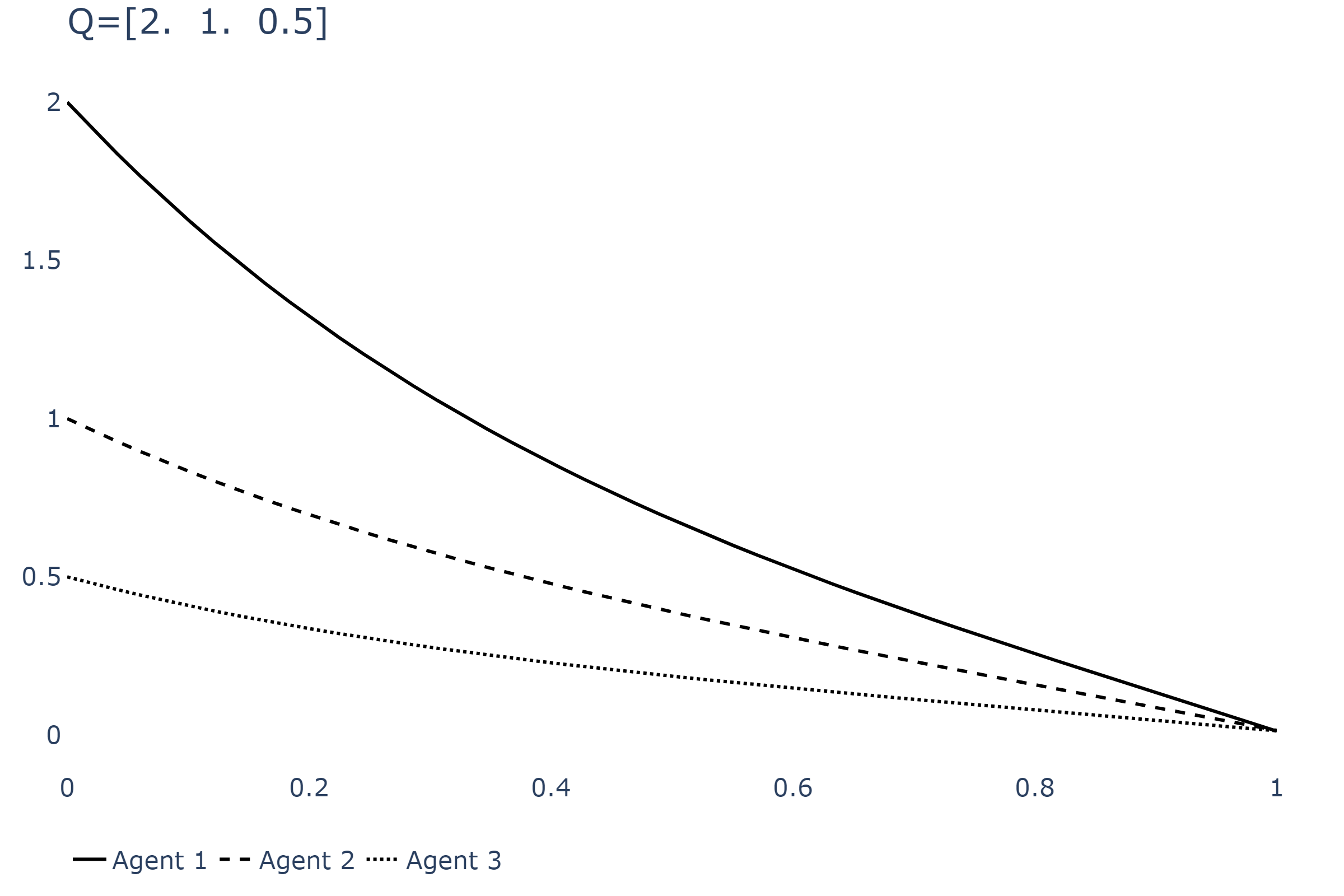}}
            \caption{Plot of the three agents'inventory for different start value of fist agent's inventory.}
            \label{fig:Qdep}
        \end{figure}
    \item Dependence on $Q$ first agent with two arbitrageurs
        \begin{figure}[H]
            \centering
            \subfigure{\includegraphics[width=0.32\textwidth]{./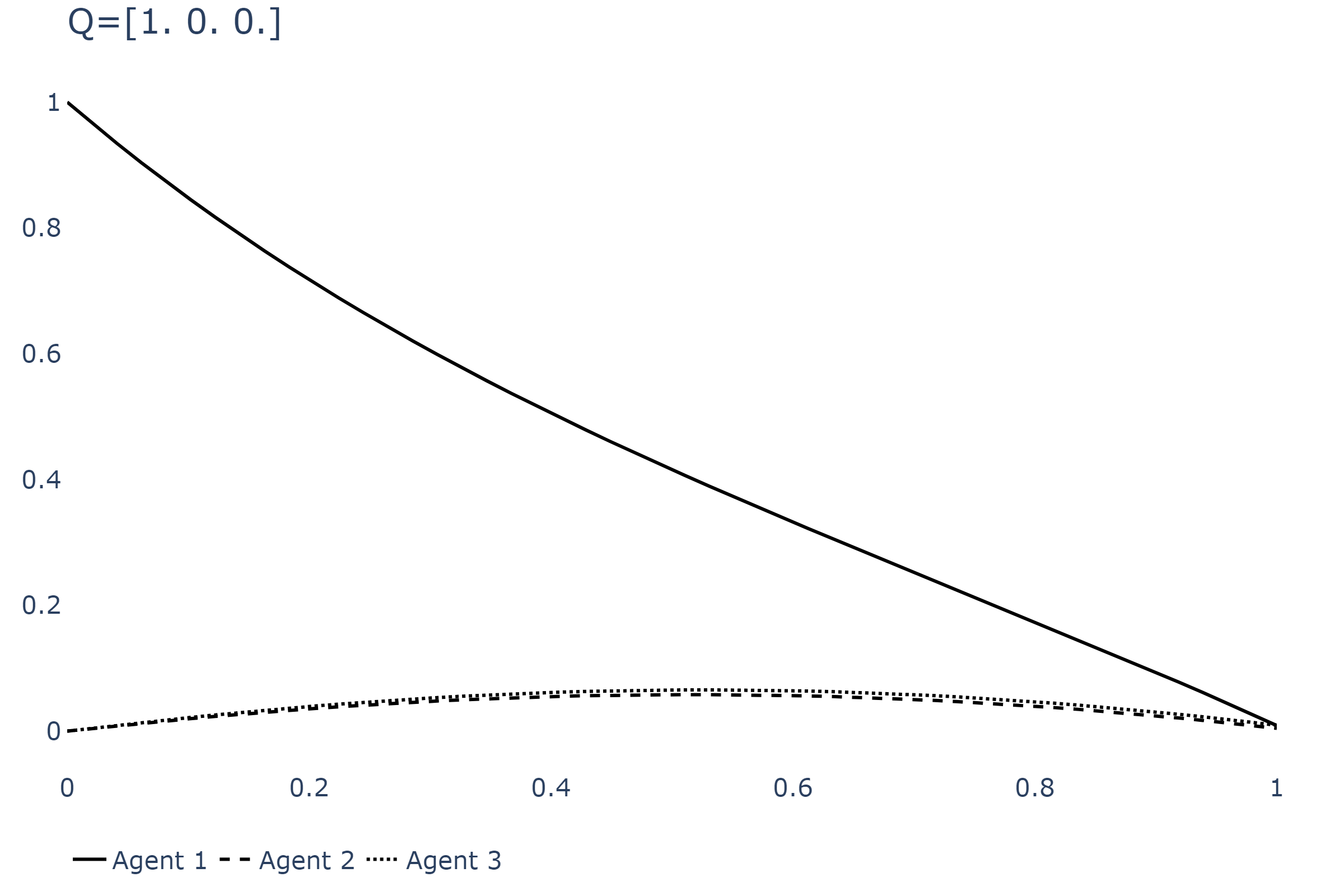}}
            \subfigure{\includegraphics[width=0.32\textwidth]{./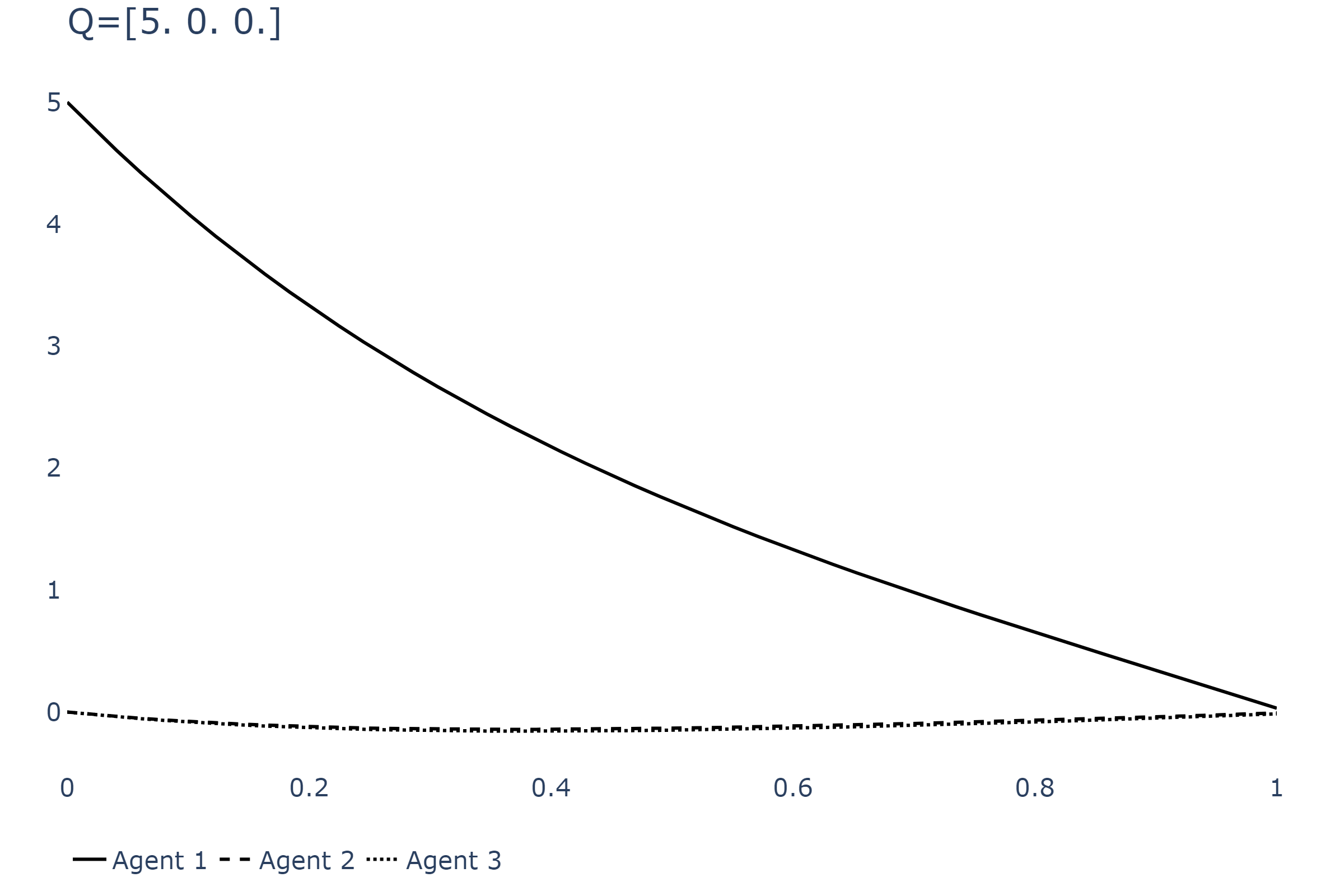}}
            \subfigure{\includegraphics[width=0.32\textwidth]{./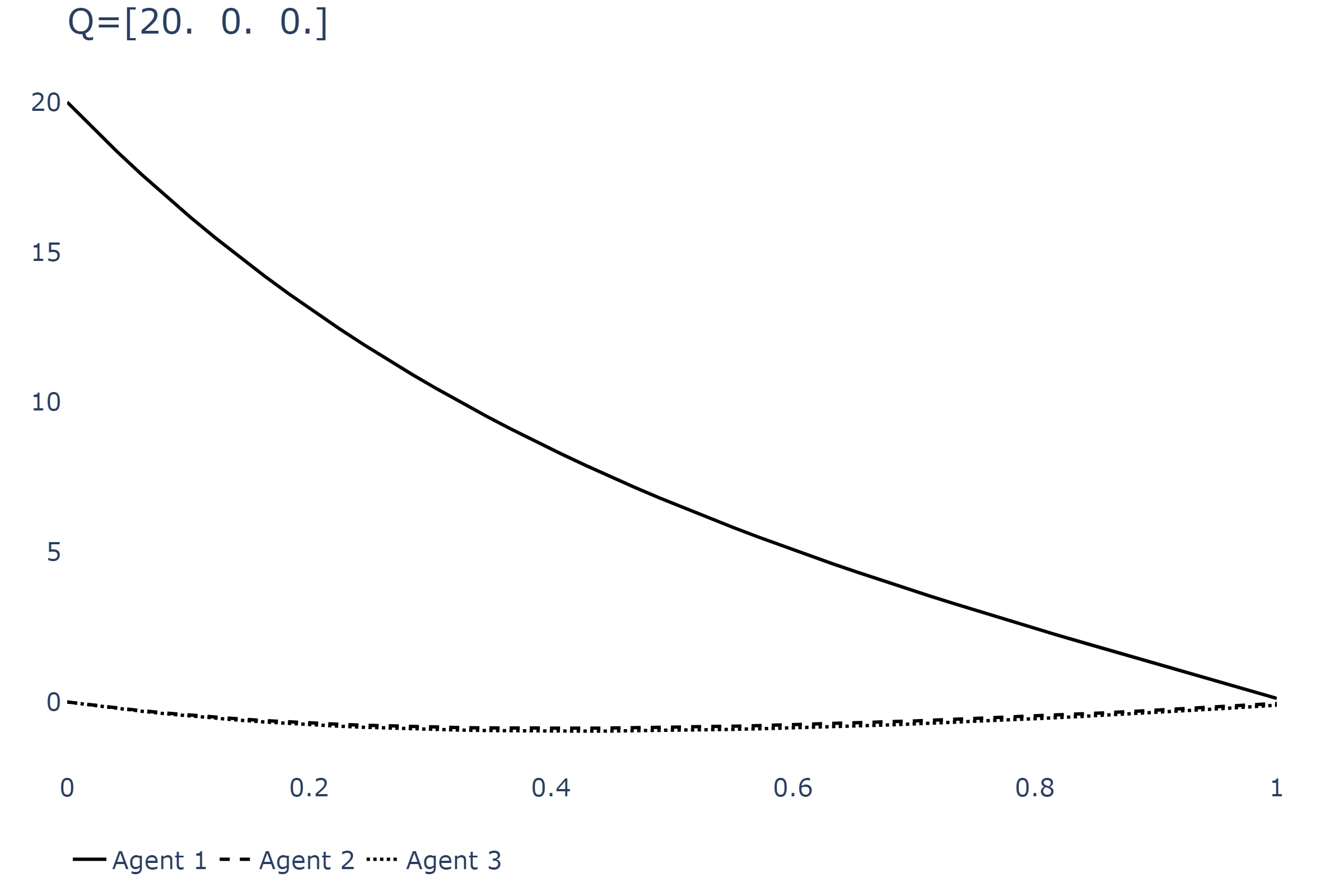}}
            \caption{Plot of the three agents'inventory for different start value of fist agent's inventory with two arbitrageurs.}
            \label{fig:QATdep}
        \end{figure}
        When the initial position of the first agent is small, arbitrageurs tend to first buy and then liquidate to benefit from the future mean return. When the initial position of the first agent is high, arbitrageurs tend to first short sell and then buy to make profit from the price differences. Moreover, The arbitrageurs will not use very aggressive strategies.
\end{itemize}
\subsection{Case for similar agents}
\begin{theorem}\label{SIM}
 Suppose that $\beta_1=\ldots=\beta_n=\beta\geq 0$ and $\lambda_1=\ldots=\lambda_n=\lambda\geq 0$. Then
 \begin{equation}\label{FBSDE1}
\begin{cases}
\displaystyle &\tilde{Q}_t=\sum_{i=1}^{n}Q^i_0+\int_0^t\tilde{q}_sds,\\
\displaystyle &\tilde{q}_t=-\frac{\beta}{b}\tilde{Q}_T+\int_t^T\frac{1}{b}\left(-\lambda\sigma^2_s\tilde{Q}_s+\frac{n\mu_s}{2}+\frac{(n-1)a\tilde{q}_s}{2}\right)ds+\int_t^T\tilde{Z}_sdW_s
\end{cases}
\end{equation}
admits a unique solution $(\tilde{Q},\tilde{q},\tilde{Z})\in\mathcal{S}^2(\mathbb{R})\times\mathcal{S}^2(\mathbb{R})\times\mathcal{H}^2(\mathbb{R}^{d})$ and
\begin{equation}\label{FBSDE2}
\begin{cases}
\displaystyle &Q^{q^i}_t=Q^i_0+\int_0^tq^i_sds, ~~ i=1,\ldots,n\\
\displaystyle &q^i_t=-\frac{\beta}{b}Q^{q^i}_T+\int_t^T\frac{1}{b}\left(-\lambda\sigma^2_sQ^{q^i}_s+\frac{\mu_s}{2}-\frac{aq^i_t}{2}+\frac{a\tilde{q}_s}{2}\right)ds+\int_t^TZ^i_sdW_s, ~~i=1,\ldots,n
\end{cases}
\end{equation}
admits a unique solution $(Q^q,q,Z)\in\mathcal{S}^2(\mathbb{R}^n)\times\mathcal{S}^2(\mathbb{R}^n)\times\mathcal{H}^2(\mathbb{R}^{n\times d})$. In addition, it holds that
\begin{equation*}
\begin{cases}
&\tilde{Q}=\sum_{i=1}^{n}Q^{q^i},\\
&\tilde{q}=\sum_{i=1}^{n}q^i,\\
&\tilde{Z}=\sum_{i=1}^{n}Z^i.
\end{cases}
\end{equation*}
Moreover, $(Q^q,q,Z)$ is the unique solution of FBSDE \eqref{FBSDE}.
\end{theorem}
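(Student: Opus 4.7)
The plan is to solve the two scalar FBSDEs \eqref{FBSDE1} and \eqref{FBSDE2} separately via the Riccati decoupling of Proposition \ref{Ric}, and then match the construction to the multi-agent system \eqref{FBSDE} through a uniqueness argument. The crucial simplification coming from the symmetry assumptions $\beta_i=\beta$, $\lambda_i=\lambda$ is that the problem reduces to two \emph{scalar} Riccati BSDEs, which are globally solvable under the mild assumption $\beta,\lambda\geq 0$ without any spectral-gap restriction. Concretely, for \eqref{FBSDE1} I would take the Riccati ansatz $\tilde q_t=P_t\tilde Q_t+p_t$, which leads to a one-dimensional Riccati BSDE for $P$ of the form $dP_t=(-P_t^2-\frac{(n-1)a}{2b}P_t+\frac{\lambda\sigma_t^2}{b})dt+\Lambda_tdW_t$ with terminal $P_T=-\beta/b\leq 0$. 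The negative coefficient on $P^2$, the non-negative source, and the non-positive terminal together place $P_T$ on the good side of the deterministic stable backward fixed point $P^+\geq 0$, so a phase-plane/comparison argument (upper and lower bounds by deterministic Riccati ODEs obtained by replacing $\sigma_s^2$ by $0$ and $\|\sigma\|_\infty^2$) delivers a deterministic two-sided bound on $P$. Standard quadratic BSDE theory (Kobylanski-type) then yields a unique $(P,\Lambda)\in\mathcal{S}^\infty(\mathbb{R})\times\mathrm{BMO}(\mathbb{R}^d)$, and the affine BSDE for $(p,\eta)$ and the linear forward ODE for $\tilde Q$ follow in the spaces demanded by Proposition \ref{Ric}, yielding the unique $(\tilde Q,\tilde q,\tilde Z)$.

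The same construction then applies to each of the $n$ equations in \eqref{FBSDE2}: with $\tilde q$ now treated as a fixed input, each per-agent equation is itself a scalar linear FBSDE with forcing $\frac{a}{2b}\tilde q_s$, whose Riccati BSDE has the same sign structure and non-positive terminal, hence the same existence/uniqueness conclusion, producing $(Q^{q^i},q^i,Z^i)$. I would then verify the aggregation identities by summing \eqref{FBSDE2} over $i$ and observing that, after regrouping $-\frac{a}{2b}\sum_iq^i_s+\frac{na}{2b}\tilde q_s=\frac{(n-1)a}{2b}\tilde q_s-\frac{a}{2b}\bigl(\sum_iq^i_s-\tilde q_s\bigr)$, the resulting FBSDE for $\bigl(\sum_iQ^{q^i},\sum_iq^i,\sum_iZ^i\bigr)$ minus \eqref{FBSDE1} is a homogeneous scalar linear FBSDE for the differences with zero initial forward value. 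The same Riccati decoupling applied to this homogeneous equation forces the three differences to vanish, giving the identities simultaneously. Substituting $\tilde q_s=\sum_jq^j_s$ back into \eqref{FBSDE2} then rewrites its forcing as $\frac{a}{2b}\sum_{j\neq i}q^j_s$, so \eqref{FBSDE2} becomes \eqref{FBSDE} verbatim. Uniqueness for \eqref{FBSDE} is obtained by running the argument in reverse: any solution of \eqref{FBSDE} aggregates to a solution of \eqref{FBSDE1}, forcing $\sum_iq^i=\tilde q$ by uniqueness there, after which each coordinate satisfies \eqref{FBSDE2} and is unique by the same token.

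The main obstacle is establishing global well-posedness of the two scalar Riccati BSDEs without the dimension-dependent bound $\lambda\sigma^2>\frac{1}{16}a^2b(n-1)$ used in Theorem \ref{thm 4.2}, since here we only assume $\beta,\lambda\geq 0$ and must cover even the degenerate case $\beta=\lambda=0$. The saving feature is precisely the one-dimensional reduction afforded by the symmetry assumption: in a scalar Riccati BSDE with the correct sign on the quadratic term, a non-positive terminal, and a non-negative bounded source, finite-time blow-up is ruled out by elementary phase-plane geometry, which is exactly what allows us to dispense with the restriction required in the $n$-dimensional case.
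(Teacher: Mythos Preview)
Your plan is correct and matches the paper's architecture: solve the aggregated scalar FBSDE \eqref{FBSDE1} and the per-agent scalar FBSDEs \eqref{FBSDE2} via the Riccati decoupling of Proposition~\ref{Ric}, sum \eqref{FBSDE2} and compare with \eqref{FBSDE1} to obtain the aggregation identities, and then run the argument backwards for uniqueness of \eqref{FBSDE}. The only noteworthy difference is in the technical tools: for the scalar Riccati BSDE the paper uses a truncation-plus-Pardoux--Peng argument (deriving explicit two-sided a~priori bounds $-M\le P_t\le -\frac{\beta}{b}e^{-M(T-t)}$ via exponential transformations) rather than invoking Kobylanski and deterministic comparison, and for uniqueness of \eqref{FBSDE1} and \eqref{FBSDE2} the paper does \emph{not} rely on the Riccati representation but instead applies a direct monotonicity/energy estimate---computing $d\bigl(e^{\gamma t}(\delta q_t)(\delta Q_t)\bigr)$ with the exponential weight $\gamma$ chosen to absorb the linear-in-$q$ term, which forces the difference to vanish. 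Your Riccati-based uniqueness works as well, but to make it rigorous you must add the standard step that any solution $(Q,q,Z)$ satisfies $q_t-P_tQ_t-p_t=0$ (this difference solves a linear BSDE with zero terminal value and bounded coefficients), which you leave implicit.
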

\begin{proof}
We will divide the proof into several steps.

\emph{Step 1:} Denoting $M=e^{\frac{(n-1)aT}{2b}}\frac{\beta}{b}+e^{\frac{(n-1)aT}{2b}}\frac{\lambda\|\sigma\|_{\infty}^2}{2b}T
$, it follows from Pardoux and Peng \cite{PP} that BSDE
\begin{equation*}
P_t=-\frac{\beta}{b}+\int_t^T\left(-\frac{\lambda \sigma^2_s}{b}+\frac{(n-1)a}{2b}P_s+\left(\left(-M\right)\vee P_s\wedge M\right)^2\right)ds-\int_t^T\Lambda_sdW_s
\end{equation*}
admits a unique solution $(P,\Lambda)\in\mathcal{S}^2(\mathbb{R})\times\mathcal{H}^2(\mathbb{R}^d)$. Moreover, we have the following a priori estimate for $P$. Since
\begin{equation*}
e^{\frac{(n-1)at}{2b}}P_t=-e^{\frac{(n-1)aT}{2b}}\frac{\beta}{b}+\int_t^T\left(-e^{\frac{(n-1)as}{2b}}\frac{\lambda\sigma^2_s}{2b}+e^{\frac{(n-1)as}{2b}}\left(\left(-M\right)\vee P_s\wedge M\right)^2\right)ds-\int_t^Te^{\frac{(n-1)as}{2b}}\Lambda_sdW_s
\end{equation*}
it holds that
\begin{align*}
e^{\frac{(n-1)at}{2b}}P_t & \geq E\left[-e^{\frac{(n-1)aT}{2b}}\frac{\beta}{b}+\int_t^T\left(-e^{\frac{(n-1)as}{2b}}\frac{\lambda\sigma^2_s}{2b}\right)ds\Bigg|\mathcal{F}_t\right]\\
&\geq -e^{\frac{(n-1)aT}{2b}}\frac{\beta}{b}-e^{\frac{(n-1)aT}{2b}}\frac{\lambda\|\sigma\|_{\infty}^2}{2b}\left(T-t\right)
\end{align*}
Hence
\begin{equation*}
P_t\geq -e^{\frac{(n-1)aT}{2b}}\frac{\beta}{b}-e^{\frac{(n-1)aT}{2b}}\frac{\lambda\|\sigma\|_{\infty}^2}{2b}T
\end{equation*}
Meanwhile by denoting $\xi_t=\left(-M\right)\vee P_t\wedge M$, it holds that
\begin{align*}
e^{\int_0^t\left(\xi_s+\frac{(n-1)a}{2b}\right)ds}P_t&=-e^{\int_0^T\left(\xi_s+\frac{(n-1)a}{2b}\right)ds}\frac{\beta}{b}+\int_t^Te^{\int_0^s\left(\xi_u+\frac{(n-1)a}{2b}\right)du}\left(-\frac{\lambda\sigma^2_s}{b}-\xi_sP_s+\xi^2_s\right)ds\\
&\quad\quad\quad-\int_t^Te^{\int_0^s\left(\xi_u+\frac{(n-1)a}{2b}\right)du}\Lambda_sdW_s\\
&\leq -e^{\int_0^T\left(\xi_s+\frac{(n-1)a}{2b}\right)ds}\frac{\beta}{b}-\int_t^Te^{\int_0^s\left(\xi_u+\frac{(n-1)a}{2b}\right)du}\Lambda_sdW_s.
\end{align*}
Therefore, we have
\begin{align*}
P_t&\leq E\left[-e^{\int_t^T\left(\xi_s+\frac{(n-1)a}{2b}\right)ds}\frac{\beta}{b}\Bigg|\mathcal{F}_t\right]\\
&\leq -\frac{\beta}{b}e^{-M(T-t)}.
\end{align*}
Hence, $(P,\Lambda)\in\mathcal{S}^{\infty}(\mathbb{R})\times\mathcal{H}^2(\mathbb{R}^d)$ and satisfies
\begin{equation*}
P_t=-\frac{\beta}{b}+\int_t^T\left(-\frac{\lambda \sigma^2_s}{b}+\frac{(n-1)a}{2b}P_s+P^2_s\right)ds-\int_t^T\Lambda_sdW_s
\end{equation*}
On the other hand, if
\begin{equation*}
P_t=-\frac{\beta}{b}+\int_t^T\left(-\frac{\lambda \sigma^2_s}{b}+\frac{(n-1)a}{2b}P_s+P^2_s\right)ds-\int_t^T\Lambda_sdW_s
\end{equation*}
admits a solution $(P,\Lambda)\in\mathcal{S}^{\infty}(\mathbb{R})\times\mathcal{H}^2(\mathbb{R}^d)$, we have
\begin{equation*}
e^{\frac{(n-1)at}{2b}}P_t=-e^{\frac{(n-1)aT}{2b}}\frac{\beta}{b}+\int_t^T\left(-e^{\frac{(n-1)as}{2b}}\frac{\lambda\sigma^2_s}{2b}+e^{\frac{(n-1)as}{2b}} P^2_s\right)ds-\int_t^Te^{\frac{(n-1)as}{2b}}\Lambda_sdW_s
\end{equation*}
and
\begin{equation*}
e^{\int_0^t\left(\frac{(n-1)a}{2b}+P_s\right)ds}P_t=-e^{\int_0^T\left(\frac{(n-1)a}{2b}+P_s\right)ds}\frac{\beta}{2b}+\int_t^T\left(-e^{\int_0^s\left(\frac{(n-1)a}{2b}+P_u\right)du}\frac{\lambda\sigma^2_s}{2b}\right)ds-\int_t^Te^{\int_0^s\left(\frac{(n-1)a}{2b}+P_u\right)du}\Lambda_sdW_s
\end{equation*}
Therefore, we have
\begin{align*}
P_t & \geq E\left[-e^{\frac{(n-1)a(T-t)}{2b}}\frac{\beta}{b}+\int_t^T\left(-e^{\frac{(n-1)a(s-t)}{2b}}\frac{\lambda\sigma^2_s}{2b}\right)ds\Bigg|\mathcal{F}_t\right]\\
&\geq -e^{\frac{(n-1)aT}{2b}}\frac{\beta}{b}-e^{\frac{(n-1)aT}{2b}}\frac{\lambda\|\sigma\|_{\infty}^2}{2b}T
\end{align*}
and
\begin{align*}
P_t&\leq E\left[-e^{\int_t^T\left(P_s+\frac{(n-1)a}{2b}\right)ds}\frac{\beta}{b}\Bigg|\mathcal{F}_t\right]\\
&\leq -\frac{\beta}{b}e^{-M(T-t)}.
\end{align*}
Hence, $(P,\Lambda)$ satisfies
\begin{equation*}
P_t=-\frac{\beta}{b}+\int_t^T\left(-\frac{\lambda \sigma^2_s}{b}+\frac{(n-1)a}{2b}P_s+\left(\left(-M\right)\vee P_s\wedge M\right)^2\right)ds-\int_t^T\Lambda_sdW_s
\end{equation*}
Again, it follows from Pardoux and Peng \cite{PP} that
\begin{equation*}
p_t=\int_t^T\left(\left(\frac{(n-1)a}{2b}+P_s\right)p_s+\frac{n\mu_s}{2b}\right)ds-\int_t^T\eta_sdW_s
\end{equation*}
admits a unique solution $(p,\eta)\in\mathcal{S}^2(\mathbb{R})\times\mathcal{H}^2(\mathbb{R}^d)$. Moreover, one could easily check that $p\in\mathcal{S}^{\infty}(\mathbb{R})$, $\eta\in \text{\rm BMO}(\mathbb{R}^d)$ and $\Lambda\in \text{\rm BMO}(\mathbb{R}^d)$. Hence, from standard theory of SDEs, SDE \eqref{SDE} admits a unique strong solution $\tilde{Q}\in\mathcal{S}^{\infty}(\mathbb{R})$. Therefore, according to Proposition \ref{Ric}, FBSDE \eqref{FBSDE1} admits a solution $(\tilde{Q},\tilde{q},\tilde{Z})\in\mathcal{S}^2(\mathbb{R})\times\mathcal{S}^2(\mathbb{R})\times\mathcal{H}^2(\mathbb{R}^{d})$.

We now prove the uniqueness. Suppose that FBSDE \eqref{FBSDE1} admits another solution $(\bar{Q},\bar{q},\bar{Z})\in\mathcal{S}^2(\mathbb{R})\times\mathcal{S}^2(\mathbb{R})\times\mathcal{H}^2(\mathbb{R}^{d})$. Then, we have
 \begin{equation*}
\begin{cases}
\displaystyle &\tilde{Q}_t-\bar{Q}_t=\int_0^t\left(\tilde{q}_s-\bar{q}_s\right)ds,\\
\displaystyle &\tilde{q}_t-\bar{q}_t=-\frac{\beta}{b}\left(\tilde{Q}_T-\bar{Q}_T\right)+\int_t^T\frac{1}{b}\left(-\lambda\sigma^2_s\left(\tilde{Q}_s-\bar{Q}_s\right)+\frac{(n-1)a}{2}\left(\tilde{q}_s-\bar{q}_s\right)\right)ds+\int_t^T\left(\tilde{Z}_s-\bar{Z}_s\right)dW_s
\end{cases}
\end{equation*}
Therefore, it holds that
\begin{align*}
\left(\tilde{q}_t-\bar{q}_t\right)\left(\tilde{Q}_t-\bar{Q}_t\right)&=-\frac{\beta}{b}\left(\tilde{Q}_T-\bar{Q}_T\right)^2+\int_t^T\frac{1}{b}\left(-\lambda\sigma^2_s\left(\tilde{Q}_s-\bar{Q}_s\right)^2+\frac{(n-1)a}{2}\left(\tilde{q}_s-\bar{q}_s\right)\left(\tilde{Q}_t-\bar{Q}_t\right)\right)ds\\
&\quad\quad -\int_t^T\left(\tilde{q}_s-\bar{q}_s\right)^2ds+\int_t^T\left(\tilde{Q}_t-\bar{Q}_t\right)\left(\tilde{Z}_s-\bar{Z}_s\right)dW_s
\end{align*}
Hence, we have
\begin{align*}
e^{\frac{(n-1)at}{2}}\left(\tilde{q}_t-\bar{q}_t\right)\left(\tilde{Q}_t-\bar{Q}_t\right)&=-\frac{\beta}{b}e^{\frac{(n-1)aT}{2}}\left(\tilde{Q}_T-\bar{Q}_T\right)^2-\int_t^T\frac{\lambda\sigma^2_s}{b}e^{\frac{(n-1)as}{2}}\left(\tilde{Q}_s-\bar{Q}_s\right)^2ds\\
&\quad\quad -\int_t^Te^{\frac{(n-1)as}{2}}\left(\tilde{q}_s-\bar{q}_s\right)^2ds+\int_t^Te^{\frac{(n-1)as}{2}}\left(\tilde{Q}_t-\bar{Q}_t\right)\left(\tilde{Z}_s-\bar{Z}_s\right)dW_s
\end{align*}
Thus, it holds that
\begin{align*}
0=E\left[-\frac{\beta}{b}e^{\frac{(n-1)aT}{2}}\left(\tilde{Q}_T-\bar{Q}_T\right)^2-\int_0^T\frac{\lambda\sigma^2_s}{b}e^{\frac{(n-1)as}{2}}\left(\tilde{Q}_s-\bar{Q}_s\right)^2ds -\int_0^Te^{\frac{(n-1)as}{2}}\left(\tilde{q}_s-\bar{q}_s\right)^2ds\right]\leq 0
\end{align*}
which implies uniqueness.

\emph{Step 2:}
 Noting that $\tilde{q}\in\mathcal{S}^{\infty}(\mathbb{R})$, following from a similar technique as in \emph{Step 1}, FBSDE \eqref{FBSDE2} admits a unique solution $(Q^q,q,Z)\in\mathcal{S}^2(\mathbb{R}^n)\times\mathcal{S}^2(\mathbb{R}^n)\times\mathcal{H}^2(\mathbb{R}^{n\times d})$. Moreover, it holds that
 \begin{equation*}
\begin{cases}
\displaystyle &\sum_{i=1}^{n}Q^{q^i}_t=\sum_{i=1}^{n}Q^i_0+\int_0^t\sum_{i=1}^{n}q^i_sds, \\
\displaystyle &\sum_{i=1}^{n}q^i_t=-\frac{\beta}{b}\sum_{i=1}^{n}Q^{q^i}_T+\int_t^T\frac{1}{b}\left(-\lambda\sigma^2_s\sum_{i=1}^{n}Q^{q^i}_s+\frac{n\mu_s}{2}-\sum_{i=1}^{n}\frac{aq^i_t}{2}+\frac{an\tilde{q}_s}{2}\right)ds+\int_t^T\sum_{i=1}^{n}Z^i_sdW_s
\end{cases}
\end{equation*}
Therefore, we have
 \begin{equation*}
\begin{cases}
\displaystyle &\sum_{i=1}^{n}Q^{q^i}_t-\tilde{Q}_t=\int_0^t\left(\sum_{i=1}^{n}q^i_s-\tilde{q}_s\right)ds, \\
\displaystyle &\sum_{i=1}^{n}q^i_t-\tilde{q}_s=-\frac{\beta}{b}\left(\sum_{i=1}^{n}Q^{q^i}_T-\tilde{Q}_T\right)+\int_t^T\frac{1}{b}\left(-\lambda\sigma^2_s\left(\sum_{i=1}^{n}Q^{q^i}_s-\tilde{Q}_s\right)-\frac{a}{2}\left(\sum_{i=1}^{n}q^i_s-\tilde{q}_s\right)\right)ds\\
&\quad\quad\quad+\int_t^T\left(\sum_{i=1}^{n}Z^i_s-\tilde{Z}_s\right)dW_s
\end{cases}
\end{equation*}
It follows from the uniqueness part of \emph{Step 1} that
\begin{equation*}
\begin{cases}
&\tilde{Q}=\sum_{i=1}^{n}Q^{q^i},\\
&\tilde{q}=\sum_{i=1}^{n}q^i,\\
&\tilde{Z}=\sum_{i=1}^{n}Z^i.
\end{cases}
\end{equation*}
\emph{Step 3:} The last statement follows immediately from the uniqueness of solutions of FBSDEs \eqref{FBSDE1} and \eqref{FBSDE2}.
\end{proof}
\subsubsection{Asymptotic property}
If we scale the permanent market impact by the number of agents $n$ or equivalently the permanent market impact is generated by the average of liquidation strategy of all agents, the FBSDE characterizing the Nash equilibrium turns to be the following FBSDE
\begin{equation}\label{FBSDE3}
\begin{cases}
\displaystyle &Q^{q^{i,n}}_t=Q^i_0+\int_0^tq^{i,n}_sds,~~i=1,\ldots,n\\
\displaystyle &q^{i,n}_t=-\frac{\alpha_i-\frac{a}{2n}}{b}Q^{q^{i,n}}_T+\int_t^T\frac{1}{b}\left(-\lambda_i\sigma^2_sQ^{q^{i,n}}_s+\frac{\left(\mu_s+\frac{a}{n}\sum_{j\neq i}q^{j,n}_s\right)}{2}\right)ds+\int_t^TZ^{i,n}_sdW_s,~~i=1,\ldots,n.
\end{cases}
\end{equation}
Then we have the following theorem.
\begin{theorem}
 Suppose that $\alpha_i=\alpha > 0$, $\lambda_i=\lambda\geq 0$ for all $i\in\mathbb{N}$ and $\lim_{n\rightarrow\infty}\frac{1}{n}\sum_{i=1}^nQ^i_0=Q^*_0\in\mathbb{R}$. Then
 \begin{equation}\label{FBSDE4}
\begin{cases}
\displaystyle &Q^{*}_t=Q^*_0+\int_0^tq^*_sds,\\
\displaystyle &q^*_t=-\frac{\alpha}{b}Q^{*}_T+\int_t^T\frac{1}{b}\left(-\lambda\sigma^2_sQ^{*}_s+\frac{\left(\mu_s+aq^*_s\right)}{2}\right)ds+\int_t^TZ^*_sdW_s.
\end{cases}
\end{equation}
admits a unique solution $(Q^*,q^*,Z^*)\in\mathcal{S}^{2}(\mathbb{R})\times\mathcal{S}^{2}(\mathbb{R})\times\mathcal{H}^2(\mathbb{R}^d)$ and
\begin{equation}\label{FBSDE5}
\begin{cases}
\displaystyle &Q^{\tilde{q}^i}_t=Q^i_0+\int_0^t\tilde{q}^i_sds\\
\displaystyle &\tilde{q}^i_t=-\frac{\alpha}{b}Q^{\tilde{q}^i}_T+\int_t^T\frac{1}{b}\left(-\lambda\sigma^2_sQ^{\tilde{q}^i}_s+\frac{\left(\mu_s+aq^*_s\right)}{2}\right)ds+\int_t^T\tilde{Z}^i_sdW_s
\end{cases}
\end{equation}
admits a unique solution $(Q^{\tilde{q}^i},\tilde{q}^i,\tilde{Z}^i)\in\mathcal{S}^{2}(\mathbb{R})\times\mathcal{S}^{2}(\mathbb{R})\times\mathcal{H}^2(\mathbb{R}^d)$ for all $i\in\mathbb{N}$. Let $n$ be large enough such that $\alpha\geq \frac{a}{2n}$ and $(Q^{q^{\cdot,n}},q^{\cdot,n},Z^{\cdot,n})\in\mathcal{S}^{2}(\mathbb{R}^n)\times\mathcal{S}^{2}(\mathbb{R}^n)\times\mathcal{H}^2(\mathbb{R}^{n\times d})$ be the unique solution of FBSDE \eqref{FBSDE3}. Then it holds that
\begin{equation*}
\|\frac{1}{n}\sum_{i=1}^{n}Q^{q^{i,n}}-Q^*\|_{\mathcal{S}^2(\mathbb{R})}+\|\frac{1}{n}\sum_{i=1}^{n}q^{i,n}-q^*\|_{\mathcal{S}^2(\mathbb{R})}+\|\frac{1}{n}\sum_{i=1}^{n}Z^{i,n}-Z^*\|_{\mathcal{H}^2(\mathbb{R}^d)}\rightarrow 0 \text{ as } n\rightarrow \infty
\end{equation*}
and
\begin{equation*}
\|Q^{q^{i,n}}-Q^{\tilde{q}^i}\|_{\mathcal{S}^2(\mathbb{R})}+\|q^{i,n}-\tilde{q}^i\|_{\mathcal{S}^2(\mathbb{R})}+\|Z^{i,n}-\tilde{Z}^i\|_{\mathcal{H}^2(\mathbb{R}^d)}\rightarrow 0 \text{ for all } 1\leq i\leq n, \text{ as } n\rightarrow \infty
\end{equation*}
\end{theorem}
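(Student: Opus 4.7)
The argument breaks into three stages: well-posedness of the three FBSDEs \eqref{FBSDE3}--\eqref{FBSDE5}, convergence of the spatial averages to \eqref{FBSDE4}, and then convergence of the individual components to \eqref{FBSDE5}.

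\textbf{Stage 1 (well-posedness).} FBSDE \eqref{FBSDE4} is a scalar linear FBSDE of exactly the same structure as \eqref{FBSDE1} with $\beta$ replaced by $\alpha$ and $n=1$; it is therefore solvable by the Riccati-type construction carried out in Step 1 of the proof of Theorem \ref{SIM}. Given $q^*$, FBSDE \eqref{FBSDE5} is a linear \emph{decoupled} FBSDE for each $i$, so its unique solution is delivered by Proposition \ref{Ric} (the Riccati BSDE depends only on $\alpha,\lambda,\sigma,b$, and the forcing is driven by $\mu+aq^*$). For FBSDE \eqref{FBSDE3} the effective interaction constant is $a/n$ in place of $a$, so the monotonicity computation in the proof of Theorem \ref{thm 4.2} yields the much weaker requirement $\lambda\sigma_t^2 > \frac{a^2 b(n-1)}{16 n^2}$; this is satisfied for all $n$ large enough (also forcing $\alpha\geq a/(2n)$), and the Peng--Wu criterion \cite{PW} produces the unique solution.

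\textbf{Stage 2 (mean-field limit).} Set $\bar Q^n:=\frac1n\sum_{i=1}^n Q^{q^{i,n}}$, $\bar q^n:=\frac1n\sum_{i=1}^n q^{i,n}$, $\bar Z^n:=\frac1n\sum_{i=1}^n Z^{i,n}$. Summing \eqref{FBSDE3} over $i$, dividing by $n$, and using $\sum_i\sum_{j\neq i}q^{j,n} = n(n-1)\bar q^n$, we obtain the scalar FBSDE
\begin{equation*}
\bar Q^n_t = \tfrac{1}{n}\sum_{i=1}^n Q_0^i + \int_0^t \bar q^n_s\,ds, \qquad \bar q^n_t = -\tfrac{\alpha-a/(2n)}{b}\bar Q^n_T + \int_t^T \tfrac{1}{b}\Big(-\lambda\sigma_s^2\bar Q^n_s + \tfrac{\mu_s}{2} + \tfrac{a(n-1)}{2n}\bar q^n_s\Big)\,ds + \int_t^T \bar Z^n_s\,dW_s.
\end{equation*}
Subtracting \eqref{FBSDE4} and denoting $(\Delta Q,\Delta q,\Delta Z):=(\bar Q^n-Q^*,\bar q^n-q^*,\bar Z^n-Z^*)$ yields a scalar linear FBSDE with the same leading coefficients as \eqref{FBSDE4} and three perturbations: the initial offset $\bar Q^n_0-Q^*_0\to 0$ by hypothesis, and driver/terminal perturbations of order $1/n$ coming from $\alpha-\frac{a}{2n}-\alpha=-\frac{a}{2n}$ and $\frac{a(n-1)}{2n}-\frac{a}{2}=-\frac{a}{2n}$ multiplied by uniformly bounded $\mathcal S^2$-processes. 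Applying It\^o's formula to $e^{\kappa t}\Delta Q_t\Delta q_t$ with the exponential weight $\kappa$ chosen (as in Step 1 of Theorem \ref{SIM}) to absorb the non-monotone drift coefficient $\frac{a(n-1)}{2nb}$, taking expectations and using the uniform $\mathcal S^2$-bound on $(Q^*,q^*)$, yields a stability estimate of the form
\begin{equation*}
\|\Delta Q\|_{\mathcal S^2(\mathbb R)}^2 + \|\Delta q\|_{\mathcal S^2(\mathbb R)}^2 + \|\Delta Z\|_{\mathcal H^2(\mathbb R^d)}^2 \leq C\Big(|\bar Q^n_0-Q^*_0|^2 + \tfrac{1}{n^2}\big(\|Q^*\|_{\mathcal S^2}^2+\|q^*\|_{\mathcal H^2}^2+1\big)\Big),
\end{equation*}
whose right-hand side vanishes as $n\to\infty$. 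This is the first convergence.

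\textbf{Stage 3 (individual convergence).} Fix $i$ and regard $q^{j,n}$, $j\neq i$, as exogenous coefficients in the $i$-th line of \eqref{FBSDE3}. Writing $\frac{1}{n}\sum_{j\neq i}q^{j,n} = \bar q^n - \frac{1}{n}q^{i,n}$, the difference of the $i$-th driver of \eqref{FBSDE3} and the driver of \eqref{FBSDE5} equals $\frac{a}{2b}(\bar q^n-q^*) - \frac{a}{2nb}q^{i,n}$, while the terminal-coefficient discrepancy is $\frac{a}{2nb}Q^{q^{i,n}}_T$. By Stage 2 the first part tends to zero in $\mathcal H^2$, and by the uniform $\mathcal S^2$-bounds on $(Q^{q^{i,n}},q^{i,n})$ available from the a priori estimate underlying Theorem \ref{thm 4.2}, the remaining $O(1/n)$ terms vanish in $\mathcal H^2$ and $L^2$ respectively. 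The standard stability result for scalar decoupled linear FBSDEs (applied via the Riccati representation in Proposition \ref{Ric}) then produces the claimed convergence $\|Q^{q^{i,n}}-Q^{\tilde q^i}\|_{\mathcal S^2}+\|q^{i,n}-\tilde q^i\|_{\mathcal S^2}+\|Z^{i,n}-\tilde Z^i\|_{\mathcal H^2}\to 0$.

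\textbf{Expected obstacle.} The technical heart is Stage 2: the positive drift coefficient $\frac{a(n-1)}{2nb}$ on $\bar q^n$ blocks a direct monotonicity argument, so one must borrow the exponential change-of-variable trick from Step 1 of Theorem \ref{SIM} and, crucially, verify that the resulting stability constant $C$ is bounded uniformly in $n$. Once this uniformity is secured, Stage 3 is routine linear BSDE stability.
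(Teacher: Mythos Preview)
Your overall three-stage architecture (well-posedness, mean-field limit of the averages, individual convergence) is exactly the paper's, and in Stages 2--3 your method is essentially identical to theirs: sum \eqref{FBSDE3} and divide by $n$, subtract the limit system, apply It\^o to the product of the differences with an exponential weight absorbing the $\frac{a(n-1)}{2nb}$ term, take expectations, and use a uniform-in-$n$ $\mathcal S^2$-bound to kill the $O(1/n)$ residuals. The paper is terser (Stage 3 is dismissed with a ``similarly''), but the ideas match.

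There is, however, a genuine gap in Stage 1. You establish well-posedness of \eqref{FBSDE3} and the uniform-in-$n$ a priori bounds by invoking Theorem \ref{thm 4.2} with the scaled impact parameter $a/n$, arriving at the condition $\lambda\sigma_t^2 > \frac{a^2 b(n-1)}{16 n^2}$. But the hypotheses of the present theorem allow $\lambda=0$ (it says $\lambda\ge 0$), and even for $\lambda>0$ the paper never assumes $\sigma_t^2$ is bounded away from zero; in either situation your Peng--Wu condition fails for every $n$, and neither existence for \eqref{FBSDE3} nor the uniform bound you need in Stages 2--3 follows from Theorem \ref{thm 4.2}. The paper avoids this by using Theorem \ref{SIM} instead: since all agents share $\alpha$ and $\lambda$ and $\alpha - \frac{a}{2n}\ge 0$ for $n$ large, Theorem \ref{SIM} applies directly with $\beta=\alpha-\frac{a}{2n}$ and delivers both the unique solution of \eqref{FBSDE3} and, via the explicit Riccati/ODE representation (with $P$ and $p$ bounded uniformly in $n$), the uniform $\mathcal S^2$-bound. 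You should route Stage 1 through Theorem \ref{SIM}, not Theorem \ref{thm 4.2}; once you do, your Stages 2 and 3 go through as written.
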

\begin{proof}
It follows from a similar technique as in Theorem \ref{SIM},  FBSDE \eqref{FBSDE4}
admits a unique solution $(Q^*,q^*,Z^*)\in\mathcal{S}^{2}(\mathbb{R})\times\mathcal{S}^{2}(\mathbb{R})\times\mathcal{H}^2(\mathbb{R}^d)$ and FBSDE \eqref{FBSDE5} admits a unique solution $(Q^{\tilde{q}^i},\tilde{q}^i,\tilde{Z}^i)\in\mathcal{S}^{2}(\mathbb{R})\times\mathcal{S}^{2}(\mathbb{R})\times\mathcal{H}^2(\mathbb{R}^d)$ for all $i\in\mathbb{N}$.

 Let $n$ be large enough such that $\alpha\geq \frac{a}{2n}$, it follows from Theorem \ref{SIM} that FBSDE \eqref{FBSDE3} admits a unique solution $(Q^{q^{\cdot,n}},q^{\cdot,n},Z^{\cdot,n})\in\mathcal{S}^{2}(\mathbb{R}^n)\times\mathcal{S}^{2}(\mathbb{R}^n)\times\mathcal{H}^2(\mathbb{R}^{n\times d})$. Moreover, one could check that there exists a constant $M$ which does not depend on $n$ such that
 \begin{equation*}
\|Q^{q^{i,n}}\|_{\mathcal{S}^2(\mathbb{R})}+\|q^{i,n}\|_{\mathcal{S}^2(\mathbb{R})}+\|Z^{i,n}\|_{\mathcal{H}^2(\mathbb{R}^d)}\leq M,\text{ for all } 1\leq i\leq n.
\end{equation*}
In addition, we have
\begin{equation*}
\begin{cases}
\displaystyle &\frac{1}{n}\sum_{i=1}^{n}Q^{q^{i,n}}_t=\frac{1}{n}\sum_{i=1}^{n}Q^i_0+\int_0^t\frac{1}{n}\sum_{i=1}^{n}q^{i,n}_sds,\\
\displaystyle &\frac{1}{n}\sum_{i=1}^{n}q^{i,n}_t=-\frac{\alpha-\frac{a}{2n}}{b}\frac{1}{n}\sum_{i=1}^{n}Q^{q^{i,n}}_T+\int_t^T\frac{1}{b}\left(-\lambda\sigma^2_s\frac{1}{n}\sum_{i=1}^{n}Q^{q^{i,n}}_s+\frac{\mu_s}{2}+\frac{a(n-1)}{2n}\frac{1}{n}\sum_{i=1}^{n}q^{i,n}_s\right)ds\\
&\quad\quad\quad+\int_t^T\frac{1}{n}\sum_{i=1}^{n}Z^{i,n}_sdW_s.
\end{cases}
\end{equation*}
Therefore, it holds that
\begin{equation*}
\begin{cases}
\displaystyle &\frac{1}{n}\sum_{i=1}^{n}Q^{q^{i,n}}_t-Q^*_t=\frac{1}{n}\sum_{i=1}^{n}Q^i_0-Q^*_0+\int_0^t\left(\frac{1}{n}\sum_{i=1}^{n}q^{i,n}_s-q^*_s\right)ds,\\
\displaystyle &\frac{1}{n}\sum_{i=1}^{n}q^{i,n}_t-q^*_t=-\frac{\alpha-\frac{a}{2n}}{b}\frac{1}{n}\sum_{i=1}^{n}Q^{q^{i,n}}_T+\frac{\alpha}{b}Q^*_T-\int_t^T\frac{\lambda\sigma^2_s}{b}\left(\frac{1}{n}\sum_{i=1}^{n}Q^{q^{i,n}}_s-Q^*_s\right)ds\\
&\quad\quad\quad+\int_t^T\frac{1}{b}\left(\frac{a}{2}\left(\frac{1}{n}\sum_{i=1}^{n}q^{i,n}_s-q^*_s\right)-\frac{a}{2n^2}\sum_{i=1}^{n}q^{i,n}_s\right)ds+\int_t^T\left(\frac{1}{n}\sum_{i=1}^{n}Z^{i,n}_s-Z^*_s\right)dW_s.
\end{cases}
\end{equation*}
Thus, we get
\begin{align*}
&\left(\frac{1}{n}\sum_{i=1}^{n}q^{i,n}_t-q^*_t\right)\left(\frac{1}{n}\sum_{i=1}^{n}Q^{q^{i,n}}_t-Q^*_t\right)\\
&=\left(-\frac{\alpha-\frac{a}{2n}}{b}\frac{1}{n}\sum_{i=1}^{n}Q^{q^{i,n}}_T+\frac{\alpha}{b}Q^*_T\right)\left(\frac{1}{n}\sum_{i=1}^{n}Q^{q^{i,n}}_T-Q^*_T\right)-\int_t^T\frac{\lambda\sigma^2_s}{b}\left(\frac{1}{n}\sum_{i=1}^{n}Q^{q^{i,n}}_s-Q^*_s\right)^2ds\\
&\quad+\int_t^T\frac{1}{b}\left(\frac{a}{2}\left(\frac{1}{n}\sum_{i=1}^{n}q^{i,n}_s-q^*_s\right)-\frac{a}{2n^2}\sum_{i=1}^{n}q^{i,n}_s\right)\left(\frac{1}{n}\sum_{i=1}^{n}Q^{q^{i,n}}_s-Q^*_s\right)ds\\
&\quad-\int_t^T\left(\frac{1}{n}\sum_{i=1}^{n}q^{i,n}_s-q^*_s\right)^2ds+\int_t^T\left(\frac{1}{n}\sum_{i=1}^{n}Q^{q^{i,n}}_s-Q^*_s\right)\left(\frac{1}{n}\sum_{i=1}^{n}Z^{i,n}_s-Z^*_s\right)dW_s.
\end{align*}
Therefore, we obtain
\begin{align*}
&e^{\frac{at}{2b}}\left(\frac{1}{n}\sum_{i=1}^{n}q^{i,n}_t-q^*_t\right)\left(\frac{1}{n}\sum_{i=1}^{n}Q^{q^{i,n}}_t-Q^*_t\right)\\
&=e^{\frac{aT}{2b}}\left(-\frac{\alpha-\frac{a}{2n}}{b}\frac{1}{n}\sum_{i=1}^{n}Q^{q^{i,n}}_T+\frac{\alpha}{b}Q^*_T\right)\left(\frac{1}{n}\sum_{i=1}^{n}Q^{q^{i,n}}_T-Q^*_T\right)-\int_t^Te^{\frac{as}{2b}}\frac{\lambda\sigma^2_s}{b}\left(\frac{1}{n}\sum_{i=1}^{n}Q^{q^{i,n}}_s-Q^*_s\right)^2ds\\
&\quad-\int_t^Te^{\frac{as}{2b}}\left(\frac{1}{n}\sum_{i=1}^{n}q^{i,n}_s-q^*_s\right)^2ds-\int_t^Te^{\frac{as}{2b}}\frac{a}{2bn^2}\sum_{i=1}^{n}q^{i,n}_s\left(\frac{1}{n}\sum_{i=1}^{n}Q^{q^{i,n}}_s-Q^*_s\right)ds\\
&\quad+\int_t^Te^{\frac{as}{2b}}\left(\frac{1}{n}\sum_{i=1}^{n}Q^{q^{i,n}}_s-Q^*_s\right)\left(\frac{1}{n}\sum_{i=1}^{n}Z^{i,n}_s-Z^*_s\right)dW_s.
\end{align*}
Hence, it holds that
\begin{align*}
0&\geq E\left[-e^{\frac{aT}{2b}}\frac{a}{b}\left(\frac{1}{n}\sum_{i=1}^{n}Q^{q^{i,n}}_T-Q^*_T\right)^2-\int_0^Te^{\frac{as}{2b}}\frac{\lambda\sigma^2_s}{b}\left(\frac{1}{n}\sum_{i=1}^{n}Q^{q^{i,n}}_s-Q^*_s\right)^2ds-\int_t^Te^{\frac{as}{2b}}\left(\frac{1}{n}\sum_{i=1}^{n}q^{i,n}_s-q^*_s\right)^2ds\right]\\
&=E\left[\left(\frac{1}{n}\sum_{i=1}^{n}q^{i,n}_0-q^*_0\right)\left(\frac{1}{n}\sum_{i=1}^{n}Q^{q^{i,n}}_0-Q^*_0\right)\right]+E\left[e^{\frac{aT}{2b}}\frac{a}{2bn^2}\sum_{i=1}^{n}Q^{q^{i,n}}_T\left(\frac{1}{n}\sum_{i=1}^{n}Q^{q^{i,n}}_T-Q^*_T\right)\right]\\
&\quad-E\left[\int_0^Te^{\frac{as}{2b}}\frac{a}{2bn^2}\sum_{i=1}^{n}q^{i,n}_s\left(\frac{1}{n}\sum_{i=1}^{n}Q^{q^{i,n}}_s-Q^*_s\right)ds\right]
\end{align*}
which goes to $0$ as $n$ goes to infinity. Therefore, one could deduce that
\begin{equation*}
\|\frac{1}{n}\sum_{i=1}^{n}Q^{q^{i,n}}-Q^*\|_{\mathcal{S}^2(\mathbb{R})}+\|\frac{1}{n}\sum_{i=1}^{n}q^{i,n}-q^*\|_{\mathcal{S}^2(\mathbb{R})}+\|\frac{1}{n}\sum_{i=1}^{n}Z^{i,n}-Z^*\|_{\mathcal{H}^2(\mathbb{R}^d)}\rightarrow 0 \text{ as } n\rightarrow 0
\end{equation*}
Similarly, it holds that
\begin{equation*}
\|Q^{q^{i,n}}-Q^{\tilde{q}_i}\|_{\mathcal{S}^2(\mathbb{R})}+\|q^{i,n}-\tilde{q}_i\|_{\mathcal{S}^2(\mathbb{R})}+\|Z^{i,n}-\tilde{Z}_i\|_{\mathcal{H}^2(\mathbb{R}^d)}\rightarrow 0 \text{ for all } 1\leq i\leq n, \text{ as } n\rightarrow 0
\end{equation*}
\end{proof}

\bibliographystyle{abbrv}
\bibliography{MarketImpact}

\begin{thebibliography}{10}

\bibitem{Almgren}
R.~Almgren.
\newblock Optimal execution with nonlinear impact functions and
  trading-enhanced risk.
\newblock {\em Applied Mathematical Finance}, 10:1--18, 2003.

\bibitem{AlmgrenSIFIN}
R.~Almgren.
\newblock Optimal trading with stochastic liquidity and volatility.
\newblock {\em SIAM J. Financial Math.}, 3:163--181, 2012.

\bibitem{AlmgrenHauptmanLi}
R.~Almgren, C.~Thum, E.~Hauptmann, and H.~Li.
\newblock Direct estimation of equity market impact.
\newblock {\em Risk}, 18(7):58--62, July 2005.

\bibitem{AnkirchnerJeanblancKruse}
S.~Ankirchner, M.~Jeanblanc, and T.~Kruse.
\newblock B{SDE}s with singular terminal condition and a control problem with
  constraints.
\newblock {\em SIAM J. Control Optim.}, 52(2):893--913, 2014.

\bibitem{Ant}
F.~Antonelli.
\newblock Backward-forward stochastic differential equations.
\newblock {\em Ann. Appl. Probab.}, 3(3):777--793, 1993.

\bibitem{Carlinetal}
B.~I. Carlin, M.~S. Lobo, and S.~Viswanathan.
\newblock Episodic liquidity crises: cooperative and predatory trading.
\newblock {\em Journal of Finance}, 65:2235--2274, 2007.

\bibitem{CarmonaYang}
R.~A. Carmona and J.~Yang.
\newblock Predatory trading: a game on volatility and liquidity.
\newblock {\em Preprint}, 2011.

\bibitem{CasgrainJaimungal}
P.~Casgrain and S.~Jaimungal.
\newblock Algorithmic trading with partial information: A mean field game
  approach.
\newblock {\em arXiv:1803.04094}, 2018.

\bibitem{Forsythetal}
P.~Forsyth, J.~Kennedy, T.~S. Tse, and H.~Windclif.
\newblock Optimal trade execution: a mean-quadratic-variation approach.
\newblock {\em Journal of Economic Dynamics and Control}, 36:1971--1991, 2012.

\bibitem{Gatheral}
J.~Gatheral.
\newblock No-dynamic-arbitrage and market impact.
\newblock {\em Quant. Finance}, 10:749--759, 2010.

\bibitem{GatheralSchiedSurvey}
J.~Gatheral and A.~Schied.
\newblock Dynamical models of market impact and algorithms for order execution.
\newblock In J.-P. Fouque and J.~Langsam, editors, {\em Handbook on Systemic
  Risk}, pages 579--602. Cambridge University Press, 2013.

\bibitem{GraeweHorstQui}
P.~Graewe, U.~Horst, and J.~Qiu.
\newblock A non-{M}arkovian liquidation problem and backward {SPDE}s with
  singular terminal conditions.
\newblock {\em SIAM J. Control Optim.}, 53(2):690--711, 2015.

\bibitem{LuoSchied}
X.~Luo and A.~Schied.
\newblock Nash equilibrium for risk-averse investors in a market impact game:
  finite and infinite time horizons, in preparation.

\bibitem{Moallemietal}
C.~C. Moallemi, B.~Park, and B.~Van~Roy.
\newblock Strategic execution in the presence of an uninformed arbitrageur.
\newblock {\em Journal of Financial Markets}, 15(4):361 -- 391, 2012.

\bibitem{PP}
E.~Pardoux and S.~G. Peng.
\newblock Adapted solution of a backward stochastic differential equation.
\newblock {\em Systems Control Lett.}, 14(1):55--61, 1990.

\bibitem{PW}
S.~Peng and Z.~Wu.
\newblock Fully coupled forward-backward stochastic differential equations and
  applications to optimal control.
\newblock {\em SIAM J. Control Optim.}, 37(3):825--843, 1999.

\bibitem{SchiedFuel}
A.~Schied.
\newblock A control problem with fuel constraint and {D}awson--{W}atanabe
  superprocesses.
\newblock {\em Ann. Appl. Probab.}, 23(6):2472--2499, 2013.

\bibitem{SchiedStrehleZhang}
A.~Schied, E.~Strehle, and T.~Zhang.
\newblock High-frequency limit of {N}ash equilibria in a market impact game
  with transient price impact.
\newblock {\em SIAM J. Financial Math.}, 8(1):589--634, 2017.

\bibitem{SchiedZhangCARA}
A.~Schied and T.~Zhang.
\newblock A state-constrained differential game arising in optimal portfolio
  liquidation.
\newblock {\em Math. Finance}, 27(3):779--802, 2017.

\bibitem{SchiedZhangHotPotato}
A.~Schied and T.~Zhang.
\newblock A market impact game under transient price impact.
\newblock {\em Mathematics of Operations Research}, 44(1):102--121, 2019.

\bibitem{Schoeneborn}
T.~Sch\"oneborn.
\newblock Trade execution in illiquid markets. {O}ptimal stochastic control and
  multi-agent equilibria.
\newblock {D}octoral dissertation, TU Berlin, 2008.

\bibitem{SchoenebornTimeInconsistent}
T.~Sch\"{o}neborn.
\newblock Optimal trade execution for time-inconsistent mean-variance criteria
  and risk functions.
\newblock {\em SIAM J. Financial Math.}, 6(1):1044--1067, 2015.

\bibitem{SchoenebornSchied}
T.~Sch\"oneborn and A.~Schied.
\newblock Liquidation in the face of adversity: stealth vs. sunshine trading.
\newblock SSRN Preprint 1007014, 2009.

\bibitem{Tseetal}
S.~T. Tse, P.~A. Forsyth, J.~S. Kennedy, and H.~Windcliff.
\newblock Comparison between the mean-variance optimal and the
  mean-quadratic-variation optimal trading strategies.
\newblock {\em Appl. Math. Finance}, 20(5):415--449, 2013.

\end{thebibliography}

 \end{document}